\title{\bf A Quantitative Study of Pure Parallel Processes\thanks{
This research is supported by the CNRS project {\em ALPACA} (PEPS INS2I 2012) and  A.N.R. project {\em MAGNUM}, ANR 2010-BLAN-0204.}}
\author{Olivier Bodini\thanks{Laboratoire d'Informatique de Paris-Nord,
CNRS UMR 7030 - Institut Galil\'ee - Universit\'e Paris-Nord,
99, avenue Jean-Baptiste Cl\'ement, 93430 Villetaneuse, France. \url{Olivier.Bodini@lipn.univ-paris13.fr}},\ 
Antoine Genitrini\thanks{
Sorbonne Universités, UPMC Univ Paris 06, UMR 7606, LIP6, F-75005, Paris, France.}
\thanks{CNRS, UMR 7606, LIP6, F-75005, Paris, France.\url{{Antoine.Genitrini; Frederic.Peschanski}@lip6.fr}}
\ and 
Fr\'ed\'eric Peschanski$^{\ddag \S }$.\\}
\date{\today} 
\newcommand{\C}{\mathcal{C}}
\newcommand{\G}{\mathcal{G}}
\newcommand{\I}{\mathcal{I}}
\newcommand{\M}{\mathcal{M}}
\newcommand{\T}{\mathcal{T}}
\newcommand{\Z}{\mathcal{Z}}
\newcommand{\IN}{\mathbb{N}}
\newcommand{\IP}{\mathbb{P}}
\newcommand{\BigO}{O}  
\newcommand{\contract}[2]{{#1}\lhd{#2}\,}
\newcommand{\fun}[1]{\mathsf{#1}}
\DeclareMathOperator*{\Seq}{\mbox{\sc Seq}}
\DeclareMathOperator*{\Set}{\mbox{\sc Set}}
\DeclareMathOperator*{\MSet}{\mbox{\sc MSet}}
\DeclareMathOperator*{\Shuf}{\mbox{\sc Sem}}
\theoremstyle{plain}
\newtheorem{theorem}{Theorem}
\newtheorem{definition}[theorem]{Definition}
\newtheorem{proposition}[theorem]{Proposition}
\newtheorem{lemma}[theorem]{Lemma}
\newtheorem{corollary}[theorem]{Corollary}
\newtheorem{fact}[theorem]{Fact}
\newtheorem{observation}[theorem]{Observation}
\newtheorem{invariant}[theorem]{Invariant}
\newtheorem{conjecture}[theorem]{Conjecture}
\begin{document}

\label{firstpage}
\maketitle

\begin{abstract}
  In this paper, we study the interleaving -- or pure merge -- operator
  that most often characterizes parallelism in concurrency theory.  This
  operator is a principal cause of the so-called combinatorial
  explosion that makes very hard - at least from the point of view of
  computational complexity - the analysis of process behaviours e.g. by
  model-checking.  The originality of our approach is to
  study this combinatorial explosion phenomenon on average, relying on
  advanced analytic combinatorics techniques.  We study various measures that
  contribute to a better understanding of the process behaviours
  represented as plane rooted trees: the number of runs
  (corresponding to the width of the trees), the expected total size
  of the trees as well as their overall shape.
  Two practical outcomes
  of our quantitative study are also presented: (1) a
  linear-time algorithm to compute the probability of a concurrent run
  prefix, and (2) an efficient algorithm for uniform random sampling
  of concurrent runs. These provide interesting responses to the combinatorial explosion problem.
\end{abstract}
\noindent \textbf{Keywords:} Pure Merge, Interleaving Semantics, Concurrency Theory, Analytic Combinatorics,
Increasing Trees, Holonomic Functions, Random Generation.

%
%

\section{Introduction}

A significant part of \emph{concurrency theory} is built upon a simple
\emph{interleaving} operator  named the \emph{pure merge} in 
~\cite{Baeten:book90}. The basic underlying idea is that two independent processes running in
parallel, denoted ${P \parallel Q}$, can be faithfully
simulated by the interleaving of their computations. We denote $a.P$
(resp. $b.Q$) a sequential process that first executes an \emph{atomic
  action} $a$ (resp. $b$) and then continue as a process $P$
(resp. $Q$). 

The interleaving law then states\footnote{When one is interested in a finite axiomatization
 of the pure merge operator, a left variant must be introduced, cf.\cite{Baeten:book90} for details.}: 
\[a.P \parallel b.Q = a.(P\parallel b.Q) + b.(a.P \parallel Q),\]

\indent \hfill where $+$ is interpreted as a branching operator. 

\paragraph{}

The pure merge operator is a principal source of
\emph{combinatorial explosion} when analysing concurrent processes,
e.g. by \emph{model checking}~\cite{clarke1999model}.  This issue has
been thoroughly investigated and many approaches have been proposed to
counter the explosion phenomenon, in general based on compression and abstraction/reduction techniques. 
If several decidability and worst-case complexity results are known, to our knowledge the interleaving of process structures
 as computation trees has not been studied extensively from the \emph{average case} point of view.

In analytic combinatorics, the closest related line of work address the \emph{shuffle} of regular languages, generally on disjoint alphabets~\cite{DBLP:journals/dam/FlajoletGT92,MZ08,GDGLOP08,DPRS10}. 
The shuffle on (disjoint) words can be seen as a specific case of the interleaving of processes
 (for processes of the form $(a_1\ldots a_n) \parallel (b_1\ldots b_m)$).
Interestingly, a quite related concept of interleaving of tree structures has been investigated  in algebraic combinatorics~\cite{BFLR11}, and specially in the context of partly commutative algebras~\cite{DHNT11}. 
We see our work has a continuation of this line of works, now focusing on the quantitative and analytic aspects.

Our objective in this work is to
better characterize the \emph{typical shape} of concurrent process
behaviours as computation trees and for this we rely heavily on \emph{analytic
  combinatorics} techniques, indeed on the \emph{symbolic method}.
 One significant outcome of our study is the emergence of a deep connection between concurrent processes and increasing labelling of combinatorial structures. 
 We expect the discovery of similar increasingly labelled structures while we go deeper into concurrency theory. 
We think this work
follows the idea
of investigating \emph{concrete}
problems with advanced analytic tools. In the same spirit, we
emphasize practical applications resulting from such thorough mathematical studies.
In the present case, we develop algorithmic techniques to
analyse probabilistically the process behaviours through \emph{counting}
 and \emph{uniform random generation}.

Our study is organized as follows. In Section~\ref{sec:shuffle} we define the recursive construction of
the interleaved process behaviours from syntactic process trees, and
study the basic structural properties of this construction. In
Section~\ref{sec:widths} we investigate the number of concurrent runs
that satisfy a given process specification. Based on an isomorphism
with \emph{increasing trees} -- that proves particularly fruitful --
we obtain very precise results. We then provide a precise
characterization of what ``exponential growth'' means in the case of
pure parallel processes. We also investigate the case of non-plane trees.
In Section~\ref{sec:shape} we discuss, both theoretically and experimentally,
 the decomposition of semantic trees by level. This culminates with a rather
 precise characterization of the typical shape of process behaviours. 
We then study, in Section~\ref{sec:size}, the expected size of process behaviours.
This typical measure is precisely characterized by a linear recurrence relation
 that we obtain in three distinct ways. While reaching the same conclusion, each of these three proofs 
provide a complementary view of the combinatorial objects under study. Taken together,
 they illustrate the richness and variety of analytic combinatorics techniques
Section~\ref{sec:randgen} is devoted to practical applications resulting from this quantitative study.
First, we describe a simple algorithm to compute the probability of a run prefix in
linear time. As a by-product, we obtain a very efficient way to
calculate the number of \emph{linear extensions} of a \emph{tree-like
  partial order} or \emph{tree-poset}. The second application is an efficient algorithm 
for the uniform random sampling of concurrent runs. These algorithms work directly on the syntax
trees of process without requiring the explicit construction of their behaviour, thus avoiding the combinatorial
 explosion issue.

This paper is an updated and extended version of~\cite{BGP12}. It contains
 new material, especially the study
 of the typical shape of process behaviours in Section~\ref{sec:shape}.  The more complex
 setting of non-plane trees is also discussed. Appendix~\ref{sec:randmset} was added to discuss the weighted 
random sampling in dynamic multisets. The proofs in this extended version are also more detailed.

\section{A tree model for process semantics}\label{sec:shuffle}

\begin{figure}
\begin{center}
\begin{tabular}{ccc}
\begin{tikzpicture}[node distance=24pt]
\node (a) {$a$};
\node[below of=a] (b) {$b$};
\draw (a) -- (b);
\node[below left of=b] (c) {$c$};
\draw (b) -- (c);
\node[below right of=b] (d) {$d$};
\draw (b) -- (d);
\node[below left of=d] (e) {$e$};
\draw (d) -- (e);
\node[below right of=d] (f) {$f$};
\draw (d) -- (f);
\node[below of=f, node distance=60pt] (ff) {};
\end{tikzpicture}
&
\hspace{40pt}
&
\begin{tikzpicture}[node distance=24pt]
\node(a) {$a$};
\node[below of=a] (b) {$b$};
\draw (a) -- (b);
\node[below of=b] (bb) {};
\node[left of=bb, node distance=60pt] (c1) {$c$};
\draw (b) -- (c1);
\node[below of=c1] (d1) {$d$};
\draw (c1) -- (d1);
\node[below left of=d1] (e1) {$e$};
\draw (d1) -- (e1);
\node[below right of=d1] (f1) {$f$};
\draw (d1) -- (f1);
\node[below of=e1] (f1') {$f$};
\draw (e1) -- (f1');
\node[below of=f1] (e1') {$e$};
\draw (f1) -- (e1');
\node[right of=bb, node distance=60pt] (d2) {$d$};
\draw (b) -- (d2);
\node[below of=d2] (e2) {$e$};
\draw (d2) -- (e2);
\node[below left of=e2] (c5) {$c$};
\draw (e2) -- (c5);
\node[below of=c5] (f5') {$f$};
\draw (c5) -- (f5');
\node[below right of=e2] (f5) {$f$};
\draw (e2) -- (f5);
\node[below of=f5] (c5') {$c$};
\draw (f5) -- (c5');
\node[left of=e2, node distance=48pt] (c2) {$c$};
\draw (d2) -- (c2);
\node[below left of=c2] (e3) {$e$};
\draw (c2) -- (e3);
\node[below of=e3] (f3') {$f$};
\draw (e3) -- (f3');
\node[below right of=c2] (f3) {$f$};
\draw (c2) -- (f3);
\node[below of=f3] (e3') {$e$};
\draw (f3) -- (e3');
\node[right of=e2, node distance=48pt] (f2) {$f$};
\draw (d2) -- (f2);
\node[below left of=f2] (c4) {$c$};
\draw (f2) -- (c4);
\node[below of=c4] (e4') {$e$};
\draw (c4) -- (e4');
\node[below right of=f2] (e4) {$e$};
\draw (f2) -- (e4);
\node[below of=e4] (c4') {$c$};
\draw (e4) -- (c4');
\end{tikzpicture}
\end{tabular}
\end{center}
\caption{\label{fig:task-and-shuffle} A syntax tree (left) and the corresponding semantic tree (right)}
\end{figure}
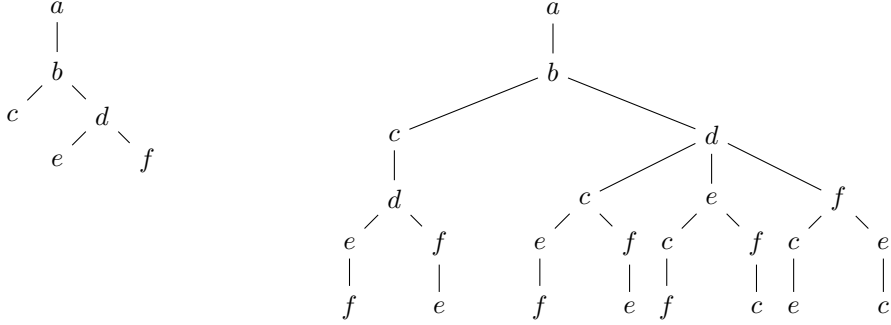

As a starting point, we recast our problematic in combinatorial terms. The idea is to relate the
 \emph{syntactic domain} of process specifications to the \emph{semantic domain} (or model) of
 process behaviours.

\subsection{Syntax trees}

The grammar we adopt for pure parallel
 processes is very simple. The set of process specifications is the least set satisfying:
\begin{itemize}
\item an atomic action, denoted $a,b, \ldots$ is a process,
\item the prefixing $a.P$ of an action $a$ and a process $P$ is a process, and, more precisely, a prefixed process,
\item the composition $P_1 \parallel \ldots \parallel P_n$ of a finite number of actions or prefixed processes is a process.
\end{itemize}

Let us first remark that this grammar takes the $\parallel$ operators as associative operators
and thus two of them cannot appear consecutively.
Moreover, in the rest of the paper we will concentrate on \emph{prefixed processes}. This choice does not deplete
the results thanks to the  bijection between prefixed processes of size $n$ and processes without prefixed action of size $n-1$.

An example of a valid specification is: \[a.b.(c \parallel d.(e \parallel f))\] which can be faithfully
 represented by a tree, namely a \emph{syntax tree}, as depicted on the lefthand side of Figure~\ref{fig:task-and-shuffle}.
Such a tree can be read as a set of \emph{precedence constraints} between atomic actions.
Under these lights the action $a$ at the root must be executed first and then $b$.
There is no relation between $c$ and $d$ -- they are said independent -- and $e,f$ may only happen after~$d$.

In combinatorial terms we adopt the classical specification for \emph{plane rooted trees} to represent the syntactic domain.
The \emph{size} of a tree is its total number of nodes.
Note that we do not keep the names of actions in the process trees since they play no r\^ole for the pure merge operator.

\begin{definition}\label{def:catalan}
The specification $\C = \Z \times \Seq(\C)$ represents the combinatorial class of plane rooted trees. 
\end{definition}

As a basic recall of analytic combinatorics and statement of our conventions, we remind that for such a
combinatorial class $\C$, we define its counting sequence $C_n$
consisting of the number of objects of $\C$ of size $n$. This sequence is linked
to a formal power series $C(z)$ such that $C(z) = \sum_{n\geq 0} C_n z^n$.
We denote by $[z^n]C(z)=C_n$ the $n$-th coefficient of $C(z)$.
Analogous writing conventions will be used for all combinatorial classes in this paper.

We remind the reader that in the case of class $\C$ the sequence $C_n$ corresponds to the
\emph{Catalan numbers} (indeed, shifted by one).  For further reference, we give
the generating function of $\C$ and
the asymptotic approximations of the Catalan numbers
(obtained by the Stirling formula approximation of $n!$ as in e.g.~\cite[P. 267]{Comtet74}):

\begin{fact}\label{fact:Catalan}
$C(z) = \frac12 - \frac{\sqrt{1-4z}}2$ and 
$C_{n} = \frac{4^{n-1}}{\sqrt{\pi n^3}} \left( 1 + \frac{3}{8n} + \frac{25}{128n^2} + \frac{105}{1024n^3} + \frac{1659}{32768n^4} + \BigO\left(\frac1{n^5}\right) \right).$
\end{fact}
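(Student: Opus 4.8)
The plan is to proceed in three stages: the symbolic method for the functional equation, coefficient extraction for the closed form, and Stirling's formula for the asymptotics.

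First I would apply the symbolic method to the specification $\C = \Z \times \Seq(\C)$ of Definition~\ref{def:catalan}. The atom $\Z$ contributes a factor $z$, and since $C(0)=0$ (the class contains no object of size $0$) the sequence construction $\Seq(\C)$ translates to $(1-C(z))^{-1}$. This yields the functional equation
\[
C(z) = \frac{z}{1-C(z)}, \qquad\text{i.e.}\qquad C(z)^2 - C(z) + z = 0.
\]
Solving this quadratic for $C(z)$ gives $C(z) = \tfrac12\bigl(1 \pm \sqrt{1-4z}\bigr)$, and I would select the branch carrying the minus sign, which is the only one compatible with $C(0)=0$. This establishes the closed form $C(z) = \frac12 - \frac{\sqrt{1-4z}}2$.

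Next, to recover the coefficients I would expand $\sqrt{1-4z}=(1-4z)^{1/2}$ with the generalized binomial theorem, so that $[z^n](1-4z)^{1/2} = \binom{1/2}{n}(-4)^n$. A short simplification of $\binom{1/2}{n}$, pulling out the half-integer factorials, gives for $n \geq 1$ the shifted Catalan number $C_n = \frac1n \binom{2n-2}{n-1}$, confirming the identification of the counting sequence with the Catalan numbers.

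Finally, for the asymptotic expansion I would write $\binom{2n-2}{n-1} = \frac{(2n-2)!}{((n-1)!)^2}$ and substitute Stirling's series for the factorials, as in~\cite[P.\,267]{Comtet74}. This produces the leading term $4^{n-1}/\sqrt{\pi n^3}$ together with the polynomial correction in $1/n$. The main obstacle here is not conceptual but purely a matter of bookkeeping: the correction coefficients $\tfrac38$, $\tfrac{25}{128}$, $\tfrac{105}{1024}$, $\tfrac{1659}{32768}$ must be tracked carefully through the Stirling series to the required order, since a slip at any stage propagates to every subsequent one. An alternative route that avoids the factorial manipulation is singularity analysis: the dominant singularity of $C(z)$ is the square-root branch point at $z = 1/4$, and transferring the local expansion of $-\tfrac12(1-4z)^{1/2}$ at that point via the standard transfer theorem yields the same asymptotic series directly.
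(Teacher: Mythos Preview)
Your proposal is correct and follows the standard derivation. The paper itself does not give a proof of this fact: it merely states the generating function and asymptotic expansion as known, pointing to~\cite[P.\,267]{Comtet74} for the Stirling approximation of $n!$. Your three-stage outline (symbolic method $\to$ quadratic $\to$ Stirling series) is exactly the classical route that underlies that citation, so there is nothing to contrast.
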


\subsection{Semantic trees}

The semantic domain we study is much less classical than syntax trees, although it is \emph{still} composed of
plane rooted trees. An example of a \emph{semantic tree} is depicted on the righthand side of Figure~\ref{fig:task-and-shuffle}.
This tree represents all the possible executions -- or \emph{runs} -- that may be observed for the process specified
 on the left. More precisely each branch of the semantic tree, e.g. $\langle a,b,c,d,e,f\rangle$, is a concurrent run (or admissible computation)
 of the process, and all the branches share their common prefix. In the literature such structures are also called \emph{computation trees}~\cite{DBLP:journals/toplas/ClarkeES86}.


For a tree $T$ and one of its node $v$, the \emph{sub-tree} roooted in $v$ from $T$
is the tree of all the descendants of $v$ in $T$.
To describe the recursive construction of the semantic trees, we use an elementary operation of \emph{child contraction}.

\begin{definition}\label{def:child-contraction}
Let $T$ be a plane rooted tree and $v_1,\ldots,v_r$ be the root-labels of the children of the root.
For $i\in\{1,\dots,r\}$, the $i$\textbf{-contraction} of $T$
is the plane tree with root $v_i$ and whose children are, from left to right,
$T(v_1),\ldots,T(v_{i-1}),T(v_{i_1}),\ldots,T(v_{i_m}), T(v_{i+1}),\ldots,T(v_r)$ 
(where $T(\nu)$ denotes the sub-tree whose root is $\nu$ and
$v_{i_1},\ldots,v_{i_m}$ are the root-labels of the children of~$T(v_i)$).
We denote by $\contract{T}{\,i}$ the $i$-contraction of $T$.
\end{definition}

\begin{tabular}{llll}
For example, if $T$ is &
\begin{tikzpicture}[baseline]
\node (a) {$a$};
\node[below of=a] (c) {$c$};
\draw (a) -- (c);
\node[left of=c, node distance=40pt] (b) {$b$};
\draw (a) -- (b);
\node[right of=c, node distance=40pt] (d) {$d$};
\draw (a) -- (d);
\node[below left of=c] (e) {$e$};
\draw (c) -- (e);
\node[below right of=c] (f) {$f$};
\draw (c) -- (f);
\node[below of=b] (b') {};
\draw[dotted] (b) -- (b');
\node[below of=d] (d') {};
\draw[dotted] (d) -- (d');
\node[below of=e] (e') {};
\draw[dotted] (e) -- (e');
\node[below of=f] (f') {};
\draw[dotted] (f) -- (f');
\end{tikzpicture} &
then $\contract{T}{2}$ is
\begin{tikzpicture}[baseline]
\node (c) {$c$};
\node[below left of=c] (e) {$e$};
\draw (c) -- (e);
\node[below right of=c] (f) {$f$};
\draw (c) -- (f);
\node[left of=e] (b) {$b$};
\draw (c) -- (b);
\node[right of=f] (d) {$d$};
\draw (c) -- (d);
\node[below of=b] (b') {};
\draw[dotted] (b) -- (b');
\node[below of=d] (d') {};
\draw[dotted] (d) -- (d');
\node[below of=e] (e') {};
\draw[dotted] (e) -- (e');
\node[below of=f] (f') {};
\draw[dotted] (f) -- (f');
\end{tikzpicture} &
\end{tabular}

Note that the root (here $a$) is replaced by the label of the root of the $i$-th child (here $c$).
Now, the interleaving operation follows a straightforward recursive scheme.

\begin{definition}\label{def:shuffle}
Let $T$ be a process tree, then its \textbf{semantic tree} $\Shuf(T)$ is defined inductively as follows:
\begin{itemize}
\item if $T$ is a leaf, then $\Shuf(T)$ is $T$,
\item if $T$ has root $t$ and $r$ children ($r\in \IN\setminus\{0\}$), then
 $\Shuf(T)$ is the plane tree with root $t$ and children, from left to right,
 $\Shuf(\contract{T}{1}),\ldots, \Shuf(\contract{T}{r})$.
\end{itemize}
\end{definition}

\begin{figure}
\begin{center}
\input{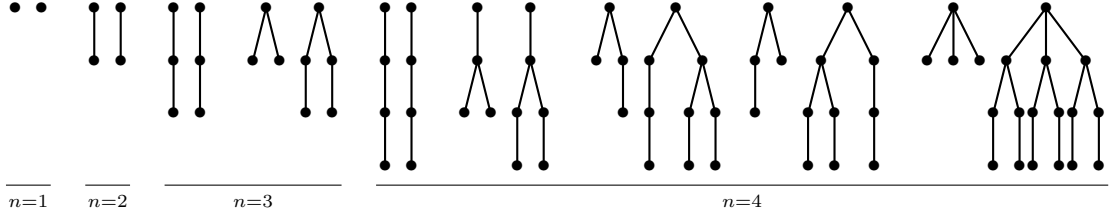}
\end{center}
\caption{\label{fig:enum} Enumerating behaviours (semantic trees) from process specifications (syntax trees).}
\end{figure}

The mapping between the syntax trees on the one side, and the semantic trees on the other side is trivially one-to-one.
Figure~\ref{fig:enum} depicts the enumeration of the first syntax trees  (by size $n$) together with the corresponding semantic tree.

We note that the semantic trees are
\emph{balanced} (i.e. all their leaves belong to the same level),
and even more importantly and that their height is $n-1$,
where $n$ is the size of the associated process tree.  This
is obvious since each branch of a semantic tree corresponds to a
complete traversal of the syntax tree. Thus there are as many semantic
trees of height $n-1$ as there are trees of size $n$ (as counted by $C_n$ above).

A further basic observation is that the contraction operator (cf. Definition~\ref{def:child-contraction})
ensures that the number of nodes at a given level of a semantic tree is lower than the number of
nodes at the next level. Thus, the width of the semantic tree corresponds to its number of leaves.



The following observation bears witness to the high level of redundancy exhibited by semantic trees.

\begin{proposition}\label{prop:branch}
The knowledge of a single branch of a semantic tree is sufficient to recover the corresponding syntax tree.
\end{proposition}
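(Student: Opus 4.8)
The plan is to exhibit an explicit reconstruction procedure that, reading a single branch from the root down to its leaf, rebuilds the syntax tree by \emph{inverting} the contraction process of Definition~\ref{def:shuffle}. Write the branch as a sequence of nodes $u_0,u_1,\ldots,u_{n-1}$ (from the root to the leaf), let $\delta_k$ be the arity of $u_k$, i.e. its number of children in the semantic tree, and for $k\ge 1$ let $i_k\in\{1,\ldots,\delta_{k-1}\}$ be the position of $u_k$ among the children of $u_{k-1}$. These two data---the arities along the branch and the positions of the successive descents---are exactly what it means to know the branch as a path embedded in a plane tree, and they are all I would use. The action labels themselves are irrelevant, in accordance with the convention that names play no r\^ole.

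The key observation, which I would isolate first, is that the arities encode the degree sequence of the \emph{syntax} tree. Indeed, the node $u_k$ corresponds to a process tree $T_k$ obtained from the syntax tree by $k$ successive contractions, and $\delta_k$ is precisely the number of children of the root of $T_k$. By Definition~\ref{def:child-contraction}, contracting the $i_k$-th child removes that child from the list of root-children and splices in, at the same place, its own $d_k$ children, where $d_k$ is the number of children that the $k$-th executed node has in the syntax tree. Hence $\delta_k=\delta_{k-1}-1+d_k$, so that
\[ d_k = \delta_k - \delta_{k-1} + 1 \quad (k\ge 1), \qquad d_0 = \delta_0, \]
and every $d_k$ is recovered from the arities alone.

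I would then reconstruct the syntax tree by simulating the frontier of available actions as an ordered list of \emph{slots}. Starting from a root endowed with $d_0$ ordered child-slots, at step $k$ the node $u_k$ is declared to occupy the $i_k$-th currently open slot---this fixes its parent and its rank among its siblings---after which that slot is replaced \emph{in place} by $d_k$ fresh ordered child-slots. Because this replacement mirrors verbatim the in-place splicing of Definition~\ref{def:child-contraction}, the procedure is deterministic, and running it to the end returns a single plane tree; since the map from syntax trees to semantic trees is one-to-one, this tree is the sought syntax tree.

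The main obstacle is not the arithmetic but ensuring that the reconstruction is genuinely faithful to the \emph{plane} structure. The arities (equivalently the $d_k$) are by themselves insufficient: one can exhibit two mirror-image syntax trees sharing a branch with identical label sequence and identical arity sequence, so the positions $i_k$ are indispensable and must be threaded correctly through the splicing so that left-to-right sibling order is preserved at every step. Verifying that my slot-list update coincides step-by-step with the contraction of Definition~\ref{def:child-contraction}---and in particular that the $i_k$-th open slot always refers to the child that the definition would contract---is the crux; once this is checked, determinism of the procedure yields both existence and uniqueness of the recovered tree.
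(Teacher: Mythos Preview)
Your approach is correct and in fact carries the argument further than the paper does. Both you and the paper hinge on the same arithmetic observation, namely that the successive arities $\delta_k$ along a branch determine the syntax-tree degrees via $d_k=\delta_k-\delta_{k-1}+1$ (this is the paper's $v_p=u_p-u_{p-1}+1$). From there, however, the paper only treats the \emph{leftmost} branch explicitly: in that case every position $i_k$ equals~$1$, the sequence $(d_k)$ is exactly the list of degrees in prefix order, and one recovers the tree as a {\L}ukasiewicz word. For arbitrary branches the paper merely asserts that ``each branch corresponds to a degree sequence visiting the nodes of the initial tree by a specific traversal'' and leaves the actual reconstruction implicit. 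Your slot-list simulation fills precisely this gap: it provides a uniform inverse valid for every branch, by threading the positions $i_k$ through an in-place splicing that mirrors Definition~\ref{def:child-contraction} step by step. Your side remark that arities alone are insufficient---two mirror trees can share an arity sequence along suitably chosen branches---is a point the paper does not make explicit and is worth keeping.

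One small quibble: the final clause ``since the map from syntax trees to semantic trees is one-to-one, this tree is the sought syntax tree'' is not the right justification. Injectivity of $T\mapsto\Shuf(T)$ does not by itself force your output to equal $T$; what does the job is the invariant you already flag as the crux, namely that the slot list coincides at every step with the ordered list of root-children of the current contracted tree, so that each $u_k$ is attached to its true parent at its true rank. Once that invariant is stated and checked, the conclusion is immediate and the appeal to injectivity can be dropped.
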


To go slightly further into the details, we may indeed exhibit a familiy of inverse functions from
singled-out semantic tree branches to syntax trees. 
These inverse functions exploit the concept of a \emph{degree-sequence}, defined as follows.
\begin{definition}
A \textbf{degree-sequence} $(u_p)_{p\in\{1,\dots, n\}}$ is a sequence of non-negative integers
of length $n$ that satisfies:
\[u_1 > 0;\hspace{0.8cm}
\forall p>1,\ u_p \geq u_{p-1}-1;
\hspace{0.8cm} 
u_n \text{ is the single term equal to 0.}
\]
\end{definition}

The degrees of the nodes from the root to a leaf in any branch of a semantic tree is a degree-sequence.

\begin{proposition}
Let $(u_p)$ be a degree-sequence of length $n$ that is linked to the leftmost branch of a semantic
tree $S$. Let us define the new sequence $(v_p)$ such that:
\[v_1 = u_1;\hspace{1cm}
\forall p>1,\ v_p = u_p - u_{p-1}+1.\]
We build a tree $T$ of size $n$ such that
the sequence $(v_n)$ corresponds to the degrees of each node of
the tree, ordered by the prefix traversal. The semantic image of $T$ is the tree $S$.
\end{proposition}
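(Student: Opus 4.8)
The plan is to recognise the leftmost branch of a semantic tree as a \emph{preorder traversal} of the underlying syntax tree, after which the statement reduces to the classical bijection between plane trees and their preorder degree sequences (\L{}ukasiewicz words). Since $S$ is a semantic tree, the one-to-one correspondence recorded after Definition~\ref{def:shuffle} lets me write $S=\Shuf(T_0)$ for a \emph{unique} syntax tree $T_0$; it then suffices to prove that the tree $T$ reconstructed from $(v_p)$ coincides with $T_0$, for then $\Shuf(T)=\Shuf(T_0)=S$. By Definition~\ref{def:shuffle} the leftmost child of $\Shuf(R)$ is $\Shuf(\contract{R}{1})$, so the leftmost branch of $S$ is read off as the sequence of \emph{roots} produced by iterating the $1$-contraction on $T_0$.

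First I would reformulate this iteration at the level of the child-subtree sequence of the successive roots. If the current tree has root $\rho$ with child-subtrees $(B_1,\dots,B_k)$, then its $1$-contraction has root equal to the root of $B_1$ and child-subtrees $(C_1,\dots,C_m,B_2,\dots,B_k)$, where $C_1,\dots,C_m$ are the child-subtrees of the root of $B_1$. A straightforward induction shows that every subtree ever appearing in such a sequence is an unchanged subtree $T_0(x)$ of $T_0$: the contraction only promotes the first root and splices \emph{its} children to the front. This is exactly the stack discipline of a depth-first prefix (preorder) traversal, in which the sequence read from left to right is the stack of subtrees still to be visited. Hence, writing $w_1,\dots,w_n$ for the nodes of $T_0$ in preorder, the $p$-th root produced is $w_p$, so $w_p$ is the $p$-th node of the leftmost branch of $S$. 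Reading off degrees, let $d_p$ be the number of children of $w_p$ in $T_0$ and let $u_p$ be the length of the child-subtree sequence when $w_p$ is the root; expanding $T_0(w_p)$ removes one subtree and adds $d_p$, so $u_1=d_1$ and $u_p=u_{p-1}-1+d_p$ for $p>1$. Since $u_p$ is also the number of children of $w_p$ in $S$ (the subtree of $S$ hanging at $w_p$ is $\Shuf$ of the current contracted tree, whose root keeps its degree), $u_p$ is the degree sequence attached to the leftmost branch, and inverting the recurrence gives $d_1=u_1=v_1$ and $d_p=u_p-u_{p-1}+1=v_p$ for $p>1$. Thus $(v_p)$ is exactly the preorder out-degree sequence of $T_0$.

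It remains to check that $(v_p)$ genuinely encodes a plane tree and to invoke uniqueness. The hypotheses defining a degree-sequence translate cleanly: $u_p\ge u_{p-1}-1$ gives $v_p\ge 0$, and the telescoping identity $\sum_{i=1}^{p}(v_i-1)=u_p-1$ shows that the partial sums stay $\ge 0$ for $p<n$ (since $u_n=0$ is the only vanishing term, so $u_p\ge 1$ there) and equal $-1$ at $p=n$, with total $\sum_i v_i=n-1$. Therefore $(v_p)$ is a valid \L{}ukasiewicz word, and it determines a unique plane tree $T$ of size $n$ whose preorder out-degree sequence is $(v_p)$. Since $(v_p)$ is also the preorder out-degree sequence of $T_0$ and this encoding is a bijection, we conclude $T=T_0$, hence $\Shuf(T)=S$.

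The crux of the argument, and its only delicate point, is the preorder identification used above: one must be careful that the child-subtree sequence attached to $w_p$ is \emph{not} merely the set of children of $w_p$ in $T_0$, but also carries the right siblings inherited from earlier contractions, and that these inherited subtrees are consumed in exactly preorder order (in particular when $w_p$ is a leaf of $T_0$). The forest/stack reformulation is precisely what makes this transparent and replaces an otherwise fragile node-by-node induction.
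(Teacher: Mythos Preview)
Your proof is correct and follows the same idea the paper points to. In fact the paper does not give a proof of this proposition: it simply states it and then remarks that ``the leftmost branch of the semantic tree encodes a \L{}ukasiewicz word which is directly related to the degree-sequence of the prefix traversal'' (with a reference to Flajolet--Sedgewick). Your argument supplies precisely the details behind that remark: you identify iterated $1$-contraction with the stack discipline of a preorder traversal, derive the recurrence $u_p=u_{p-1}-1+d_p$ relating the branch degrees to the preorder out-degrees of $T_0$, invert it to get $v_p=d_p$, verify that $(v_p)$ is a valid \L{}ukasiewicz word, and conclude $T=T_0$ by the classical bijection. This is exactly the mechanism the paper alludes to, carried out in full.
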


An important remark is that we only considered the leftmost branch of the semantic tree to construct
the corresponding degree-sequence, from which we recover the initial syntax tree.
It is interesting to note that the leftmost branch of the semantic tree encodes a {\L}ukasiewicz word which is directly related
to the degree-sequence of the prefix traversal~\cite[p. 74--75]{FS09}.

We can show, in fact, that the initial tree can be recovered by considering \emph{any} of the branches of
its semantic tree, not just the leftmost one. Each branch corresponds to a degree sequence visiting the nodes of the initial tree by a
 specific traversal. For example, if the leftmost branch encodes the prefix traversal;
the rightmost branch enumerates its mirror: the postfix traversal.
Last but not least, the set of degree-sequences of length $n$ is only of
cardinality~$C_n$, so the semantic trees are highly \emph{symmetrical}  
in that many branches must be defined by the same degree-sequence.

\section{Enumeration of concurrent runs}\label{sec:widths}

Our quantitative study begins by measuring
the number of concurrent runs of a process encoded as a syntax tree $T$. This measure in fact corresponds to the number of leaves -- and thus the width --
 of the semantic tree $\Shuf(T)$. Given the exponential nature of the merge operator, measuring efficiently the \emph{dimensions} of the concurrent systems under study is of a great practical interest.
In a second step, we quantify precisely the exponential
 growth of the semantic trees, which provides a refined interpretation of the so-called combinatorial explosion phenomenon.
 Finally, we study the impact of characterizing commutativity for the merge operator.
As a particularly notable fact, this section reveals a deep connection between increasingly
 labelled structures and concurrency theory.

\subsection{An isomorphism with increasing trees}

Our study begins by a simple observation that connects the pure merge operator
 to the set of \emph{linear extensions} of tree-like partial orders or \emph{tree-posets}~\cite{Atkinson90}.

\begin{definition}
Let $T$ be the syntax tree of a process, and $A$ its set of actions. We define the poset $(A,\prec)$ such that
 $a \prec b$ iff  $a$ is the label of a node that is the parent of a node with label $b$ in $T$.
The linear extensions of $(A,\prec)$ is the set of all the strict orderings $(A,<)$ 
 that respect the partial ordering.
\end{definition}

For example, the syntax tree depicted on the left of Figure~\ref{fig:task-and-shuffle}
 is interpreted as the partial order $a\prec b;b \prec c;b \prec d;d \prec e;d \prec f$.

\begin{proposition}
Let $T$ be the syntax tree of process, $(A,\prec)$ the associated tree-poset. Then:
\begin{itemize}
\item Each branch of $\Shuf(T)$ encodes a distinct strict ordering of $A$ that respects $(A,\prec)$,
\item If a strict ordering $(A,<)$ respects $(A,\prec)$ then it is encoded by a given branch of $\Shuf(T)$.
\end{itemize}
\end{proposition}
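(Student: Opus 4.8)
The plan is to establish a bijection between the branches of $\Shuf(T)$ and the linear extensions of the tree-poset $(A,\prec)$, proceeding by structural induction on the syntax tree $T$ and tracking how the child-contraction operation acts on the underlying partial order. The two bullet points together assert exactly that the map sending a branch to the sequence of action-labels it traverses is a well-defined injection into the set of linear extensions (first bullet) that is also surjective (second bullet). I would prove both directions simultaneously by showing this map is a bijection.

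First I would make precise how a branch of $\Shuf(T)$ reads off a strict total order on $A$: following the construction in Definition~\ref{def:shuffle}, the root label $t$ appears first, and then one recursively descends into the semantic tree of some $\contract{T}{\,i}$. The crucial structural fact, which I would isolate as a lemma, is that the $i$-contraction $\contract{T}{\,i}$ corresponds at the poset level to removing the minimal element $t$ and declaring the label $v_i$ of the $i$-th child to be the new root (new minimum), with the relative order among all remaining actions unchanged. In other words, the children of the root of $T$ are precisely the minimal elements of $(A\setminus\{t\},\prec)$, and choosing child $i$ amounts to choosing which minimal element to schedule next. This is where the combinatorics of linear extensions enters: a linear extension of a poset is built by repeatedly choosing a minimal element, removing it, and recursing.

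For the induction, the base case is a leaf, where $\Shuf(T)=T$ has a single branch encoding the unique linear extension of a singleton poset. For the inductive step I would argue that the root $t$ of $T$ is the unique minimal element of $(A,\prec)$ (since $T$ is a rooted tree, every non-root action has a parent hence a $\prec$-predecessor), so every linear extension must begin with $t$; after fixing $t$ the remaining task is to linearly extend the forest obtained by deleting $t$. The semantic tree reflects this exactly: its $r$ subtrees are $\Shuf(\contract{T}{1}),\dots,\Shuf(\contract{T}{r})$, and by the contraction lemma each $\contract{T}{\,i}$ is a syntax tree whose poset is the original poset minus $t$ but with the minimal element $v_i$ promoted to root. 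Applying the induction hypothesis to each $\contract{T}{\,i}$ gives that its branches biject with the linear extensions starting with $v_i$, and since the sets of linear extensions starting with distinct minimal elements $v_1,\dots,v_r$ partition the full set of linear extensions beginning with $t$, the branches of $\Shuf(T)$ biject with all linear extensions of $(A,\prec)$.

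The main obstacle I anticipate is verifying the contraction lemma rigorously, namely that the poset associated with $\contract{T}{\,i}$ is genuinely the original poset with $t$ removed and $v_i$ made minimal, and in particular that $\contract{T}{\,i}$ is again a legitimate syntax tree (a rooted tree) so that the induction hypothesis applies. One must check that reattaching the grandchildren $T(v_{i_1}),\dots,T(v_{i_m})$ alongside the former siblings $T(v_1),\dots,T(v_{i-1}),T(v_{i+1}),\dots,T(v_r)$ under the new root $v_i$ preserves all precedence relations among the surviving actions and introduces no spurious ones; this is the content of Definition~\ref{def:child-contraction} but it requires careful bookkeeping of which ancestor relations survive. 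A secondary subtlety is confirming that distinct branches yield distinct orderings (injectivity), which follows because a branch is determined by its sequence of contraction choices and each choice is recorded by the next action label along the branch.
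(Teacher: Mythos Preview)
Your proposal is correct and considerably more detailed than what the paper offers. The paper does not give a formal proof of this proposition at all; it simply declares the observation ``quite trivial'' and remarks in one sentence that each branch of $\Shuf(T)$ encodes a distinct traversal of $T$. Your structural induction through the contraction operation is a legitimate way to flesh this out.

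One small point deserves care in your contraction lemma. You write that the contraction ``preserves all precedence relations among the surviving actions and introduces no spurious ones.'' Strictly speaking, the tree-poset of $\contract{T}{\,i}$ \emph{does} introduce new relations absent from $(A\setminus\{t\},\prec)$: the former siblings $v_j$ $(j\neq i)$ become children of $v_i$, so $v_i\prec v_j$ now holds in the contracted poset even though $v_i$ and $v_j$ were incomparable after deleting $t$. What you actually need (and what is true) is that the linear extensions of the poset of $\contract{T}{\,i}$ coincide with the linear extensions of $(A\setminus\{t\},\prec)$ that begin with $v_i$: the added relations $v_i\prec v_j$ are automatically satisfied once $v_i$ is placed first, so they exclude no such extension, and all original relations among $A\setminus\{t\}$ are visibly preserved by Definition~\ref{def:child-contraction}. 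With that reformulation your inductive step goes through cleanly, and the partition of linear extensions by their second element $v_1,\dots,v_r$ matches the decomposition of $\Shuf(T)$ into its $r$ subtrees.
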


This observation is quite trivial, and can be justified by the fact that each branch of $\Shuf(T)$ encodes
 a distinct traversal of $T$. For  example in Figure~\ref{fig:task-and-shuffle}  the leftmost branch of the
semantic tree is the linear extension $a<b<c<d<e<f$, that indeed fulfills the tree-poset.

Under these new order-theoretic lights, we can exhibit a deep connection between the number of concurrent
 runs of a syntax tree $T$ and the number of ways to label it in a strictly increasing way.
Indeed, as already observed in~\cite{DMTCSPROCdmAR0168}, the linear extensions of tree-posets are in one-to-one correspondence with \emph{increasing trees}~\cite{DBLP:conf/caap/BergeronFS92,Drmota09}.

\begin{definition}
An \textbf{increasing tree} is a labelled plane rooted tree such that the sequence of labels
along any branch starting at the root is increasing.
\end{definition}
For example, to label the tree of Figure~\ref{fig:task-and-shuffle}, $a$ would take the label~1,
$b$ then takes the label~2. Then the label of $c$ must belong to $\{3,4,5,6\}$, which would
 then it induces constraints on the other nodes. Finally, only $8$ labelled trees are increasing trees among the $6!$ possible
unconstrained labellings.

Increasing plane rooted trees satisfy the following specification
(using the classical boxed product$^\text{\qed} \star$ see~\cite[p.~139]{FS09} for details):
\[\G = \Z^\text{\qed} \star \Seq(\G).\]

It is easy to obtain the coefficients of the associated exponential generating function $G(z)$  (e.g. from~\cite{DBLP:conf/caap/BergeronFS92}):

\begin{fact} \label{fact:number:increasing}
The number of increasing plane rooted trees of size $n$ is 
\[n!\cdot [z^n]G(z) = 1\cdot3\cdots (2n-3) = \frac{(2n-2)!}{2^{n-1} (n-1)!}.\]
\end{fact}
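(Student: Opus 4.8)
The plan is to turn the symbolic specification $\G = \Z^\text{\qed} \star \Seq(\G)$ into a differential equation for the exponential generating function $G(z)$, to solve this equation in closed form, and then to read off the coefficients.

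First I would apply the translation rule for the (min-)boxed product, under which the atom $\Z$ is forced to carry the smallest label. With $A(z)=z$ (so $A'(z)=1$) for the boxed factor, and with the sequence $\Seq(\G)$ contributing the EGF $\frac{1}{1-G(z)}$, the boxed-product rule of~\cite[p.~139]{FS09} gives
\[
G(z) = \int_0^z \frac{1}{1-G(t)}\,dt ,
\]
and differentiating yields the first-order ODE $G'(z) = \frac{1}{1-G(z)}$ with initial condition $G(0)=0$. Second, I would solve this by separation of variables: $(1-G)\,dG = dz$ integrates, using $G(0)=0$, to $G - \tfrac12 G^2 = z$, i.e. $G^2 - 2G + 2z = 0$. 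Selecting the branch that vanishes at $z=0$ gives the closed form
\[
G(z) = 1 - \sqrt{1-2z} .
\]

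Third, I would extract $[z^n]G(z)$ via the generalized binomial theorem applied to $(1-2z)^{1/2}$, so that $[z^n]G(z) = -\binom{1/2}{n}(-2)^n$ for $n\geq 1$. Simplifying the binomial coefficient, I would write its numerator as
\[
\prod_{k=0}^{n-1}\Bigl(\tfrac12 - k\Bigr) = \frac{1}{2^n}\prod_{k=0}^{n-1}(1-2k) = \frac{(-1)^{n-1}}{2^n}\,\bigl[\,1\cdot 3\cdots(2n-3)\,\bigr] ,
\]
and then track the powers of $-2$ and $2$: the factors $2^n$ cancel and the sign $(-1)^{n-1}(-1)^n = -1$ combines with the leading minus sign, leaving $n!\,[z^n]G(z) = 1\cdot 3\cdots(2n-3)$. (For $n=1$ the product is empty and equals $1$, matching the single increasing tree of size one.)

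Finally, I would confirm the factorial form by splitting $(2n-2)!$ into its even factors $2,4,\dots,2n-2$, whose product is $2^{n-1}(n-1)!$, and its odd factors $1,3,\dots,2n-3$; this yields $1\cdot 3\cdots(2n-3) = (2n-2)!/\bigl(2^{n-1}(n-1)!\bigr)$, as claimed. The computation is essentially routine, and the only steps requiring genuine care are the correct application of the boxed-product rule (ensuring the differentiation acts on the $\Z$ factor alone) and the sign bookkeeping in the coefficient extraction. Alternatively one could simply quote the coefficients from the increasing-tree formulas of~\cite{DBLP:conf/caap/BergeronFS92}, but the self-contained ODE route above seems cleanest.
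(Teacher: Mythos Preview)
Your proof is correct. The paper does not actually prove this statement: it records it as a ``Fact'' and simply cites~\cite{DBLP:conf/caap/BergeronFS92} for the coefficients, which is precisely the alternative you mention in your final sentence. Your self-contained ODE derivation (specification $\to$ $G'=(1-G)^{-1}$ $\to$ $G=1-\sqrt{1-2z}$ $\to$ coefficient extraction) is the standard route and gives strictly more detail than the paper offers here.
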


From this we obtain our first significant measure.

\begin{theorem}\label{theo:MeanNbSchedulings}
The mean number of concurrent runs induced by syntax trees of size~$n$ is:
\[\bar{W}_n = \frac{n!}{2^{n-1}}
\sim_{n\rightarrow \infty} 2\sqrt{2\pi n}\left(\frac{n}{2e}\right)^{n}.\]
\end{theorem}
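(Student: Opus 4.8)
The plan is to express $\bar{W}_n$ as a simple ratio of two quantities that have already been computed, and then to extract the asymptotics by Stirling's formula. First I would pin down the probabilistic model: the mean is taken over all syntax trees of size $n$ chosen uniformly at random, of which there are exactly $C_n$ by Definition~\ref{def:catalan}. For a fixed syntax tree $T$, the number of concurrent runs is the number of leaves of $\Shuf(T)$, which by the preceding development equals the number of linear extensions of the associated tree-poset, and this in turn equals the number of strictly increasing labellings of $T$. Summing this count over all plane trees of size $n$ therefore enumerates exactly the increasing plane rooted trees of size $n$, since such an object is precisely a plane tree equipped with one increasing labelling. Hence the total number of concurrent runs over all syntax trees of size $n$ is $n!\cdot[z^n]G(z)$ as given in Fact~\ref{fact:number:increasing}.

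Dividing this total by the number $C_n$ of syntax trees then yields the exact formula. Reading $C_n = \frac1n\binom{2n-2}{n-1} = \frac{(2n-2)!}{n!\,(n-1)!}$ off the generating function of Fact~\ref{fact:Catalan}, I would compute
\[
\bar{W}_n = \frac{n!\cdot[z^n]G(z)}{C_n}
= \frac{(2n-2)!}{2^{n-1}(n-1)!}\cdot\frac{n!\,(n-1)!}{(2n-2)!}
= \frac{n!}{2^{n-1}},
\]
where the bulky factor $(2n-2)!$ and the two copies of $(n-1)!$ cancel cleanly. This cancellation is the only computation of substance in the whole argument.

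Finally, for the asymptotic estimate I would substitute Stirling's approximation $n! \sim \sqrt{2\pi n}\,(n/e)^n$ into the closed form, writing $\frac{n!}{2^{n-1}} = \frac{2\,n!}{2^n}$, so that
\[
\bar{W}_n = \frac{2\,n!}{2^n} \sim \frac{2\sqrt{2\pi n}\,(n/e)^n}{2^n} = 2\sqrt{2\pi n}\left(\frac{n}{2e}\right)^{n}.
\]
There is no genuine obstacle here: the combinatorial content is entirely carried by Facts~\ref{fact:Catalan} and~\ref{fact:number:increasing} together with the run/increasing-labelling correspondence established earlier, and the asymptotic step is a direct application of Stirling. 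The only points requiring care are the correct identification of the averaging model (summing increasing labellings over all $C_n$ plane trees, then dividing by $C_n$) and the factorial cancellation; both are routine once the setup is fixed.
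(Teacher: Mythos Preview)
Your proposal is correct and follows essentially the same approach as the paper: sum the number of increasing labellings over all plane trees of size $n$ using Fact~\ref{fact:number:increasing}, divide by $C_n$ from Fact~\ref{fact:Catalan}, simplify, and apply Stirling. The paper's proof is a one-sentence sketch of exactly this argument, and your version fills in the details accurately.
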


This result is obtained from Fact~\ref{fact:number:increasing} by taking the average number
of increasing trees of size $n$, and the asymptotics is based on Stirling's formula~\cite[p.~37]{FS09}.

A further information that will prove particularly useful is the number of increasing
labellings for a given tree. This can be obtained by the famous \emph{hook-length} formula~\cite[p.~67]{Knuth98}:

\begin{fact}\label{fact:hook}
The number $\ell_T$ of increasing trees built on a plane rooted tree $T$ is:
\[\ell_T = \frac{|T| !}{\prod_{S \text{ sub-tree of } T} |S|},\]
\indent \hfill where $|\cdot|$ corresponds to the size measure.
\end{fact}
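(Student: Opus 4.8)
The plan is to prove Fact~\ref{fact:hook} by induction on the size $|T|$, exploiting the elementary observation that the number of increasing labellings of a fixed tree shape depends only on the relative order of the labels placed on it, not on their actual values. The base case is immediate: if $T$ is a single leaf then $\ell_T = 1$, in agreement with $|T|!/\prod_S |S| = 1!/1 = 1$.

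For the inductive step, let $T$ have root $r$, and let $T(v_1),\ldots,T(v_k)$ be the sub-trees rooted at the children $v_1,\ldots,v_k$ of $r$, with sizes $n_i = |T(v_i)|$, so that $n := |T| = 1 + \sum_{i=1}^k n_i$. The key structural remark is that in any increasing labelling the root $r$, being an ancestor of every other node, must carry the smallest label~$1$; the remaining labels $\{2,\ldots,n\}$ are then partitioned among the sub-trees $T(v_1),\ldots,T(v_k)$, and within each $T(v_i)$ the restriction of the labelling must itself be increasing. Since the increasing condition is invariant under any order-preserving relabelling, the number of valid labellings of $T(v_i)$ using a prescribed set of $n_i$ distinct labels is exactly $\ell_{T(v_i)}$, independent of which labels are used. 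Counting first the choice of which labels are assigned to which sub-tree gives a multinomial factor, so
\[\ell_T = \binom{n-1}{n_1,\ldots,n_k}\prod_{i=1}^k \ell_{T(v_i)} = \frac{(n-1)!}{\prod_{i=1}^k n_i!}\prod_{i=1}^k \ell_{T(v_i)}.\]

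I would then substitute the induction hypothesis $\ell_{T(v_i)} = n_i!/\prod_{S \text{ sub-tree of } T(v_i)}|S|$, which cancels every $n_i!$ and leaves
\[\ell_T = \frac{(n-1)!}{\prod_{i=1}^k\ \prod_{S \text{ sub-tree of } T(v_i)}|S|}.\]
Finally I would observe that the sub-trees of $T$ are exactly $T$ itself together with all the sub-trees of each $T(v_i)$, so that $\prod_{S \text{ sub-tree of } T}|S| = n \cdot \prod_{i=1}^k \prod_{S \text{ sub-tree of } T(v_i)}|S|$. Substituting this identity into the previous display turns the right-hand side into $n!/\prod_{S \text{ sub-tree of } T}|S|$, which closes the induction.

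I expect the real subtlety — more a conceptual crux than a genuine obstacle — to be the order-invariance argument that legitimises using $\ell_{T(v_i)}$ for an arbitrary label set rather than for $\{1,\ldots,n_i\}$, since this is precisely what makes the multinomial splitting and the clean cancellation work. An alternative route I would keep in reserve is probabilistic: assign a uniformly random permutation of $\{1,\ldots,n\}$ to the nodes and note that a labelling is increasing if and only if each node $v$ carries the minimum label within its sub-tree $T(v)$; the events ``$v$ is minimal in $T(v)$'' are mutually independent, each of probability $1/|T(v)|$, which yields $\ell_T/n! = \prod_v 1/|T(v)| = \prod_S 1/|S|$ at once. On that route the difficulty would migrate to establishing the mutual independence of those minimality events.
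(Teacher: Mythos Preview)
Your inductive proof is correct and complete. Note, however, that the paper does not actually prove Fact~\ref{fact:hook}: it is stated as a known result with a reference to Knuth~\cite[p.~67]{Knuth98}, so there is no ``paper's own proof'' to compare against.

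That said, the recursive identity you derive,
\[\ell_T = \binom{n-1}{n_1,\ldots,n_k}\prod_{i=1}^k \ell_{T(v_i)} = (|T|-1)!\prod_{R \text{ child of the root}} \frac{\ell_R}{|R|!},\]
is precisely the relation the paper obtains later, in the proof of Theorem~\ref{theo:growth}, where it is \emph{deduced from} the hook-length formula rather than used to establish it. So your argument and the paper's use of the formula are in a sense dual: you build the closed form from the recursion, while the paper unpacks the closed form into the recursion for its analysis of growth. The order-invariance point you flag as the conceptual crux is genuine but entirely standard, and your treatment of it is adequate.

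Your probabilistic alternative is also correct; the mutual independence of the minimality events follows from the exchangeability of a uniform permutation restricted to any subset, and can be made rigorous by the same top-down conditioning you use in the inductive proof (once the root receives the global minimum, the residual labelling on each child sub-tree is again uniform).
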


\begin{corollary}\label{algo:hook-length}
The number of concurrent runs of a syntax tree $T$ is the number~$\ell_T$.
\end{corollary}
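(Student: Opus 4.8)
The plan is to chain together the two correspondences already established in this section so that Corollary~\ref{algo:hook-length} becomes a one-line consequence of Fact~\ref{fact:hook}. First I would recall that by the proposition relating branches of $\Shuf(T)$ to linear extensions of the tree-poset $(A,\prec)$, the number of concurrent runs of the syntax tree $T$ — i.e. the number of leaves, and hence the width, of the semantic tree $\Shuf(T)$ — equals the number of distinct strict orderings of $A$ that respect $(A,\prec)$; that is, the number of linear extensions of the tree-poset. Each branch encodes exactly one such linear extension, and conversely every linear extension is encoded by some branch, so this count is exact.

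Next I would invoke the one-to-one correspondence, cited from~\cite{DMTCSPROCdmAR0168}, between the linear extensions of a tree-poset and the increasing labellings of the underlying plane rooted tree $T$. A strict ordering $(A,<)$ respecting $(A,\prec)$ is precisely an assignment of the labels $1,\dots,|T|$ to the nodes of $T$ such that a parent always receives a smaller label than each of its children; this is exactly the defining condition of an increasing tree. Consequently the number of concurrent runs of $T$ equals the number $\ell_T$ of increasing trees built on $T$. At this point the corollary is essentially proved, since $\ell_T$ is the very quantity whose closed form is supplied by the hook-length formula in Fact~\ref{fact:hook}.

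Finally, to make the statement self-contained I would simply record that $\ell_T$ admits the explicit evaluation
\[\ell_T = \frac{|T|!}{\prod_{S \text{ sub-tree of } T} |S|},\]
so that the number of concurrent runs is computable directly from the syntax tree without ever constructing $\Shuf(T)$, which is the practically relevant point. The main (and only genuine) obstacle is justifying the bijection between linear extensions of the tree-poset and increasing labellings; but this is standard and already attributed to~\cite{DMTCSPROCdmAR0168}, so the proof reduces to transporting Fact~\ref{fact:hook} along two identifications that the preceding discussion has already set up. I therefore expect the argument to be short, with all the combinatorial work having been front-loaded into the earlier propositions and facts.
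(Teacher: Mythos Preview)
Your proposal is correct and matches the paper's approach exactly: the paper treats this corollary as immediate, with no explicit proof, since the preceding proposition (branches of $\Shuf(T)$ $\leftrightarrow$ linear extensions of the tree-poset) together with the cited bijection between linear extensions and increasing labellings already gives the result, and Fact~\ref{fact:hook} merely supplies the closed form for $\ell_T$. Your write-up just makes this implicit chain explicit.
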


We remark that the hook-length gives us ``for free'' a direct algorithm to compute the number of linear
extensions of a tree-poset in linear time. This is clearly an improvement if compared to related algorithms, e.g.~\cite{Atkinson90}. 
In Section~\ref{sec:randgen} we discuss a slightly more general and more efficient 
 algorithm that proves quite useful.

\subsection{Analysis of growth}\label{sec:growth}

To analyse quantitatively the growth between the processes and their behaviours,
we measure the average number of concurrent runs
induced by large syntax trees of size $n$. The arithmetic mean given in Theorem~\ref{theo:MeanNbSchedulings}
is the usual way to measure in average. 
Nevertheless, a small number of compact syntax trees (such as a root followed by (n-1) sons) produces
a huge number of runs and unbalance the mean. So, a natural way to avoid such bias is to compute
the geometric mean which is less sensitive to extremal data.
This subsection is devoted to prove the following theorem about the geometric mean number of concurrent runs.

\begin{theorem}\label{theo:growth}
The geometric mean number of concurrent runs built on process trees of size~$n$ satisfies:
\[\bar{\Gamma}_n =  \prod_{k=2}^{n-1} k^{1 - \frac{n+1-k}{2}\frac{C_{k}C_{n-k+1}}{C_{n}}}
 \sim_{n\rightarrow\infty} \sqrt{2\pi} \frac{e^{\sqrt{\pi n} + L(1/4)}}{n} \left(\frac{n}{e^{1+2L(1/4)}}\right)^{n},\]
\indent \hfill where\footnote{For approximate constants, the exact digits are written in bold type.}
 $\textstyle{L\left(1/4\right) = \sum_{n>1} \log n \cdot C_{n} \cdot 4^{-n} \approx {\bf 0.5790439}217 \pm 5\cdot10^{-9}}$.
\end{theorem}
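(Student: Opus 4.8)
The plan is to convert the geometric mean into an exact closed form via the hook-length formula, and then read off its asymptotics by singularity analysis. By Corollary~\ref{algo:hook-length} the number of runs of a syntax tree $T$ equals $\ell_T$, and Fact~\ref{fact:hook} gives $\log \ell_T = \log(|T|!) - \sum_{S\text{ sub-tree of }T}\log |S|$. Averaging the logarithm over the $C_n$ plane trees of size $n$ yields $\log\bar\Gamma_n = \frac1{C_n}\sum_{|T|=n}\log\ell_T = \log(n!) - \frac1{C_n}\sum_{k=1}^{n}A_{n,k}\log k$, where $A_{n,k}$ is the total number of sub-trees of size $k$ taken over all trees of size $n$. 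The combinatorial heart of the exact formula is the identity $A_{n,k} = \frac{n+1-k}{2}\,C_k\,C_{n-k+1}$ for $1\le k\le n-1$. I would establish it by a decomposition bijection: a pointed tree $(T,v)$ with $|T|=n$ and sub-tree of size $k$ at $v$ corresponds to a pair consisting of a tree $S$ of size $k$ and a tree $T'$ of size $n-k+1$ carrying a marked leaf, obtained by contracting the sub-tree below $v$ to a single leaf of $T'$. Hence $A_{n,k} = C_k\cdot\Lambda_{n-k+1}$ where $\Lambda_m$ is the total number of leaves among plane trees of size $m$; a one-line computation with the marking variable ($\sum_m \Lambda_m z^m = \frac z2 C'(z) = \frac{z}{2\sqrt{1-4z}}$) gives $\Lambda_m = \frac m2 C_m$ for $m\ge 2$. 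Since $A_{n,n}=C_n$, the $k=n$ term contributes exactly $\log n$, which cancels the top term of $\log(n!)$; collecting the rest reproduces precisely the stated product, i.e. $\log\bar\Gamma_n = \sum_{k=2}^{n-1}\bigl(1 - \tfrac{n+1-k}{2}\tfrac{C_kC_{n-k+1}}{C_n}\bigr)\log k$.

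Next I would isolate the asymptotics. Writing $\log\bar\Gamma_n = \log((n-1)!) - S_n$ with $S_n = \frac1{C_n}\sum_{k=2}^{n-1}\frac{n+1-k}{2}C_kC_{n-k+1}\log k$, Stirling's formula disposes of the factorial, $\log((n-1)!) = n\log n - n - \tfrac12\log n + \tfrac12\log(2\pi) + o(1)$, so the whole theorem reduces to the asymptotics of $S_n$, and matching the claimed right-hand side amounts to showing $S_n = 2nL(1/4) - \sqrt{\pi n} + \cdots$.

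For $S_n$ I would pass to generating functions. Using $\frac{n+1-k}{2}C_{n-k+1} = \Lambda_{n-k+1} = [z^{n-k+1}]\frac{z}{2\sqrt{1-4z}}$, the inner sum is a Cauchy product, so (up to the boundary term) $S_n = \frac1{C_n}[z^{n+1}]\bigl(D(z)\cdot\frac{z}{2\sqrt{1-4z}}\bigr) - \frac12\log n$, where $D(z) = \sum_{k\ge 2}C_k\log k\, z^k$ and $D(1/4)=L(1/4)$. The nonstandard ingredient is the singular behaviour of $D$ at $z=1/4$. I would produce it with the order-derivative device $\log k = -\partial_a k^{-a}\big|_{a=0}$: the deformed series $\sum_k C_k k^{-a}z^k$ carries a $(1-4z)^{1/2+a}$ singularity whose leading coefficient at $a=0$ is $-\tfrac12$ (matching the singular part $-\tfrac12(1-4z)^{1/2}$ of $C(z)$ itself), so differentiating the exponent in $a$ yields $D(z) = L(1/4) + \tfrac12(1-4z)^{1/2}\log(1-4z) + \BigO((1-4z)^{1/2})$. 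Multiplying by $\frac{z}{2\sqrt{1-4z}}\sim \tfrac18(1-4z)^{-1/2}$ and applying transfer theorems, the $(1-4z)^{-1/2}$ contribution gives the dominant $2nL(1/4)$, the $\log(1-4z)$ contribution — the footprint of the $\log k$ weight — gives the subexponential $-\sqrt{\pi n}$, and the $(1-4z)^{1/2}$ and constant levels, divided by $C_n$ and combined with the Stirling expansion above, fix the polynomial prefactor and the constant $\sqrt{2\pi}\,e^{L(1/4)}$. Exponentiating yields the stated estimate.

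The \textbf{main obstacle} is exactly this last step: pinning down the subexponential factor $e^{\sqrt{\pi n}}$ together with the prefactor. Because $D$ is not algebraic, the Catalan singularity is unavailable off the shelf; one must expand the polylogarithmic order-derivative to the $(1-4z)^{1/2}$ and $(1-4z)^{1/2}\log(1-4z)$ levels and control its product with $(1-4z)^{-1/2}$ through the transfer theorems, keeping track of the constants. A purely real-analytic alternative — estimating $S_n$ directly by splitting the sum according to $k$ small, $n-k$ small, and the bulk — recovers the same $\sqrt{\pi n}$, but only after a delicate cancellation of spurious $\sqrt n\,\log n$ boundary contributions, which is precisely why I would favour the generating-function formulation, where those terms are organised automatically.
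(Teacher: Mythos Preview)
Your proposal is correct and lands on exactly the same generating-function computation as the paper, but you reach it by a different and somewhat more combinatorial route.

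For the exact product formula, the paper does \emph{not} count sub-trees directly. It instead uses the recursive form of the hook length, $\ell_T=(|T|-1)!\prod_{R}\ell_R/|R|!$ over the root children, sets $w_T=\log(\ell_T/|T|!)$, and derives a functional equation for $W(z)=\sum_T w_T z^{|T|}$ which solves to
\[
W(z)=-\frac{L(z)}{2}\Bigl(1+\frac{1}{\sqrt{1-4z}}\Bigr),
\qquad L(z)=\sum_{n\ge 1}C_n\log n\,z^n.
\]
Your argument---the bijection ``contract the pointed sub-tree to a marked leaf'' giving $A_{n,k}=C_k\,\Lambda_{n-k+1}$ with $\Lambda_m=\tfrac{m}{2}C_m$ for $m\ge 2$---is a clean alternative; plugging $\Lambda_m=[z^m]\tfrac{z}{2\sqrt{1-4z}}$ and summing in $k$ reproduces precisely the same $W(z)$ (the boundary term you flag is exactly the $-\tfrac{L_n}{2}$ part of the product). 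So the two derivations are equivalent at the level of generating functions; yours has the advantage of making the closed product $\prod_k k^{1-\frac{n+1-k}{2}C_kC_{n-k+1}/C_n}$ transparent, while the paper's recursion stays closer to the symbolic-method style.

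For the asymptotics both proofs are the same in substance: one needs the singular expansion of $L(z)$ at $z=1/4$ and then transfer. The paper writes this expansion explicitly as
\[
L(z)\approx L(1/4)+\tfrac12\sqrt{1-4z}\,\log(1-4z)+\tfrac{\gamma+2\ln 2-2}{2}\sqrt{1-4z},
\]
citing the Catalan expansion and the log-power tables of \cite[p.~388]{FS09}, and then extracts $W_n/C_n=-2nL(1/4)+\sqrt{\pi n}-\tfrac12\ln n+L(1/4)+o(1)$. Your ``order-derivative'' device $\log k=-\partial_a k^{-a}|_{a=0}$ is a legitimate way to produce the same expansion (indeed it is how those tables are derived), and the terms you list---$(1-4z)^{-1/2}$ giving $2nL(1/4)$, the logarithmic factor giving $-\sqrt{\pi n}$, the $(1-4z)^{1/2}$ level fixing $-\tfrac12\ln n$ and the constant---match the paper's computation term for term. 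The only thing to watch is that the $\sqrt{1-4z}$ coefficient (the paper's $(\gamma+2\ln 2-2)/2$) must actually be tracked to pin down the prefactor; you acknowledge this but leave it implicit.
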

This growth appears to us as less important than what we conjectured with the arithmetic mean,
although it is still very large. For both means, the result is indeed quite far from the upper bound $(n-1)!$.


\begin{proof}
First we need to obtain a recurrence formula based on the hook length formula. Let us give the following observation:
\[\prod_{S \text{ sub-tree of } T} |S| = |T| \cdot
 \prod_{R \text{ child of the root of } T} \left(\prod_{S \text{ sub-tree of } R} |S|\right).\]
Now, by Fact~\ref{fact:hook} we deduce the next recursive equation:
\[\ell_T = (|T|-1)! \left( \prod_{R \text{ child of the root of } T} \frac{\ell_R}{|R|!} \right).\]

Since the geometric mean of $\ell_T$ is related to the arithmetic mean of $\ln(\ell_T)$,
we introduce the sequence $w_T = \log(\ell_T / |T|!)$
and its generation function $W(z) = \sum_T w_T z^{|T|}$.
Using the latter recursive formula on $\ell_T$, we deduce:
\[W(z) = -L(z) + \sum_T \sum_{R \text{ child of the root of } T} w_R z^{|T|},\]
 where $L(z) =  \sum_T \log |T| z^{|T|} = \sum_{n\geq1} C_{n} \log (n)  z^n$ and
$C(z) = \sum_n C_n z^n$ is the generating function enumerating all trees.\\
By partitioning trees $T$ according to their number of root-children, we get
\[W(z) =  -L(z) + \sum_{r\geq 1} \sum_{R_1, \dots, R_r} \left( \sum_{i=1}^r w_{R_i}\right) z^{1+\sum_{j=1}^r |R_j|}.\]
Now, by symmetry of the trees $R_i$, we get:
\[W(z) = -L(z) + \sum_{r\geq 1} r \cdot \sum_{R_1, \dots, R_r} w_{R_1} z^{1+\sum_{j=1}^r |R_j|}
 = -L(z) + zU(z)\sum_{r\geq 1} r C^{r-1}(z).\]
We recognise $\sum_{r\geq 1} r C^{r-1}(z) = (1-C(z))^{-2}$, thus,
\[W(z) = \frac{-L(z)}{2}\left( 1+\frac{1}{\sqrt{1-4z}}\right).\]
In order to obtain the geometric mean width $\Gamma_n$,
we first extract the $n$-th coefficient of the previous product. Then we apply the exponential function on the result.
We then multiply it by $n!$ and take the $C_{n}$-th root of the result.
Finally $\bar{\Gamma}_n$ is equal to this result divided by $n$.\\

In order to approximate for $[z^n]W(z) = W_n$, we give an approximation $A(z)$ of $L(z)$:
\[A(z) = L(1/4) - \frac{\sqrt{1-4z}}{2}  \ln(\frac{1}{1-z}) +\frac{(\gamma +2\ln 2 - 2)\sqrt{1-4z}}{2},\]
where $\gamma$ is Euler's constant.
By using the two first terms in the development of Catalan numbers
(see Fact~\ref{fact:Catalan}) and formulas~\cite[p.~388]{FS09},
we obtain $L_n = A_n +\BigO (4^n\ n^{-5/2} \ \ln n)$.
Consequently,
\[W_n = -\frac{1}{2} \left( L(1/4)\frac{4^n}{\sqrt{\pi n}} - \frac{4^n}{2n} +
 \frac{4^{n-1} \ln n}{\sqrt{\pi n^3}} - L(1/4)\frac{4^{n-1}}{2\sqrt{\pi n^3}} +\BigO\left(\frac{4^n \ln n}{n^{5/2}}\right)\right).\]
Thus,
\[\frac{W_n}{C_{n}} = W_n \frac{\sqrt{\pi n^3}}{4^{n-1}}\left(1-\frac{3}{8n}+\BigO\left(\frac1{n^2}\right)\right)
 = - L(1/4)\cdot 2n + \sqrt{\pi n} - \frac{\ln n}{2} + L(1/4) +\BigO\left(\frac1{n^2}\right).\]
Finally we conclude:
\[\bar{\Gamma}_{n} = (n-1)! \cdot \exp\left(\frac{W_n}{C_{n}}\right)
\sim_{n\rightarrow\infty} \sqrt{2\pi} \frac{e^{\sqrt{\pi n} + L(1/4)}}{n} \left(\frac{n}{e^{1+2L(1/4)}}\right)^{n}.\]
\end{proof}

\subsection{The case of non-plane trees}

In classical concurrency theory, the pure merge operator often comes with commutativity laws, e.g.: $P \parallel Q \equiv Q \parallel P$.
From a combinatorial point of view, the idea is to consider the syntax and semantic trees as non-plane (or unordered) rooted trees.

Thankfully the non-plane analogous of the Catalan number is well known (cf.~\cite[p.~475--477]{FS09}):
\begin{fact}\label{fact:catalan_nonplane}
The specification of unlabelled non-plane rooted trees is $\T = \Z \times \MSet \T$.
The number $T_n$ of such trees of size $n$ is:
\[T_n \sim \frac{\gamma}{2\sqrt{\pi n^3}}\eta^{-n},\]
where $\eta\in [1/4, 1/e]$ and approximately $\eta\approx {\bf 0.3383218}$ and $\gamma \approx {\bf 1.559490}$.
\end{fact}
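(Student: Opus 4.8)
The plan is to treat Fact~\ref{fact:catalan_nonplane} as an instance of the P\'olya--Otter enumeration schema for unlabelled rooted trees and to carry out the singularity analysis underlying it. First I would translate the multiset specification $\T = \Z \times \MSet\,\T$ into a functional equation through P\'olya's exponential formula for multisets:
\[
T(z) = z \exp\left(\sum_{k\geq 1}\frac{1}{k}\,T(z^k)\right).
\]
Writing $\eta$ for the radius of convergence of $T(z)$, I would locate it in $[1/4,1/e]$. The lower bound $\eta \geq 1/4$ follows by comparison with the plane trees of Definition~\ref{def:catalan}: forgetting the planar order can only identify objects, so $T_n \leq C_n$ and the radius of $T$ is at least that of $C$, namely $1/4$. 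The upper bound $\eta \leq 1/e$ follows in the other direction: dropping the non-negative terms with $k\geq 2$ gives $T(z) \geq z\,e^{T(z)}$, so by coefficient domination $T_n$ exceeds the coefficients $n^{n-1}/n!$ of the tree function solving $y = z\,e^{y}$, whose radius is exactly $1/e$.

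The key structural observation is that the singularity at $\eta$ stems entirely from the $k=1$ term. I would isolate it by writing
\[
T(z) = z\,\Theta(z)\,e^{T(z)}, \qquad \Theta(z) = \exp\left(\sum_{k\geq 2}\frac{1}{k}\,T(z^k)\right).
\]
Since $T(z^k)$ converges for $|z| < \eta^{1/k}$, the factor $\Theta(z)$, and hence $B(z) := z\,\Theta(z)$, is analytic in $|z| < \eta^{1/2}$; as $\eta < 1$ we have $\eta < \eta^{1/2}$, so $B$ is analytic in a full neighbourhood of $\eta$. Near $\eta$ the equation therefore reads $B(z) = T(z)\,e^{-T(z)}$ with $B$ analytic and $B'(\eta)>0$, so the only remaining source of non-analyticity is the inversion of $w \mapsto w\,e^{-w}$. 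That map increases on $[0,1]$ with a single critical point at $w=1$ where its derivative vanishes. Hence the singularity is forced at the value $\tau := T(\eta) = 1$, and $\eta$ (together with $\tau=1$) is pinned down by the functional equation at $\eta$, the numbers $T(\eta^k)$ for $k\geq 2$ being values of $T$ strictly inside its disk of convergence.

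With the critical point identified, I would perform the local expansion. Matching $w\,e^{-w} = e^{-1} - \tfrac12 e^{-1}(w-1)^2 + \BigO((w-1)^3)$ against $B(z) = e^{-1} - B'(\eta)\,\eta\,(1 - z/\eta) + \cdots$ yields a square-root singularity
\[
T(z) = 1 - \gamma\,\sqrt{1 - z/\eta} + \BigO(1 - z/\eta), \qquad \gamma = \sqrt{2e\,\eta\,B'(\eta)}.
\]
A transfer theorem (singularity analysis, \cite[Ch.~VI]{FS09}) applied to the $\sqrt{1-z/\eta}$ term, using $[z^n]\sqrt{1-z/\eta} \sim -\tfrac{1}{2\sqrt{\pi}}\,n^{-3/2}\,\eta^{-n}$, then gives $T_n \sim \frac{\gamma}{2\sqrt{\pi n^3}}\,\eta^{-n}$, the claimed form.

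The real work, and the main obstacle, lies in verifying the hypotheses of the schema rather than in this analytic skeleton. I must establish \emph{aperiodicity} — that $\eta$ is the unique singularity on $|z|=\eta$ — which here is immediate since $T_n > 0$ for every $n\geq 1$, so the support has span $1$ and no competing singularities spoil the $n^{-3/2}$ regime. I must also confirm the non-degeneracy conditions, namely $B'(\eta)\neq 0$ (it is a sum of strictly positive contributions) and that the inversion is genuinely of square-root type, i.e. that the second derivative of $w\,e^{-w}$ at $w=1$ is non-zero, which it is. The explicit constants are then obtained numerically: one solves the characteristic system $T(\eta)=1$ with the functional equation for $\eta$, bootstrapping the values $T(\eta^k)$ by iterating the recurrence (each $\eta^k$ lies well inside the disk, so the series converge geometrically), which gives $\eta \approx 0.3383218$ and finally $\gamma = \sqrt{2e\,\eta\,B'(\eta)} \approx 1.559490$.
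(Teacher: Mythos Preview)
Your argument is the classical P\'olya--Otter singularity analysis and is essentially correct; the functional equation, the isolation of the $k=1$ term, the square-root branch point at $T(\eta)=1$, and the transfer to $n^{-3/2}\eta^{-n}$ are all standard and your verification of the side conditions (aperiodicity, $B'(\eta)>0$) is adequate. One minor point: the coefficient-wise inequality $T(z)\geq z\,e^{T(z)}$ used for the bound $\eta\leq 1/e$ is not literally immediate from ``dropping terms'' since $T$ appears on both sides, but it follows by a straightforward induction on $n$ comparing $T_n$ with the coefficients of the tree function; you may want to make that step explicit.

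As for comparison with the paper: there is nothing to compare. The paper does not prove this statement at all. It is recorded as a \emph{Fact} with a reference to \cite[p.~475--477]{FS09}, where precisely the argument you sketch is carried out. So your proposal is not an alternative route but rather an expansion of the cited source; it supplies what the paper deliberately omits.
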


Compared to plane trees, no known closed form exists to characterize the symmetries
 involved in the non-plane case. One must indeed work with rather
complex approximations. Luckily, the increasing variant on non-plane trees have been studied in the model of 
 \emph{increasing Cayley trees}~\cite[p.~526--527]{FS09}:
\begin{fact}
The specification of increasing non-plane rooted trees is $\I = \Z^\text{\qed} \star \Set(\I)$.
The number $I_n$ of such trees of size $n$ is:
\[I_n = (n-1)!.\]
\end{fact}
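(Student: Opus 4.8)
The plan is to follow the symbolic method exactly as in the plane case (Fact~\ref{fact:number:increasing}), translating the specification $\I = \Z^\text{\qed} \star \Set(\I)$ into a differential equation for the exponential generating function $I(z) = \sum_{n\geq 1} I_n\, z^n/n!$ and then solving it in closed form. The boxed product $\Z^\text{\qed} \star (\cdot)$ forces the smallest label to sit at the root, and on exponential generating functions it acts by integration: if $\I = \Z^\text{\qed} \star \B$ then $I'(z) = B(z)$ with $I(0) = 0$ (see~\cite[p.~139]{FS09}). Since the children of the root form a \emph{set} of increasing non-plane trees, $\B = \Set(\I)$ has exponential generating function $\exp(I(z))$, and I obtain the autonomous first-order equation
\[I'(z) = e^{I(z)}, \qquad I(0) = 0.\]

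First I would separate variables: writing $e^{-I}\,I' = 1$ and integrating, the boundary condition $I(0)=0$ fixes the constant and gives $e^{-I(z)} = 1 - z$, that is
\[I(z) = \log\frac{1}{1-z} = \sum_{n\geq 1}\frac{z^n}{n}.\]
Extracting coefficients then yields $I_n = n!\,[z^n]I(z) = n!/n = (n-1)!$, which is the claimed formula.

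As an independent check --- and arguably the more illuminating argument --- I would give the direct combinatorial insertion proof. An increasing non-plane tree on the labels $\{1,\dots,n\}$ is built by inserting the labels in increasing order: label $1$ is the root, and for each $k\geq 2$ the node labelled $k$ must be attached below one of the $k-1$ already-present nodes. Because the trees are non-plane, the position among the siblings of the chosen parent is irrelevant, so each insertion offers precisely $k-1$ choices and no more. Multiplying over all insertions gives $\prod_{k=2}^{n}(k-1) = (n-1)!$, in agreement with the analytic computation.

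I do not expect a genuine obstacle here: the specification is autonomous and the resulting ordinary differential equation is separable with an elementary closed-form solution. The only points requiring care are the correct translation of the boxed product into the initial-value problem (the integration convention together with the boundary condition $I(0)=0$) and the bookkeeping that distinguishes $\Set$, with its unordered children, from the $\Seq$ used in the plane case --- it is exactly this switch that replaces the double-factorial count of Fact~\ref{fact:number:increasing} by the plain factorial $(n-1)!$.
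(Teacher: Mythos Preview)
Your proof is correct and follows exactly the standard derivation: the paper itself does not prove this Fact but simply cites~\cite[p.~526--527]{FS09}, where the same differential equation $I'(z)=e^{I(z)}$ with $I(0)=0$ is solved to obtain $I(z)=\log\frac{1}{1-z}$ and hence $I_n=(n-1)!$. Your additional insertion argument is a nice complement and is also classical for increasing Cayley trees.
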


\begin{theorem}
The mean number of concurrent runs built on non-plane syntax trees of size~$n$ is:
\[\bar{V}_n \sim_{n\rightarrow \infty} \frac{2\sqrt{2} \pi n}{\gamma}\left(\frac{n\eta}{e}\right)^{n},\]
where $\eta$ and $\gamma$ are introduced in Fact~\ref{fact:catalan_nonplane}.
\end{theorem}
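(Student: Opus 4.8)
The plan is to reduce $\bar{V}_n$ to the same kind of ratio that drives the plane case in Theorem~\ref{theo:MeanNbSchedulings}. First I would recall that the number of concurrent runs of a syntax tree is, by Corollary~\ref{algo:hook-length} together with Fact~\ref{fact:hook}, the number $\ell_T$ of increasing labellings of $T$. The point to stress is that $\ell_T$ is really the number of linear extensions of the underlying tree-poset, an order-theoretic quantity that depends only on the unordered tree shape and not on any plane embedding. Consequently $\ell_T$ is well defined for a non-plane tree $T$ and still counts its concurrent runs, so nothing about the run-counting is disturbed when we forget the planar order.

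Second, I would form the arithmetic mean over all non-plane trees of size $n$, of which there are $T_n$ (Fact~\ref{fact:catalan_nonplane}):
\[\bar{V}_n = \frac{1}{T_n} \sum_{|T| = n} \ell_T.\]
The key identity is that an increasing non-plane tree is exactly a non-plane tree endowed with one of its $\ell_T$ increasing labellings, so the numerator counts precisely the increasing non-plane (Cayley) trees of size $n$; hence $\sum_{|T|=n}\ell_T = I_n = (n-1)!$. This yields $\bar{V}_n = (n-1)!/T_n$, mirroring the computation behind Theorem~\ref{theo:MeanNbSchedulings}, where $\bar{W}_n$ is the number of increasing plane trees (Fact~\ref{fact:number:increasing}) divided by $C_n$.

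Finally it remains to insert the asymptotics. Using $T_n \sim \frac{\gamma}{2\sqrt{\pi n^3}}\,\eta^{-n}$ from Fact~\ref{fact:catalan_nonplane} and Stirling's formula in the shifted form $(n-1)! = n!/n \sim \sqrt{2\pi n}\,n^{-1}(n/e)^n$, I would compute
\[\bar{V}_n \sim \frac{2\sqrt{\pi n^3}}{\gamma}\,(n-1)!\,\eta^n \sim \frac{2\sqrt{\pi n^3}\,\sqrt{2\pi n}}{\gamma\, n}\left(\frac{n\eta}{e}\right)^{n} = \frac{2\sqrt{2}\,\pi n}{\gamma}\left(\frac{n\eta}{e}\right)^{n},\]
where the polynomial prefactor collapses through $\sqrt{\pi n^3}\cdot\sqrt{2\pi n} = \sqrt{2}\,\pi n^2$, giving the claimed equivalent.

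I expect no genuine obstacle here, only one subtlety to argue cleanly: the passage from plane to non-plane trees. The reassuring fact is that runs correspond to linear extensions of a poset, so no symmetry-induced over- or under-counting arises and the identity $\sum_{|T|=n}\ell_T = I_n$ holds verbatim. The rest is purely asymptotic bookkeeping, where the only care needed is tracking the $(n-1)!$ versus $n!$ shift when applying Stirling's formula, exactly as above.
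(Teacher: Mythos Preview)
Your proposal is correct and follows essentially the same approach the paper implicitly intends: compute $\bar V_n = I_n/T_n = (n-1)!/T_n$ using the Fact that $I_n=(n-1)!$ and the asymptotics of $T_n$ from Fact~\ref{fact:catalan_nonplane}, then apply Stirling's formula. Your extra care in justifying that $\ell_T$ is well-defined for non-plane trees (as a poset invariant) and that $\sum_{|T|=n}\ell_T = I_n$ is a welcome clarification of a point the paper leaves tacit.
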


Of course, we obtain different approximations for the plane vs. non-plane case.
The ratio $\bar{W}_n/\bar{V}_n$ is equivalent to $\gamma (2\eta)^{-n} / \sqrt{\pi n}$, which means
that although the exponential growths are not equivalent, the two asymptotic formulas follow a same universal \emph{shape}. This comparison between plane and non-plane combinatorial structures 
is a recurring theme in combinatorics. It has often been pointed out that in most cases the asymptotics look very similar. 
Citing Flajolet and Sedgewick  (cf.~\cite[p.~71--72]{FS09}): 
\begin{quote}
``(some) universal law governs the singularities of simple tree generating functions, either plane or non-plane''.
\end{quote}
Our study echoes quite faithfully such an intuition.

\section{Typical shape of process behaviours}\label{sec:shape}

Our goal in this section is to provide a more refined view of the
process behaviours by studying the typical shape of the semantic
trees. This study puts into light a new -- and, we think, interesting --
combinatorial class: the model of \emph{increasing admissible cuts}
(of plane trees). In the first part we recall the notion of
admissible cuts and define their increasing variant. This naturally leads to a generalization of the hook
length formula that enables the decomposition of a semantic
tree by levels. Based on this construction, we study experimentally the level
decomposition of semantic trees corresponding to syntax trees of a
size~$40$ (which yields semantic trees with more than
$10^{28}$ nodes !).  Finally, we discuss the mean number of nodes by
level, which is obtained by counting increasing admissible cuts. This
provides a fairly precise characterization of the typical shape for
process behaviours.

\subsection{Increasing admissible cuts}

\begin{figure}[thb]
\begin{center}
\scalebox{0.8} 
{
\hspace{4mm}
\begin{tikzpicture}[node distance=28pt]
\node (a) {$a$};
\node[below of=a] (b) {$b$};
\draw (a) -- (b);
\node[below left of=b] (c) {$c$};
\draw (b) -- (c);
\node[below right of=b] (d) {$d$};
\draw (b) -- (d);
\node[below left of=d] (e) {$e$};
\draw (d) -- (e);
\node[below right of=d] (f) {$f$};
\draw (d) -- (f);
\end{tikzpicture}
\hspace{4mm}
\begin{tikzpicture}[node distance=28pt]
[node distance=28pt]
\node (a) {$a$};
\node[below of=a] (b) {$b$};
\draw (a) -- (b);
\node[below left of=b] (c) {$c$};
\draw (b) -- (c);
\node[below right of=b] (d) {$d$};
\draw (b) -- (d);
\node[below left of=d] (e) {$e$};
\draw (d) -- (e);
\node[below right of=d] (f) {$\ $};
\end{tikzpicture}
\hspace{4mm}
\begin{tikzpicture}[node distance=28pt]
\node (a) {$a$};
\node[below of=a] (b) {$b$};
\draw (a) -- (b);
\node[below left of=b] (c) {$c$};
\draw (b) -- (c);
\node[below right of=b] (d) {$d$};
\draw (b) -- (d);
\node[below left of=d] (e) {$\ $};
\node[below right of=d] (f) {$f$};
\draw (d) -- (f);
\end{tikzpicture}
\hspace{4mm}
\begin{tikzpicture}[node distance=28pt]
\node (a) {$a$};
\node[below of=a] (b) {$b$};
\draw (a) -- (b);
\node[below left of=b] (c) {$\ $};
\node[below right of=b] (d) {$d$};
\draw (b) -- (d);
\node[below left of=d] (e) {$e$};
\draw (d) -- (e);
\node[below right of=d] (f) {$f$};
\draw (d) -- (f);
\end{tikzpicture}
\hspace{4mm}
\begin{tikzpicture}[node distance=28pt]
\node (a) {$a$};
\node[below of=a] (b) {$b$};
\draw (a) -- (b);
\node[below left of=b] (c) {$c$};
\draw (b) -- (c);
\node[below right of=b] (d) {$d$};
\draw (b) -- (d);
\node[below left of=d] (e) {$\ $};
\node[below right of=d] (f) {$\ $};
\end{tikzpicture}
}
\vspace*{3mm}

\scalebox{0.8}
{
\begin{tikzpicture}[node distance=28pt]
\node (a) {$a$};
\node[below of=a] (b) {$b$};
\draw (a) -- (b);
\node[below left of=b] (c) {$\ $};
\node[below right of=b] (d) {$d$};
\draw (b) -- (d);
\node[below left of=d] (e) {$e$};
\draw (d) -- (e);
\node[below right of=d] (f) {$\ $};
\end{tikzpicture}
\hspace{0mm}
\begin{tikzpicture}[node distance=28pt]
\node (a) {$a$};
\node[below of=a] (b) {$b$};
\draw (a) -- (b);
\node[below left of=b] (c) {$\ $};
\node[below right of=b] (d) {$d$};
\draw (b) -- (d);
\node[below left of=d] (e) {$\ $};
\node[below right of=d] (f) {$f$};
\draw (d) -- (f);
\end{tikzpicture}
\hspace{0mm}
\begin{tikzpicture}[node distance=28pt]
\node (a) {$a$};
\node[below of=a] (b) {$b$};
\draw (a) -- (b);
\node[below left of=b] (c) {$\ $};
\node[below right of=b] (d) {$d$};
\draw (b) -- (d);
\node[below left of=d] (e) {$\ $};
\node[below right of=d] (f) {$\ $};
\end{tikzpicture}
\hspace{0mm}
\begin{tikzpicture}[node distance=28pt]
\node (a) {$a$};
\node[below of=a] (b) {$b$};
\draw (a) -- (b);
\node[below left of=b] (c) {$c$};
\draw (b) -- (c);
\node[below right of=b] (d) {$\ $};
\node[below left of=d] (e) {$\ $};
\node[below right of=d] (f) {$\ $};
\end{tikzpicture}
\hspace{0mm}
\begin{tikzpicture}[node distance=28pt]
\node (a) {$a$};
\node[below of=a] (b) {$b$};
\draw (a) -- (b);
\node[below left of=b] (c) {$\ $};
\node[below right of=b] (d) {$\ $};
\node[below left of=d] (e) {$\ $};
\node[below right of=d] (f) {$\ $};
\end{tikzpicture}
\hspace{0mm}
\begin{tikzpicture}
[node distance=28pt]
\node (a) {$a$};
\node[below of=a] (b) {$\ $};
\node[below left of=b] (c) {$\ $};
\node[below right of=b] (d) {$\ $};
\node[below left of=d] (e) {$\ $};
\node[below right of=d] (f) {$\ $};
\end{tikzpicture}
}
\vspace*{-3mm}
\caption{The admissible cuts of the syntax tree of Figure~\ref{fig:task-and-shuffle}.}
\label{fig:substructures}
\end{center}
\end{figure}
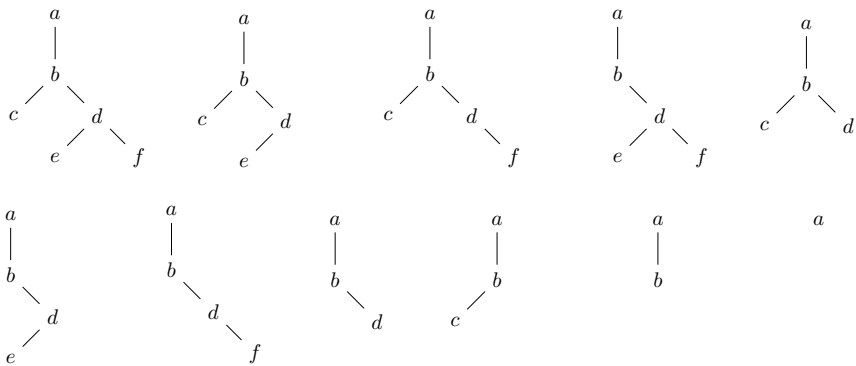

The notion of \emph{admissible cut} has been already studied in algebraic
 combinatorics, see for example~\cite{CK98}. The novelty here is the consideration
 of the increasingly labelled variant.

\begin{definition}
Let $T$ be a tree of size $n$. An \textbf{admissible cut} of $T$
of size $k=n-i$  $(0\leq i < n)$ is a tree obtained by starting with $T$ and removing recursively~$i$ leaves from it.
An \textbf{increasing admissible cut}  of $T$ of size $k$ is an admissible cut of size~$k$ of $T$ that is increasingly
labelled.
\end{definition}

Figure~\ref{fig:substructures} depicts the set of all admissible cuts
for the syntax tree $T$ of Figure~\ref{fig:task-and-shuffle}. 
We remark that the tree $T$ is itself an admissible cut of $T$.

To establish a link with increasing admissible cuts, we first make a simple
albeit important observation.

\begin{proposition}
Let $T$ be a syntax tree of size $n$. Any run prefix of length $k$ $(1 \leq k \leq n)$ in $\Shuf(T)$ is uniquely encoded
by an admissible cut of $T$ of size $k$.
\end{proposition}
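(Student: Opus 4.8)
The plan is to make the encoding explicit and to prove that it is well-defined and single-valued. Recall from the preceding proposition that every branch of $\Shuf(T)$ is a linear extension of the tree-poset $(A,\prec)$; hence a run prefix of length $k$ is nothing but the sequence $\langle a_1,\dots,a_k\rangle$ of the first $k$ actions fired along some linear extension. To such a prefix I would associate the set $S=\{a_1,\dots,a_k\}$ of actions already fired, viewed as a sub-structure of $T$, and then show that $S$ is precisely an admissible cut of $T$ of size $k$.

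First I would check that $S$ is an order ideal (down-set) of $(A,\prec)$ containing the root. Since the root of $T$ is the unique $\prec$-minimal element, it must be fired first in any run, so $a_1$ is the root and $S\neq\emptyset$; and if $a_j\in S$ has parent $a$ in $T$, then $a\prec a_j$, so $a$ occurs before $a_j$ in the linear extension, hence among the first $k$ fired actions, i.e. $a\in S$. As the $a_i$ are pairwise distinct (each action is fired exactly once along a branch), we get $|S|=k$.

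Next I would establish that the order ideals of size $k$ are exactly the admissible cuts of size $k$. The complement $A\setminus S$ is a filter (up-set): if a node is unfired, then all its $T$-descendants are unfired as well, because a parent must precede each of its children in every linear extension. Thus $A\setminus S$ is a union of complete sub-trees of $T$ hanging below $S$, and its $i=n-k$ nodes can be deleted one at a time, each as a current leaf, from the bottom up, without ever being forced to remove a node of $S$. This is exactly the recursive leaf-removal from the definition of admissible cuts, so $S$ is an admissible cut of size $k$, and conversely every admissible cut arises as such an ideal. The map $\langle a_1,\dots,a_k\rangle\mapsto S$ therefore lands in the admissible cuts of size $k$, and since $S$ is read off directly from the prefix, the admissible cut encoding a given run prefix is unique.

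The only delicate point is the equivalence between the two descriptions of an admissible cut: the order-theoretic one (a down-set of $(A,\prec)$) that arises naturally from run prefixes, and the operational one (recursive removal of leaves) used in the definition. I would settle this by noting that a node of $A\setminus S$ is \emph{removable as a current leaf} exactly when all of its $T$-descendants have already been removed, which is always achievable precisely because $A\setminus S$ is closed downward under the child relation; an easy induction on $i=n-k$ formalises this. As a backup, in case one prefers to argue directly on the semantic tree rather than on its branches, I would give a purely inductive route following the recursive definition of $\Shuf$: a single contraction $\contract{T}{j}$ promotes one child of the root and thereby peels off exactly one node of the remaining tree, so an induction on $k$ shows that the set of fired actions is maintained as an admissible cut at every level of $\Shuf(T)$.
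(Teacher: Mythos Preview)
Your argument is correct. The paper's own proof is the short induction on $k$ that you mention only as a backup at the end: it starts from the root as the unique admissible cut of size~$1$, and for the step observes that extending a prefix $\sigma_k$ by one action $\alpha$ forces $\alpha$ to be a child of some node of $S(\sigma_k)$, so $S(\sigma_{k+1})=S(\sigma_k)\cup\{\alpha\}$ is again an admissible cut; uniqueness is immediate from the construction.

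Your main route is genuinely different and a bit more structural: you identify the set of fired actions with a down-set of the tree-poset and then prove the equivalence ``down-set containing the root $\Leftrightarrow$ admissible cut'', arguing that the complement is an up-set and can therefore be peeled off leaf by leaf. This buys you a clean order-theoretic characterisation of admissible cuts that the paper never states explicitly, and it makes the later proposition (run prefixes $\leftrightarrow$ increasing labellings of admissible cuts) essentially a tautology. The paper's inductive argument, on the other hand, stays closer to the recursive definition of $\Shuf$ via contractions and avoids having to justify the down-set/admissible-cut equivalence separately. Both are short; yours is the more informative one.
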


\begin{proof}
We proceed by finite induction on $k$.
For $k=1$ there is a single run prefix of length $k=1$ with the root of $T$ and the corresponding
admissible cut is the root node, which only has one increasing labelling. Now suppose that
the property holds for run prefixes of length $k,~1<k< n$, let us show that it also holds for run prefixes of  length~$k+1$.
By hypothesis of induction, any run prefix $\sigma_k$ of length $k$ is encoded by a given admissible cut of size $k$.
 Let us denote by $S(\sigma_k)$ this admissible cut. Now, 
any prefix $\sigma_{k+1}$ of length $k+1$ is obtained by appending an action $\alpha$ to a prefix $\sigma_k$ of length $k$.
For $\sigma_{k+1}$ to be a valid prefix, $\alpha$ must corresponds to a node in $T$ that is a direct child of one of the nodes of $S(\sigma_k)$. Thus we obtain
a unique $S(\sigma_{k+1})$ as $S(\sigma_k)$ completed by a single leaf $\alpha$.
\end{proof}

For example the run prefixes $\langle a,b,c,d \rangle$ and $\langle a,b,d,e \rangle$ are encoded by both first admissible cuts of size~$4$
depicted on Figure~\ref{fig:substructures}. 

This result leads to a fundamental connection with increasing admissible cuts.

\begin{proposition}
Let $T$ be a syntax tree of size $n$. The number of run prefixes of length $k$ $(1 \leq k \leq n)$ in $\Shuf(T)$
is the number of increasing labellings of the admissible cuts of $T$ of size $k$.
\end{proposition}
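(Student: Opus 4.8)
The plan is to refine the map from run prefixes to admissible cuts supplied by the previous proposition into a \emph{fibered} decomposition, and then to sum over the fibers. First I would fix an admissible cut $S$ of $T$ of size $k$ and describe exactly which run prefixes of length $k$ are sent to $S$ under the encoding $\sigma_k \mapsto S(\sigma_k)$ established just above.

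The key observation is that a run prefix $\langle a_1,\dots,a_k\rangle$ of $\Shuf(T)$ is nothing but an ordering of a set of $k$ nodes of $T$ that is simultaneously \emph{downward closed} for the tree-poset $(A,\prec)$ --- if a node appears then so does its parent, since by the contraction scheme of Definition~\ref{def:shuffle} each step can only reveal children of already-visited nodes --- and \emph{compatible} with $\prec$, in the sense that each parent is listed before its children. The first condition says precisely that the underlying node set is an admissible cut, namely $S=S(\sigma_k)$; the second condition says that, among all run prefixes whose underlying cut is $S$, the admissible ones are exactly the linear extensions of the poset induced on $S$. Hence the fiber over $S$ is in bijection with the set of linear extensions of $S$.

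Next I would invoke the one-to-one correspondence between linear extensions of a tree-poset and increasing labellings of the underlying plane tree (Section~\ref{sec:widths}): labelling each node of $S$ by its rank $1,\dots,k$ in the ordering yields an increasing tree on $S$, and this assignment is a bijection. Consequently the fiber over $S$ has cardinality equal to the number of increasing labellings of $S$. Summing over all admissible cuts of size $k$, which by the previous proposition partition the run prefixes of length $k$, gives
\[
\#\{\text{run prefixes of length }k\}
 = \sum_{\substack{S\ \text{admissible cut of } T\\ |S|=k}} \bigl(\text{number of increasing labellings of } S\bigr),
\]
which is exactly the asserted count.

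The main obstacle I anticipate is justifying the downward-closure condition: one must argue that the set of nodes visited by a run prefix is \emph{always} downward closed, so that it genuinely forms an admissible cut rather than an arbitrary $k$-subset of nodes. This is where the structure of the contraction operator (Definition~\ref{def:child-contraction}) does the real work --- a node can be appended to a prefix only once its parent has already been contracted away --- and it is precisely the content that the previous proposition establishes by induction on $k$. Once that is secured, the partition of run prefixes by underlying cut and the bijection of each fiber with increasing labellings are routine.
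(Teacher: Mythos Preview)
Your proposal is correct and follows essentially the same route as the paper. The paper's own proof is a single sentence (``each admissible cut is a tree-poset and thus the number of runs it encodes is the number of its linear extensions''), and what you have written is simply a careful unpacking of that sentence: the fiber over a fixed admissible cut $S$ consists of the linear extensions of $S$, these are the increasing labellings, and you sum over cuts. Your discussion of the downward-closure obstacle is apt but, as you note yourself, already handled by the preceding proposition.
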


\begin{proof}
This is obtained by a trivial order-theoretic argument. Each admissible cut is a tree-poset and thus the number
of run it encodes is the number of its linear extensions.
\end{proof}

For example, there are three admissible cuts of size $4$ in Figure~\ref{fig:substructures}. The first one admits two increasing labellings and the other ones have a single labelling. This gives $2+1+1=4$ run prefixes of length $4$ for the syntax tree $T$ of Figure~\ref{fig:task-and-shuffle}. 
Now, we observe that this is also the number of nodes at level $3$ in the corresponding semantic tree. And this of course generalizes: the number of run prefixes of length $k$  corresponds to the number of nodes at level $k-1$ in the semantic tree.  

From this we can characterize precisely the number of nodes by level thanks to a generalization of the hook-length formula.

\begin{corollary}\label{sizeshuff}
Let $T$ be a process tree of size $n$. The number of nodes at level $n-1-i$  $(0 \leq i < n)$ of $\Shuf(T)$ is:
\[n^i_T = \sum_{\substack{S \text{ admissible cut of } T \\ |S|=n-i}} \ell_S,\]
\indent \hfill where $\ell_S$ is the hook-length formula applied to the admissible cut $S$ (cf. Fact~\ref{fact:hook}).

Moreover, the total number of nodes of $\Shuf(T)$ is:
\[n_T = \sum^{n-1}_{i=0} n^i_T = \sum_{S \text{ admissible cut of } T} \ell_S.\]
\end{corollary}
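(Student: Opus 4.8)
The plan is to chain together the two preceding propositions with the hook-length formula, taking care only of the bookkeeping between tree levels and run-prefix lengths. The combinatorial content is already in place, so this corollary is essentially an assembly of earlier results.

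First I would fix the correspondence between levels and prefix lengths. Since every branch of $\Shuf(T)$ has length $n-1$ (the semantic tree has height $n-1$ and is balanced), the nodes sitting at level $k-1$ are exactly in bijection with the run prefixes of length $k$, as observed just before the statement. Writing $k=n-i$ with $0\leq i<n$, the nodes at level $n-1-i$ therefore correspond precisely to the run prefixes of length $n-i$, so that $n^i_T$ is the number of such prefixes.

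Next I would invoke the preceding proposition: the number of run prefixes of length $k$ equals the number of increasing labellings of the admissible cuts of $T$ of size $k$. For a single admissible cut $S$, the number of its increasing labellings is the number of linear extensions of the associated tree-poset, which by Corollary~\ref{algo:hook-length} (that is, Fact~\ref{fact:hook}) is exactly $\ell_S$. Summing this over all admissible cuts $S$ of size $n-i$ yields the first displayed formula for $n^i_T$.

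Finally, the total number of nodes is obtained by summing over all levels, $n_T=\sum_{i=0}^{n-1} n^i_T$. The only point to verify is that, as $i$ ranges over $0\leq i<n$, the families of admissible cuts of size $n-i$ partition the whole set of admissible cuts of $T$, each counted once: every admissible cut has size between $1$ and $n$, and each such size is attained, so the double sum collapses into a single sum over all admissible cuts, giving the second formula. I do not expect any genuine obstacle here; the only care needed is the off-by-one relation between the level index $n-1-i$ and the prefix length $n-i$, and checking that grouping admissible cuts by size enumerates each exactly once.
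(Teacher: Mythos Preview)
Your proposal is correct and is exactly the route the paper takes: the corollary is stated without its own proof because it follows immediately from the two preceding propositions together with the observation that run prefixes of length $k$ are the nodes at level $k-1$, and from Fact~\ref{fact:hook} identifying the number of increasing labellings of each admissible cut with $\ell_S$. Your handling of the off-by-one between level $n-1-i$ and prefix length $n-i$, and of the partition of admissible cuts by size, is precisely the bookkeeping the paper leaves implicit.
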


\subsection{Level decomposition}

Before working an exact formula for the mean number of nodes by level,
 we can take advantage of Corollary~\ref{sizeshuff} to compute the shape
 of some typical semantic trees.

\subsubsection{Experimental study}

Our experiments consists in generating
 uniformly at random some syntax trees (using our \emph{arbogen tool}\footnote{\url{https://github.com/fredokun/arbogen}})
 of size $n$ for $n$ not too small. Then we can compute $n_T$ as defined above by first listing all the
admissible cuts of $T$.

However we cannot take syntax tree with a size $n$ very large, given the following result.

\begin{observation}\label{prop:mean_nb_substructures}
The mean number $\bar{m}_n$ of admissible cuts of trees of size $n$ satisfies:
\[\bar{m}_n 
\sim_{n\rightarrow\infty} \frac{1}{\sqrt{15}}\left(\frac{5}{2}\right)^{2n}.\]
\end{observation}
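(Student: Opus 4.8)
The plan is to count, for each plane tree $T$, its number $m_T$ of admissible cuts, sum this over trees of size $n$, and divide by $C_n$; the average $\bar m_n$ then becomes a coefficient-extraction problem solved by singularity analysis. First I would reduce the combinatorics to a clean recursion. An admissible cut is exactly a connected subtree containing the root that is closed under taking parents: removing leaves recursively can never detach an interior node before all of its descendants are gone. Hence a cut of $T$ is obtained by keeping the root and, independently for each child subtree $T_j$, either discarding $T_j$ entirely or retaining one of its own admissible cuts. This yields the multiplicative recursion
\[ m_T = \prod_{T_j \text{ child subtree of } T}\bigl(1 + m_{T_j}\bigr), \]
with $m_T = 1$ for a leaf (empty product).

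Next I would pass to generating functions. Writing $M(z) = \sum_T m_T\, z^{|T|}$, the root carries a factor $z$ and its children form a (possibly empty) sequence in which each slot contributes the weight $1 + m_{T_j}$; summing the $1$ over all subtrees gives $C(z)$ and summing the $m_{T_j}$ gives $M(z)$, so each child slot contributes $C(z)+M(z)$. The geometric sum over the sequence length, exactly as in the proof of Theorem~\ref{theo:growth}, gives
\[ M(z) = \frac{z}{\,1 - C(z) - M(z)\,}, \]
equivalently $M\,(1-C-M)=z$, a quadratic in $M$. Solving it and selecting the branch with $M(0)=0$ yields, with $s=\sqrt{1-4z}$ (so that $C=\tfrac{1-s}{2}$),
\[ M(z) = \frac{1 + s - \sqrt{(5s-3)(s+1)}}{4}, \]
which one can cross-check against $M_1=1,\ M_2=2,\ M_3=7,\dots$ from the recurrence.

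Then comes the analytic heart of the argument. The candidate singularities are $z=\tfrac14$ (where $s=0$) and the branch points of the outer radical, i.e. where $(5s-3)(s+1)=0$; the relevant one is $5s-3=0$, that is $s=\tfrac35$, giving $z=\rho=\tfrac{4}{25}<\tfrac14$. Because $\rho<\tfrac14$, the inner function $s$ is \emph{analytic} at $\rho$, with $s(\rho)=\tfrac35$ and $s'(\rho)=-\tfrac{10}{3}$, so the only singular behaviour comes from the outer square root. Expanding locally gives $(5s-3)(s+1)\sim \tfrac{80}{3}(\rho-z)$, hence a square-root singularity $M(z) = g(z) - \tfrac14\sqrt{\tfrac{80}{3}\rho}\,\sqrt{1-z/\rho}+\cdots$ with $g$ analytic at $\rho$ and $\sqrt{\tfrac{80}{3}\rho}=\tfrac{8}{\sqrt{15}}$. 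A standard transfer theorem then extracts $m_n=[z^n]M(z)$ with exponential order $\rho^{-n}=(5/2)^{2n}$ and a constant carrying the $\sqrt{15}$.

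Finally I would normalise by $C_n$. Using Fact~\ref{fact:Catalan} for the Catalan asymptotics, the subexponential and $\sqrt{\pi}$ factors of $m_n$ and $C_n$ simplify against each other, and assembling the constants produces the announced estimate for $\bar m_n=m_n/C_n$, governed by the singularity $\rho=\tfrac{4}{25}$. I expect the main obstacle to be precisely this singularity analysis of the nested radical: one must verify that $\rho=4/25$ (and not $z=\tfrac14$) is the dominant singularity, and that the inner $\sqrt{1-4z}$ contributes only an analytic factor there, so that the local expansion of the composite square root, and hence the exact constant, is computed correctly. The remaining steps — the quadratic solution and the transfer theorem — are routine.
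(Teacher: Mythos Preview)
Your proposal is correct and follows essentially the same route as the paper: you derive the same functional equation $M(z)=z/(1-C(z)-M(z))$ (the paper obtains it directly from the symbolic specification $\M=\Z\times\Seq(\M\cup\C)$, whereas you arrive at it via the multiplicative recursion $m_T=\prod_j(1+m_{T_j})$, but the two are equivalent), solve it to the same nested radical, identify the dominant singularity $\rho=4/25$, and conclude by a square-root transfer and division by $C_n$. Your factored form $(5s-3)(s+1)$ of the inner radicand and the explicit local expansion are in fact more detailed than what the paper spells out.
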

\begin{proof}
Let us denote by $M(z)$ the ordinary generating function enumerating the multiset of admissible cuts
of all trees. More precisely, we get $M(z)=\sum_{n\in\mathbb{N}}M_nz^n$ where
 $M_n=\sum_{T;|T|=n}\sum_{S \text{ adm. cut of } T} 1$.
The tag $\Z$ marks the nodes of the tree carrying the admissible cut.
The generating function $C(z)$ enumerates all trees.
The specification of $\M$ is $\Z \times Seq( \M \cup \C)$.
In fact, an admissible cut is a root and a sequence of children that are either admissible cuts,
or trees that corresponds to a branch of the original tree that has entirely disappeared.
Consequently, $M(z)$ satisfies the following equation that can be easily solve:
\[M(z) = \frac{z}{1-M(z)-C(z)}, \hspace{1cm} 
M(z) = \frac{1 + \sqrt{1-4z} - \sqrt{2 - 20z + 2\sqrt{1-4z}}}{4}.\]
The singularities of $M(z)$ are  $1/4$ and $4/25$: the latter one is the dominant.
The generating function is analytic in a $\Delta$-domain around $4/25$ because
of the square-root type of the dominant singularity.
By using transfer lemmas~\cite[p.~392]{FS09}, we get the asymptotic behaviour.
\end{proof}

Using the same kind of method than one of the following (Section~\ref{sec:size}),
we can exhibit the P-recurrence satisfied by the cumulative number of 
admissible cuts $m_n$:
\begin{multline*}
(-500n+2000n^3)m_n+(120-220n-1380n^2-920n^3)m_{n+1}-(1488+1626n+387n^2-21n^3)m_{n+2}\\
 + (1104+1088n+351n^2+37n^3)m_{n+3}-(168+146n+42n^2+4n^3)m_{n+4} =0,
\end{multline*}
with $m_0= 0, m_1= 1, m_2 = 2$ and $m_3= 7$.
This sequence is registered by OEIS at A007852.

As a consequence we must be particularly careful when computing the shape of a semantic tree
 in practice using our generalization of the hook-length formula for increasing admissible cuts.
However, for syntax trees of a size $n\leq 40$ we are able to compute the level
decomposition within a couple of days using a fast computer\footnote{The computer used for the experiment is a bi-Xeon 5420 machine
with 8  cores running at 2.5Ghz each, equipped with 20GB of RAM and running linux.}. This must be
compared to the mean width of these trees: $\bar{W}^0_{40} > 1.48\cdot10^{36}$ !

\begin{figure}[htb]
\begin{center}
\scalebox{0.7}{
\begin{tabular}{c c | c}
\includegraphics[height=80mm, width=50mm]{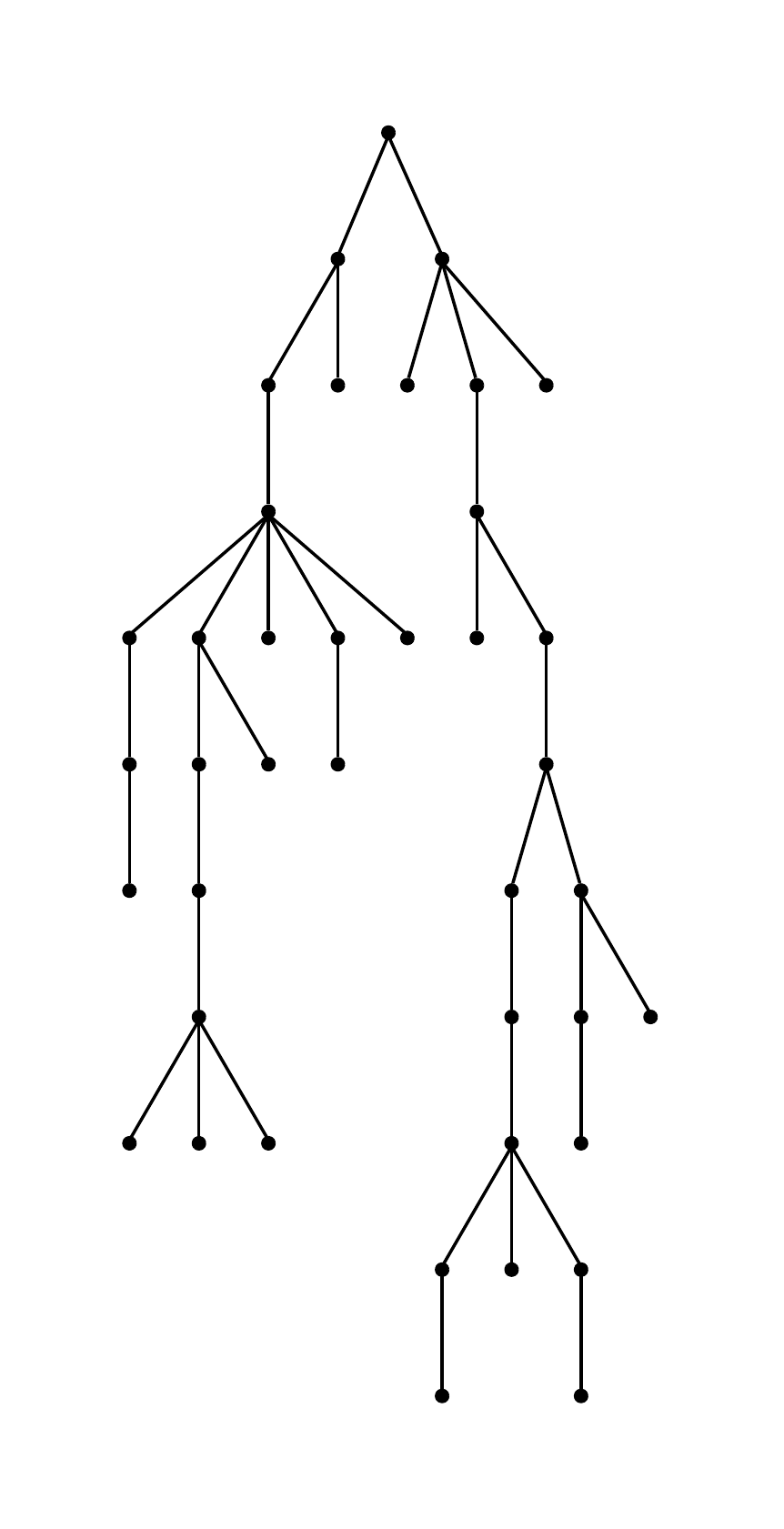} 
&
\includegraphics[height=80mm, width=80mm]{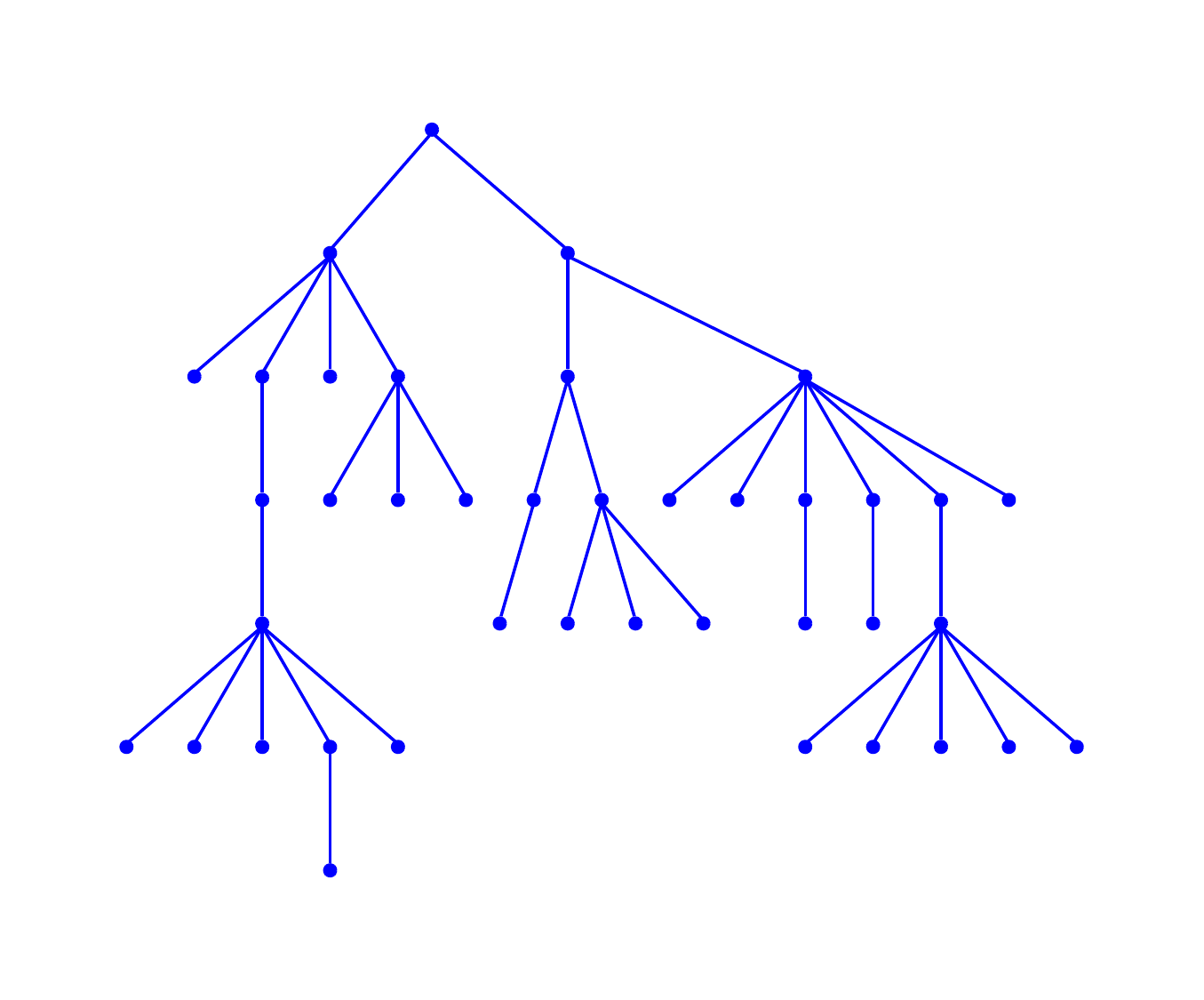}
&
\begin{tikzpicture}[scale=0.2]
\draw[ultra thin, gray!60] ( 0 , 0 ) -- ( 0 , 39 );
\draw[ultra thin, gray!60] ( 3.80000000000000 , 0 ) -- ( 3.80000000000000 , 39 );
\draw[ultra thin, gray!60] ( 6.33333333333333 , 0 ) -- ( 6.33333333333333 , 39 );
\draw[ultra thin, gray!60] ( 8.02222222222222 , 0 ) -- ( 8.02222222222222 , 39 );
\draw[ultra thin, gray!60] ( 9.14814814814815 , 0 ) -- ( 9.14814814814815 , 39 );
\draw[ultra thin, gray!60] ( 9.89876543209877 , 0 ) -- ( 9.89876543209877 , 39 );
\draw[ultra thin, gray!60] ( 10.3991769547325 , 0 ) -- ( 10.3991769547325 , 39 );
\draw[ultra thin, gray!60] ( 10.7327846364883 , 0 ) -- ( 10.7327846364883 , 39 );
\draw[ultra thin, gray!60] ( 10.9551897576589 , 0 ) -- ( 10.9551897576589 , 39 );
\draw[ultra thin, gray!60] ( 11.1034598384393 , 0 ) -- ( 11.1034598384393 , 39 );
\draw[ultra thin, gray!60] ( 11.2023065589595 , 0 ) -- ( 11.2023065589595 , 39 );
\draw[ultra thin, gray!60] ( 11.2682043726397 , 0 ) -- ( 11.2682043726397 , 39 );
\draw[ultra thin, gray!60] ( 11.3121362484264 , 0 ) -- ( 11.3121362484264 , 39 );
\draw[ultra thin, gray!60] ( 11.3414241656176 , 0 ) -- ( 11.3414241656176 , 39 );
\draw[ultra thin, gray!60] ( 11.3609494437451 , 0 ) -- ( 11.3609494437451 , 39 );
\draw[ultra thin, gray!60] ( 11.3739662958301 , 0 ) -- ( 11.3739662958301 , 39 );
\draw[ultra thin, gray!60] ( 11.3826441972200 , 0 ) -- ( 11.3826441972200 , 39 );
\draw[ultra thin, gray!60] ( 11.3884294648134 , 0 ) -- ( 11.3884294648134 , 39 );
\draw[ultra thin, gray!60] ( 11.3922863098756 , 0 ) -- ( 11.3922863098756 , 39 );
\draw[ultra thin, gray!60] ( 11.3948575399170 , 0 ) -- ( 11.3948575399170 , 39 );
\draw[ultra thin, gray!60] ( 11.3965716932780 , 0 ) -- ( 11.3965716932780 , 39 );
\draw[ultra thin, gray!60] ( 11.3977144621854 , 0 ) -- ( 11.3977144621854 , 39 );
\draw[ultra thin, gray!60] ( 11.3984763081236 , 0 ) -- ( 11.3984763081236 , 39 );
\draw[ultra thin, gray!60] ( 11.3989842054157 , 0 ) -- ( 11.3989842054157 , 39 );
\draw[ultra thin, gray!60] ( 11.3993228036105 , 0 ) -- ( 11.3993228036105 , 39 );
\draw[ultra thin, gray!60] ( 11.3995485357403 , 0 ) -- ( 11.3995485357403 , 39 );
\draw[ultra thin, gray!60] ( 11.3996990238269 , 0 ) -- ( 11.3996990238269 , 39 );
\draw[ultra thin, gray!60] ( 11.3997993492179 , 0 ) -- ( 11.3997993492179 , 39 );
\draw[ultra thin, gray!60] ( 11.3998662328119 , 0 ) -- ( 11.3998662328119 , 39 );
\draw[ultra thin, gray!60] ( 11.3999108218746 , 0 ) -- ( 11.3999108218746 , 39 );
\draw[ultra thin, gray!60] ( 11.3999405479164 , 0 ) -- ( 11.3999405479164 , 39 );
\draw[ultra thin, gray!60] ( 11.3999603652776 , 0 ) -- ( 11.3999603652776 , 39 );
\draw[ultra thin, gray!60] ( 11.3999735768517 , 0 ) -- ( 11.3999735768517 , 39 );
\draw[ultra thin, gray!60] ( 11.3999823845678 , 0 ) -- ( 11.3999823845678 , 39 );
\draw[ultra thin, gray!60] ( 11.3999882563785 , 0 ) -- ( 11.3999882563785 , 39 );
\draw[ultra thin, gray!60] ( 11.3999921709190 , 0 ) -- ( 11.3999921709190 , 39 );
\draw[ultra thin, gray!60] ( 11.3999947806127 , 0 ) -- ( 11.3999947806127 , 39 );
\draw[ultra thin, gray!60] ( 11.3999965204085 , 0 ) -- ( 11.3999965204085 , 39 );
\draw[ultra thin, gray!60] ( 11.3999976802723 , 0 ) -- ( 11.3999976802723 , 39 );
\draw[ultra thin, gray!60] ( 11.3999984535149 , 0 ) -- ( 11.3999984535149 , 39 );
\draw[ultra thin, gray!60] ( 11.3999989690099 , 0 ) -- ( 11.3999989690099 , 39 );
\draw[ultra thin, gray!60] ( 11.3999993126733 , 0 ) -- ( 11.3999993126733 , 39 );
\draw[ultra thin, gray!60] ( 11.3999995417822 , 0 ) -- ( 11.3999995417822 , 39 );
\draw[ultra thin, gray!60] ( 11.3999996945215 , 0 ) -- ( 11.3999996945215 , 39 );
\draw[ultra thin, gray!60] ( 11.3999997963476 , 0 ) -- ( 11.3999997963476 , 39 );
\draw[ultra thin, gray!60] ( 11.3999998642318 , 0 ) -- ( 11.3999998642318 , 39 );
\draw[ultra thin, gray!60] ( 11.3999999094878 , 0 ) -- ( 11.3999999094878 , 39 );
\draw[ultra thin, gray!60] ( 11.3999999396586 , 0 ) -- ( 11.3999999396586 , 39 );
\draw[ultra thin, gray!60] ( 11.3999999597724 , 0 ) -- ( 11.3999999597724 , 39 );
\draw[ultra thin, gray!60] ( 11.3999999731816 , 0 ) -- ( 11.3999999731816 , 39 );
\draw[ultra thin, gray!60] ( 11.3999999821211 , 0 ) -- ( 11.3999999821211 , 39 );
\draw[ultra thin, gray!60] ( 11.3999999880807 , 0 ) -- ( 11.3999999880807 , 39 );
\draw[ultra thin, gray!60] ( 11.3999999920538 , 0 ) -- ( 11.3999999920538 , 39 );
\draw[ultra thin, gray!60] ( 11.3999999947025 , 0 ) -- ( 11.3999999947025 , 39 );
\draw[ultra thin, gray!60] ( 11.3999999964684 , 0 ) -- ( 11.3999999964684 , 39 );
\draw[ultra thin, gray!60] ( 11.3999999976456 , 0 ) -- ( 11.3999999976456 , 39 );
\draw[ultra thin, gray!60] ( 11.3999999984304 , 0 ) -- ( 11.3999999984304 , 39 );
\draw[ultra thin, gray!60] ( 11.3999999989536 , 0 ) -- ( 11.3999999989536 , 39 );
\draw[ultra thin, gray!60] ( 11.3999999993024 , 0 ) -- ( 11.3999999993024 , 39 );
\draw[ultra thin, gray!60] ( -4 , 0 ) -- ( -4 , 39 );
\draw[ultra thin, gray!60] ( -6.60586319218241 , 0 ) -- ( -6.60586319218241 , 39 );
\draw[ultra thin, gray!60] ( -8.30349393627519 , 0 ) -- ( -8.30349393627519 , 39 );
\draw[ultra thin, gray!60] ( -9.40944230376234 , 0 ) -- ( -9.40944230376234 , 39 );
\draw[ultra thin, gray!60] ( -10.1299298395846 , 0 ) -- ( -10.1299298395846 , 39 );
\draw[ultra thin, gray!60] ( -10.5993028270909 , 0 ) -- ( -10.5993028270909 , 39 );
\draw[ultra thin, gray!60] ( -10.9050832749778 , 0 ) -- ( -10.9050832749778 , 39 );
\draw[ultra thin, gray!60] ( -11.1042887784872 , 0 ) -- ( -11.1042887784872 , 39 );
\draw[ultra thin, gray!60] ( -11.2340643508060 , 0 ) -- ( -11.2340643508060 , 39 );
\draw[ultra thin, gray!60] ( -11.3186086975935 , 0 ) -- ( -11.3186086975935 , 39 );
\draw[ultra thin, gray!60] ( -11.3736864479436 , 0 ) -- ( -11.3736864479436 , 39 );
\draw[ultra thin, gray!60] ( -11.4095677185301 , 0 ) -- ( -11.4095677185301 , 39 );
\draw[ultra thin, gray!60] ( -11.4329431391075 , 0 ) -- ( -11.4329431391075 , 39 );
\draw[ultra thin, gray!60] ( -11.4481714261287 , 0 ) -- ( -11.4481714261287 , 39 );
\draw[ultra thin, gray!60] ( -11.4580921342858 , 0 ) -- ( -11.4580921342858 , 39 );
\draw[ultra thin, gray!60] ( -11.4645551363425 , 0 ) -- ( -11.4645551363425 , 39 );
\draw[ultra thin, gray!60] ( -11.4687655611352 , 0 ) -- ( -11.4687655611352 , 39 );
\draw[ultra thin, gray!60] ( -11.4715085088829 , 0 ) -- ( -11.4715085088829 , 39 );
\draw[ultra thin, gray!60] ( -11.4732954455263 , 0 ) -- ( -11.4732954455263 , 39 );
\draw[ultra thin, gray!60] ( -11.4744595736328 , 0 ) -- ( -11.4744595736328 , 39 );
\draw[ultra thin, gray!60] ( -11.4752179632787 , 0 ) -- ( -11.4752179632787 , 39 );
\draw[ultra thin, gray!60] ( -11.4757120281946 , 0 ) -- ( -11.4757120281946 , 39 );
\draw[ultra thin, gray!60] ( -11.4760338945893 , 0 ) -- ( -11.4760338945893 , 39 );
\draw[ultra thin, gray!60] ( -11.4762435795370 , 0 ) -- ( -11.4762435795370 , 39 );
\draw[ultra thin, gray!60] ( -11.4763801821088 , 0 ) -- ( -11.4763801821088 , 39 );
\draw[ultra thin, gray!60] ( -11.4764691740122 , 0 ) -- ( -11.4764691740122 , 39 );
\draw[ultra thin, gray!60] ( -11.4765271491936 , 0 ) -- ( -11.4765271491936 , 39 );
\draw[ultra thin, gray!60] ( -11.4765649180415 , 0 ) -- ( -11.4765649180415 , 39 );
\draw[ultra thin, gray!60] ( -11.4765895231541 , 0 ) -- ( -11.4765895231541 , 39 );

\draw (0,-0.2) node[below]{$0$};
\draw (4,0) node[below]{$10^{13}$};
\draw (8,0) node[below]{$10^{23}$};
\draw (11.5,0) node[below]{$10^{36}$};

\draw[ultra thin, gray!60] ( -11.5,  0 ) -- ( 11.5,  0 );
\draw[ultra thin, gray!60] ( -11.5,  4 ) -- ( 11.5,  4 );
\draw[ultra thin, gray!60] ( -11.5,  8 ) -- ( 11.5,  8 );
\draw[ultra thin, gray!60] ( -11.5,  12 ) -- ( 11.5,  12 );
\draw[ultra thin, gray!60] ( -11.5,  16 ) -- ( 11.5,  16 );
\draw[ultra thin, gray!60] ( -11.5,  20 ) -- ( 11.5,  20 );
\draw[ultra thin, gray!60] ( -11.5,  24 ) -- ( 11.5,  24 );
\draw[ultra thin, gray!60] ( -11.5,  28 ) -- ( 11.5,  28 );
\draw[ultra thin, gray!60] ( -11.5,  32 ) -- ( 11.5,  32 );
\draw[ultra thin, gray!60] ( -11.5,  36 ) -- ( 11.5,  36 );

\draw (-12.2,39) node[left]{$0$};
\draw (-12.2,32) node[left]{$7$};
\draw (-12.2,24) node[left]{$15$};
\draw (-12.2,16) node[left]{$23$};
\draw (-12.2,8) node[left]{$31$};
\draw (-12.2,0) node[left]{$39$};

\draw[thick](-9.15877553859415, 0) -- (-9.15877553859415, 1) -- (-9.06593795509456, 2) --
(-8.92014361971971, 3) -- (-8.73793681399964, 4) -- (-8.52866051766464, 5) --
(-8.29866472278052, 6) -- (-8.05294455561430, 7) -- (-7.79580736788745, 8) --
(-7.53093621579451, 9) -- (-7.26111730125745, 10) -- (-6.98804497069023,11) -- (-6.71253930458880, 12) -- (-6.43511469923669, 13) --
(-6.15651597474352, 14) -- (-5.87783237238222, 15) -- (-5.60012000450140,
16) -- (-5.32395265302678, 17) -- (-5.04941915161766, 18) --
(-4.77649675252845, 19) -- (-4.50531358103039, 20) -- (-4.23607029169226,
21) -- (-3.96878757503394, 22) -- (-3.70311245643660, 23) --
(-3.43831888232790, 24) -- (-3.17347340091245, 25) -- (-2.90764534086813,
26) -- (-2.64008833795199, 27) -- (-2.37039900239825, 28) --
(-2.09885560581268, 29) -- (-1.82710324750891, 30) -- (-1.55833090236987,
31) -- (-1.29831948822203, 32) -- (-1.05750354575514, 33) --
(-0.841111575487845, 34) -- (-0.639729768059481, 35) -- (-0.449122200851964,
36) -- (-0.271508283125423, 37) -- (-0.0967132018086354, 38) --
(-0.000000000000000, 39) -- (0.000000000000000, 39) -- (0.0967132018086354,
38) -- (0.271508283125423, 37) -- (0.449122200851964, 36) --
(0.639729768059481, 35) -- (0.841111575487845, 34) -- (1.05750354575514,
33) -- (1.29831948822203, 32) -- (1.55833090236987, 31) -- (1.82710324750891,
30) -- (2.09885560581268, 29) -- (2.37039900239825, 28) -- (2.64008833795199,
27) -- (2.90764534086813, 26) -- (3.17347340091245, 25) -- (3.43831888232790,
24) -- (3.70311245643660, 23) -- (3.96878757503394, 22) -- (4.23607029169226,
21) -- (4.50531358103039, 20) -- (4.77649675252845, 19) -- (5.04941915161766,
18) -- (5.32395265302678, 17) -- (5.60012000450140, 16) -- (5.87783237238222,
15) -- (6.15651597474352, 14) -- (6.43511469923669, 13) -- (6.71253930458880,
12) -- (6.98804497069023, 11) -- (7.26111730125745, 10) -- (7.53093621579451,
9) -- (7.79580736788745, 8) -- (8.05294455561430, 7) -- (8.29866472278052, 6) --
(8.52866051766464, 5) -- (8.73793681399964, 4) -- (8.92014361971971, 3) --
(9.06593795509456, 2) -- (9.15877553859415, 1) -- (9.15877553859415, 0);

\draw[thick, blue](-11.2409471004766, 0) -- (-11.2409471004766, 1) -- (-11.1455091585125, 2) --
(-10.9947384368214, 3) -- (-10.8050499653933, 4) -- (-10.5854426555900, 5) --
(-10.3416214205066, 6) -- (-10.0775522188107, 7) -- (-9.79619623698555, 8) --
(-9.49991907602552, 9) -- (-9.19075233751775, 10) -- (-8.87057685470337,
11) -- (-8.54124787111785, 12) -- (-8.20465285998964, 13) --
(-7.86267798291040, 14) -- (-7.51707375246825, 15) -- (-7.16926518258794,
16) -- (-6.82021427676951, 17) -- (-6.47043655193480, 18) --
(-6.12017055526051, 19) -- (-5.76959018104596, 20) -- (-5.41893936755457,
21) -- (-5.06854959007363, 22) -- (-4.71878773341555, 23) --
(-4.37001258369071, 24) -- (-4.02258332448429, 25) -- (-3.67689940643003,
26) -- (-3.33341501502441, 27) -- (-2.99260091031986, 28) --
(-2.65491840422400, 29) -- (-2.32097571596291, 30) -- (-1.99202086230698,
31) -- (-1.67057502024022, 32) -- (-1.36039279113604, 33) --
(-1.06513369829862, 34) -- (-0.787202072576296, 35) -- (-0.527999111893714,
36) -- (-0.290139605425906, 37) -- (-0.0967132018086354, 38) --
(-0.000000000000000, 39) -- (0.000000000000000, 39) -- (0.0967132018086354,
38) -- (0.290139605425906, 37) -- (0.527999111893714, 36) --
(0.787202072576296, 35) -- (1.06513369829862, 34) -- (1.36039279113604, 33) --
(1.67057502024022, 32) -- (1.99202086230698, 31) -- (2.32097571596291, 30) --
(2.65491840422400, 29) -- (2.99260091031986, 28) -- (3.33341501502441, 27) --
(3.67689940643003, 26) -- (4.02258332448429, 25) -- (4.37001258369071, 24) --
(4.71878773341555, 23) -- (5.06854959007363, 22) -- (5.41893936755457, 21) --
(5.76959018104596, 20) -- (6.12017055526051, 19) -- (6.47043655193480, 18) --
(6.82021427676951, 17) -- (7.16926518258794, 16) -- (7.51707375246825, 15) --
(7.86267798291040, 14) -- (8.20465285998964, 13) -- (8.54124787111785, 12) --
(8.87057685470337, 11) -- (9.19075233751775, 10) -- (9.49991907602552, 9) --
(9.79619623698555, 8) -- (10.0775522188107, 7) -- (10.3416214205066, 6) --
(10.5854426555900, 5) -- (10.8050499653933, 4) -- (10.9947384368214, 3) --
(11.1455091585125, 2) -- (11.2409471004766, 1) -- (11.2409471004766, 0) ;

\draw[thick, red, dashed](-11.6209654676476, 0) -- (-11.6209654676476, 1) -- (-11.5260526476784, 2) --
(-11.3764378418456, 3) -- (-11.1886307609987, 4) -- (-10.9717169253753, 5) --
(-10.7314780833053, 6) -- (-10.4719366438370, 7) -- (-10.1960677665525, 8) --
(-9.90617402490231, 9) -- (-9.60410171102600, 10) -- (-9.29137471688983,
11) -- (-8.96928206828212, 12) -- (-8.63893779440435, 13) --
(-8.30132349088075, 14) -- (-7.95731964908815, 15) -- (-7.60772948645458,
16) -- (-7.25329767473529, 17) -- (-6.89472556910989, 18) --
(-6.53268405694691, 19) -- (-6.16782484691539, 20) -- (-5.80079083856706,
21) -- (-5.43222611199216, 22) -- (-5.06278603715824, 23) --
(-4.69314801562449, 24) -- (-4.32402343590891, 25) -- (-3.95617156087744,
26) -- (-3.59041629859524, 27) -- (-3.22766718785912, 28) --
(-2.86894654798733, 29) -- (-2.51542576890516, 30) -- (-2.16847547671384,
31) -- (-1.82973745444543, 32) -- (-1.50123212193657, 33) --
(-1.18552729039416, 34) -- (-0.886019895375025, 35) -- (-0.607445410886010,
36) -- (-0.356905651386429, 37) -- (-0.146308961401111, 38) --
(-0.000000000000000, 39) -- (0.000000000000000, 39) -- (0.146308961401111,
38) -- (0.356905651386429, 37) -- (0.607445410886010, 36) --
(0.886019895375025, 35) -- (1.18552729039416, 34) -- (1.50123212193657, 33) --
(1.82973745444543, 32) -- (2.16847547671384, 31) -- (2.51542576890516, 30) --
(2.86894654798733, 29) -- (3.22766718785912, 28) -- (3.59041629859524, 27) --
(3.95617156087744, 26) -- (4.32402343590891, 25) -- (4.69314801562449, 24) --
(5.06278603715824, 23) -- (5.43222611199216, 22) -- (5.80079083856706, 21) --
(6.16782484691539, 20) -- (6.53268405694691, 19) -- (6.89472556910989, 18) --
(7.25329767473529, 17) -- (7.60772948645458, 16) -- (7.95731964908815, 15) --
(8.30132349088075, 14) -- (8.63893779440435, 13) -- (8.96928206828212, 12) --
(9.29137471688983, 11) -- (9.60410171102600, 10) -- (9.90617402490231, 9) --
(10.1960677665525, 8) -- (10.4719366438370, 7) -- (10.7314780833053, 6) --
(10.9717169253753, 5) -- (11.1886307609987, 4) -- (11.3764378418456, 3) --
(11.5260526476784, 2) -- (11.6209654676476, 1) -- (11.6209654676476, 0);

\end{tikzpicture}
\end{tabular}
}
\caption{Typical trees of size~$40$ and their semantic-tree profile behind the average profile.}
\label{fig:shape40}
\end{center}
\end{figure}

For the two syntax trees depicted on the left of Figure~\ref{fig:shape40}, the shape of the corresponding semantic tree is depicted on the right.
We use a logarithmic scale of the horizontal axis so that the exponential fringes become lines.
The semantic trees are of size larger than $8.74\cdot10^{28}$ for the one corresponding to the left process
tree and $2.66\cdot10^{35}$ for the second one. These correspond to the two plain lines in the figure.
The dashed lines correspond to the theoretical computation of the mean as explained in the next section. We
can see that it is almost reached by the shape of the second tree. We also remark that both the trees have a semantic size that is smaller than the average. To analyse this particularity, we have sampled more than fifty typical process trees of size~$40$ (of course with uniform probability among all trees
of size~$40$). The results are fairly interesting. All the shapes that we computed follow the same kind
of curve as the average. However, almost all process trees have a semantic-tree size that is much smaller than the
average, ($\approx 4.06\cdot10^{36}$). Indeed, most of them have a size that belong to $[10^{28}, 10^{35}]$,
and a single one has a size larger than the average (it is approximately twice as large).

This observation let us conjecture that a only a very few special syntax trees accounts for the largest increase
of the semantic size.
These are probably process trees whose nodes have a large arity.
The simplest one is the process tree with a single internal node. In the case of size~$40$, its semantic correspondence
has size larger than $2.03\cdot10^{46}$.
In fact, the combinatorial explosion in the worst case increases like a factorial function. Since the Catalan numbers
(that count syntax trees) do not increase that quickly, the ``worst'' syntax trees (the one whose semantic-tree size
is largest) do really influence the average measures.

\subsubsection{Mean number of nodes by level}

We may now describe one of the fundamental results of this paper: a close formula for the mean number
 of nodes at each level of a semantic tree.

\begin{theorem}\label{prop:bord}
The mean number of nodes at level $n-i-1$, for $i\in\{0,\dots,n-1\}$ in a semantic tree
corresponding to a syntax tree of size $n$ is:
\[\bar{W}^i_n = \frac{2^i \ (2n-2i-1)!\ (n-1)!}{(2n-i-1)!\ (n-i-1)!} \cdot \frac{n!}{2^{n-1}\ i!}.\]
\end{theorem}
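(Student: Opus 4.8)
The plan is to combine Corollary~\ref{sizeshuff} with a single bivariate generating function that enumerates \emph{increasingly labelled admissible cuts}. First I would rewrite the mean as
\[\bar{W}^i_n = \frac{1}{C_n}\sum_{|T|=n}\ \sum_{\substack{S\text{ adm. cut of }T\\ |S|=n-i}}\ell_S ,\]
and note that $\ell_S$ is by Fact~\ref{fact:hook} exactly the number of increasing labellings of $S$. Hence the inner double sum counts triples $(T,S,\lambda)$ with $T$ a plane tree of size $n$, $S$ an admissible cut of size $k=n-i$, and $\lambda$ an increasing labelling of $S$. The structural observation driving everything is that such a triple is nothing but an increasing plane tree built on the $k$ cut nodes in which, reading the children of any cut node from left to right, one sees an arbitrary interleaving of \emph{inert} pruned plane trees (from $\C$) and recursively decorated cut-children; the pruned nodes are exactly the $i=n-k$ removed nodes. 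This correspondence is one-to-one: forgetting labels and merging everything recovers $T$, the decorated nodes recover $S$, and the labels recover $\lambda$.

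Next I would translate this into generating functions, marking cut nodes with an exponential variable $u$ (they carry the increasing labels) and pruned nodes with an ordinary variable $z$. The decorated class $\D$ then satisfies
\[\D = \Z^{\text{\qed}}\star\Seq(\C + \D),\]
the boxed product forcing the least label at each cut-node root (so labels increase only along branches of the cut), the summand $\C$ accounting for the inert pruned subtrees. The corresponding series $D(z,u)$ therefore obeys
\[\frac{\partial D}{\partial u} = \frac{1}{1-C(z)-D(z,u)},\qquad D(z,0)=0,\]
which is separable: integrating in $u$ gives $(1-C(z))\,D-\tfrac12 D^2=u$, whence
\[D(z,u) = \bigl(1-C(z)\bigr) - \sqrt{\bigl(1-C(z)\bigr)^2 - 2u}.\]
By the bijection above the quantity sought is $(n-i)!\,[z^i u^{\,n-i}]D(z,u)$, so that $\bar{W}^i_n=\frac{(n-i)!}{C_n}\,[z^i u^{\,n-i}]D$.

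It then remains to extract this coefficient in two stages. Setting $c=1-C(z)$ and using the shifted Catalan expansion $1-\sqrt{1-w}=\sum_{k\ge1}C_k\,2^{1-2k}w^k$ (from $2C(z)=1-\sqrt{1-4z}$, Fact~\ref{fact:Catalan}) with $w=2u/c^2$ gives $[u^k]D = C_k\,2^{1-k}\,c^{\,1-2k}$. Since $c^{-1}=1/(1-C(z))=C(z)/z$ is the ordinary Catalan series $B(z)=\sum_{n}\frac{1}{n+1}\binom{2n}{n}z^n$, the remaining task is $[z^i]B(z)^{2k-1}$, which I would evaluate with the classical power identity $[z^i]B(z)^m=\frac{m}{2i+m}\binom{2i+m}{i}$. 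With $m=2k-1$ and $k=n-i$ the index $2i+m$ collapses to $2n-1$, yielding $[z^i]c^{\,1-2k}=\frac{2n-2i-1}{2n-1}\binom{2n-1}{i}$.

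Finally I would assemble the factors, substitute $C_n=\frac{(2n-2)!}{n!(n-1)!}$ and $(n-i)!\,C_{n-i}=\frac{(2n-2i-2)!}{(n-i-1)!}$, cancel the common $(2n-2)!$, and use $(2n-2i-1)(2n-2i-2)!=(2n-2i-1)!$ to reach the announced closed form (the identity $2^i/2^{n-1}=2^{-(n-i-1)}$ reconciling the two equivalent ways of writing the power of two). As an independent check, at $i=0$ the formula collapses to $n!/2^{n-1}$, recovering the mean number of runs of Theorem~\ref{theo:MeanNbSchedulings}. I expect the main obstacle to be the correct set-up and justification of the mixed labelled/unlabelled specification $\D=\Z^{\text{\qed}}\star\Seq(\C+\D)$ — in particular verifying that interleaving inert pruned forests among the children of an increasing plane tree reproduces every triple $(T,S,\lambda)$ exactly once, and that the boxed product enforces the increasing constraint along branches only; the two coefficient extractions afterwards are routine applications of standard Catalan identities.
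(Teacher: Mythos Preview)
Your argument is correct and reaches the same coefficient extraction as the paper, but the route is genuinely different in one key step. The paper does not set up a bivariate generating function nor solve a differential equation. Instead, after the same reduction to counting triples $(T,S,\lambda)$, it fixes the increasingly labelled cut $S$ of size $k=n-i$ (of which there are $G_k$ by Fact~\ref{fact:number:increasing}) and observes combinatorially that a plane tree on $k$ nodes has exactly $2k-1$ ``slots'' in which to insert sequences of pruned subtrees (each node of arity $\eta$ in $S$ offers $\eta+1$ gaps, and these sum to $2k-1$). This immediately gives
\[
W^i_n \;=\; G_{n-i}\,[z^i]\Bigl(\tfrac{1}{1-C(z)}\Bigr)^{2n-2i-1},
\]
after which the same Catalan-power identity finishes the computation.

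Your bivariate series $D(z,u)=(1-C(z))-\sqrt{(1-C(z))^2-2u}$ encodes all levels at once; expanding the square root recovers precisely $[u^k]D = G_k\,k!^{-1}\,(1-C(z))^{1-2k}$, so the exponent $2k-1$ that the paper obtains by slot-counting emerges for you algebraically. The paper's approach is shorter and requires no PDE, while yours yields a closed form $D(z,u)$ that could be reused (e.g.\ summing over $i$ to attack $\bar S_n$ directly). Your flagged obstacle --- justifying that $\D=\Z^{\text{\qed}}\star\Seq(\C+\D)$ enumerates each triple $(T,S,\lambda)$ exactly once --- is exactly the content of the paper's Figure~\ref{fig:enriched-cut} and the surrounding discussion, so you may cite that.
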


\begin{proof}
Let $n,i$ be two integers such that $0 \leq i< n$.
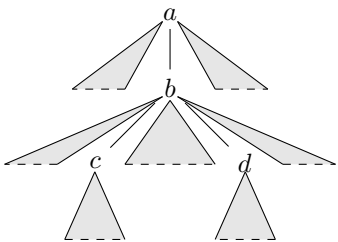
\begin{figure}[htb]
\begin{center}

\begin{tikzpicture}[node distance=28pt]
\node (a) {$a$};
\node[below of=a] (b) {$b$};
\draw (a) -- (b);
\node[below left of=b, node distance=40pt] (c) {$c$};
\draw (b) -- (c);
\node[below right of=b, node distance=40pt] (d) {$d$};
\draw (b) -- (d);
\fill[gray!20] (-0.1,-0.1) -- (-0.6,-1) -- (-1.3,-1) -- cycle;
\fill[gray!20] (0.1,-0.1) -- (0.6,-1) -- (1.3,-1) -- cycle;
\fill[gray!20] (-0.1,-1.1) -- (-1.5,-2) -- (-2.2,-2) -- cycle;
\fill[gray!20] (0,-1.15) -- (-0.6,-2) -- (0.6,-2) -- cycle;
\fill[gray!20] (0.1,-1.1) -- (1.5,-2) -- (2.2,-2) -- cycle;
\fill[gray!20] (-1,-2.1) -- (-1.4,-3) -- (-0.6,-3) -- cycle;
\fill[gray!20] (1,-2.1) -- (1.4,-3) -- (0.6,-3) -- cycle;
\draw (-0.1,-0.1) -- (-0.6,-1);
\draw (-0.1,-0.1) -- (-1.3,-1);
\draw[dashed] (-0.6,-1) -- (-1.3,-1);
\draw (0.1,-0.1) -- (0.6,-1);
\draw (0.1,-0.1) -- (1.3,-1);
\draw[dashed] (0.6,-1) -- (1.3,-1);
\draw (-0.1,-1.1) -- (-1.5,-2);
\draw (-0.1,-1.1) -- (-2.2,-2);
\draw[dashed] (-1.5,-2) -- (-2.2,-2);
\draw (0,-1.15) -- (-0.6,-2);
\draw (0,-1.15) -- (0.6,-2);
\draw[dashed] (-0.6,-2) -- (0.6,-2);
\draw (0.1,-1.1) -- (1.5,-2);
\draw (0.1,-1.1) -- (2.2,-2);
\draw[dashed] (1.5,-2) -- (2.2,-2);
\draw (-1,-2.1) -- (-1.4,-3);
\draw (-1,-2.1) -- (-0.6,-3);
\draw[dashed] (-1.4,-3) -- (-0.6,-3);
\draw (1,-2.1) -- (1.4,-3);
\draw (1,-2.1) -- (0.6,-3);
\draw[dashed] (1.4,-3) -- (0.6,-3);
\end{tikzpicture}
\caption{An admissible cut and the places where it can be enriched.}
\label{fig:enriched-cut}
\end{center}
\end{figure}
A direct corollary of Corollary~\ref{sizeshuff} gives
the cumulated number $W^i_n$  of nodes at level $n-i-1$ in semantic trees issued of syntax trees of size $n$
to be equal to the sum on all increasingly labelled admissible cuts of size $n-i$ from syntax trees of size $n$.
As in the previous section, let us denote by $\G$ the combinatorial class of increasing trees and thus $G_{n-i}$
the number of increasing trees of size~$n-i$.
An admissible cut is obtained from a tree by pruning some of its sub-trees.
By the reverse process, i.e. by plugging sequences of trees to a fixed tree (that corresponds to an admissible cut),
we obtain the set of trees which admit that admissible cut.
On Figure~\ref{fig:enriched-cut}, the fixed admissible cut is the tree with nodes $a,b,c,d$ and
the places where sequences of trees can be plugged are depicted by the grey triangles.
For every node of arity $\eta$, exactly $\eta +1$ sequences of trees can be plugged near its children.
So for a fixed admissible cut of size $n-i$, the number of places is $2n-2i-1$.
Thus we conclude:
\[W^i_n = [z^n] G_{n-i} z^{n-i} \left( \frac{1}{1-C(z)} \right)^{2n-2i-1} = \frac{(2n-2i-2)!}{2^{n-i-1} (n-i-1)!} [z^{i}] \left(\frac{C(z)}{z}\right)^{2n-2i-1},\]
by using Definition~\ref{def:catalan} and Fact~\ref{fact:number:increasing}.
Results of~\cite[Part 3, Chapter 5]{PS70} on powers of the Catalan generating function give:
\[[z^n] \left(\frac{C(z)}z\right)^k = \frac{k}{n} \binom{k+2n-1}{n-1}.\]
The former result is obtained by using the ``B\"urmann's form'' of Lagrange inversion.
An analogous expression is given in~\cite[p.~66--68]{FS09}.
Thus,
\[W^i_n = \frac{(2n-2i-1)!}{i\cdot 2^{n-i-1}\cdot (n-i-1)!} \binom{2n-2}{i-1}.\]
By taking the average, i.e. by dividing by $C_n$, we obtain the stated value for $\bar{W}^i_n$.
\end{proof}

Given this result, we can complete the analysis of the shapes depicted in Figure~\ref{fig:shape40}.
Let us first determinate the limit curve for the average shape of an semantic tree.
We renormalise the values $\bar{W}^i_n$ as $f(c,n)=\ln(\bar{W}^{\lfloor cn\rfloor}_n)$ 
and we evaluate an asymptotic of $f(c,n)$ when $n$ tends to infinity.
An easy calculation shows that, for $\frac{1}{n}\ll c\ll 1-\frac{1}{n}$ (i.e. $\frac{1}{n}=o(c)$ and $1-c=o(\frac{1}{n}$) 
, we have~:
\[f(c,n)=(1-c)n\ln(n)+\left( c-1+\ln  \left( {\frac { \left( 2-2\,c \right) ^{1-c}}{{c}^{c} \left( 
2-c \right) ^{2-c}}} \right) \right)n+\ln  \left({\frac {\sqrt {4-2c}}{\sqrt {c}}} \right)
+\BigO\left(\dfrac{1}{c(1-c)n}\right).\]
In particular, as $n$ tends to infinity, on every compact $[a,b]$ such $0<a<b<1$,
the function in $c$, $f(c,n)$ tends uniformly to a line $(1-c)n\ln(n)$. Moreover, if we keep the second order terms, 
we then obtain a curve which is totally relevant with the Figure~\ref{fig:shape40}.

Now, we are interested to the behaviour in the neighbourhood of the extremities of $[0,1]$.
We can study the asymptotic of $\bar{W}^i_n$ for fixed constant $i$. 
A straightforward calculation shows that:
\[\bar{W}^i_n\sim_{n\rightarrow\infty} \dfrac{n!}{2^{n-1}i!}, \hspace*{1cm}
\text{and} \hspace*{1cm} \bar{W}^{n-i}_n\sim_{n\rightarrow\infty} \dfrac{(2i-1)!}{2^{k-1}}.\]
Both give an interpretation to the inflexion of the curve near the extremities.

\section{Expected size of process behaviours}\label{sec:size}

 
In this section we study in more details the average size of semantic trees (i.e. the mean number of nodes).
In a first part we provide a first approximation based on the Theorem~\ref{prop:bord} of the previous
section. Then, we make a conjecture regarding the non-plane case, whose proof require a deeper
 study that goes beyond the scope of this paper.
Finally, we characterize the average size in a more precise way, through a linear recurrence that
is obtained by various means. We describe three different
techniques of analytic combinatorics to obtain this recurrence relation: each technique has its
 pros and cons, as will be discussed below.  Finally, we reach our goal of providing 
a precise asymptotics of the size of the semantic trees.

\subsection{First approximation of mean size}

Our initial approximation of the mean size of the semantic trees is based on the level decomposition
 of Section~\ref{sec:shape}, where we give a closed formula for the mean width of nodes $\bar{W}^i_n$ at
 level $n-i-1$ for semantic trees corresponding to syntax trees of size $n$, as Theorem~\ref{prop:bord}.
We first give, as a technical lemma, an inequality involving $\bar{W}^i_n$.

\begin{lemma}\label{lem:bounds}
\[1 \leq \frac{2^{n-1}\ i!}{n!} \cdot \bar{W}^i_n \leq \frac{1}{1-\frac{i^2}{2n}}.\]
\end{lemma}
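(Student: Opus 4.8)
The plan is to substitute the closed form of $\bar{W}^i_n$ from Theorem~\ref{prop:bord} and reduce the double inequality to a statement about a single ratio. Writing $R_{n,i} = \frac{2^{n-1}\,i!}{n!}\,\bar{W}^i_n$, the factor $\frac{n!}{2^{n-1}i!}$ cancels and we are left with
\[
R_{n,i} = \frac{2^i\,(2n-2i-1)!\,(n-1)!}{(2n-i-1)!\,(n-i-1)!}.
\]
First I would turn both factorial ratios into products of exactly $i$ consecutive factors, namely $\frac{(n-1)!}{(n-i-1)!}=\prod_{j=1}^{i}(n-j)$ and $\frac{(2n-2i-1)!}{(2n-i-1)!}=\prod_{j=1}^{i}\frac{1}{2n-i-j}$, so that after absorbing the $2^i$ the ratio collapses to the product
\[
R_{n,i}=\prod_{j=1}^{i}\frac{2(n-j)}{2n-i-j}.
\]
All of the analysis is then carried on this single product.

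For the lower bound $R_{n,i}\ge 1$ I would simply observe that, for each $j\in\{1,\dots,i\}$, the $j$-th factor has numerator minus denominator equal to $(2n-2j)-(2n-i-j)=i-j\ge0$, while its denominator $2n-i-j$ is positive (because $i+j<2n$). Hence every factor is $\ge1$ and so is their product; this half is immediate.

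For the upper bound I would pass to the reciprocal, $R_{n,i}^{-1}=\prod_{j=1}^{i}\bigl(1-\tfrac{i-j}{2(n-j)}\bigr)$, where each $x_j=\frac{i-j}{2(n-j)}$ lies in $[0,1)$ since $i+j<2n$. Applying the Weierstrass product inequality $\prod(1-x_j)\ge 1-\sum x_j$ gives $R_{n,i}^{-1}\ge 1-\sum_{j=1}^i x_j$, and it remains to show $\sum_{j=1}^i\frac{i-j}{2(n-j)}\le \frac{i^2}{2n}$. The clean way to get this is the per-term comparison $\frac{i-j}{n-j}\le\frac{i}{n}$ (equivalent, after cross-multiplication, to $i\le n$), which bounds each summand by $\frac{i}{2n}$ and hence the whole sum by $i\cdot\frac{i}{2n}$. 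Combining, $R_{n,i}^{-1}\ge 1-\frac{i^2}{2n}$, and dividing yields $R_{n,i}\le\bigl(1-\tfrac{i^2}{2n}\bigr)^{-1}$.

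The only delicate point --- and the step I would flag --- is that this last division is legitimate only when $1-\frac{i^2}{2n}>0$, i.e.\ in the regime $i<\sqrt{2n}$ in which the right-hand side of the lemma is a positive quantity; outside this range the stated upper bound is vacuous, so I would make explicit that the inequality is asserted for $i^2<2n$. The main obstacle is really choosing the right elementary bound for the sum $\sum\frac{i-j}{2(n-j)}$: naive bounds (e.g.\ replacing $n-j$ by $n-i$) are too lossy to reach the precise constant $\frac{i^2}{2n}$, whereas the comparison $\frac{i-j}{n-j}\le\frac{i}{n}$ is exactly tight enough and valid term by term.
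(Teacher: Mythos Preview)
Your proof is correct and follows the same outline as the paper: rewrite $\frac{2^{n-1}i!}{n!}\bar W^i_n$ as a product of $i$ ratios, read off the lower bound term by term, and obtain the upper bound via the Weierstrass product inequality $\prod(1-x_j)\ge 1-\sum x_j$. Your explicit caveat about the regime $i^2<2n$ is a useful addition; the paper leaves this implicit.

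The one place where the arguments diverge is the handling of the upper bound. The paper first cancels the $\lceil i/2\rceil$ numerator factors against the even-indexed denominator factors, reducing the expression to a ratio of two products of only $\lfloor i/2\rfloor$ terms, then bounds the numerator by $1$ and applies Weierstrass to the denominator, using the crude estimate ($\lfloor i/2\rfloor$ terms each below $i/n$) to reach $\frac{i^2}{2n}$. You skip the cancellation entirely and instead apply Weierstrass directly to the full reciprocal product, relying on the per-term comparison $\frac{i-j}{n-j}\le\frac{i}{n}$ (equivalent to $i\le n$) to bound the resulting sum. Your route is slightly more streamlined---no parity split, no floors---while the paper's cancellation makes visible that half of the factors are exactly $1$ and hence contribute nothing.
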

\begin{proof}
We first deal with some normalized expression of Theorem~\ref{prop:bord}:
\[\frac{2^{n-1}\ i!}{n!} \cdot \bar{W}^i_n = 2^i \cdot \frac{(2n-2i-1)!}{(2n-i-1)!}\cdot \frac{(n-1)!}{(n-i-1)!} =
 \frac{\prod_{j=0}^{i-1} (n-i+j)}{\prod_{j=0}^{i-1} (n-i+\frac{j}{2})}.\]
Obviously, we get the stated lower bound~$1$. Let us go on with the simplification.
The $\lceil i/2\rceil$ first factors of the numerator can be simplified with those of the denominator when $j$ is even:
\[
\frac{2^{n-1}\ i!}{n!} \cdot \bar{W}^i_n =
  \frac{\prod_{j=0}^{\lfloor i/2 \rfloor-1} (n-\lfloor\frac{i}{2}\rfloor+j)}{\prod_{j=0}^{\lfloor i/2 \rfloor-1} (n-i+\frac{1}{2}+j)} =
  \frac{\prod_{j=0}^{\lfloor i/2 \rfloor-1} (1-\frac{\lfloor i/2\rfloor-j}{n})}{\prod_{j=0}^{\lfloor i/2 \rfloor-1} (1-\frac{i-1/2-j}{n})}.
\]
But, the numerator is smaller than 1 and the denominator satisfies:
\[
\prod_{j=0}^{\lfloor i/2 \rfloor -1} (1-\frac{i-1/2-j}{n}) \geq
  1 - \sum_{j=0}^{\lfloor i/2 \rfloor -1} \frac{i-1/2-j}{n} \geq
  1 - \frac{i^2}{2n}.\]
So we obtain the stated upper bound.
\end{proof}

\begin{theorem}\label{theo:average_size}
The mean size $\bar{S}_n$ of a semantic tree induced by a process tree of size $n$ admits the following asymptotics: 
\[\bar{S}_n  = \sum_{i=0}^{n-1} \bar{W}_n^i \sim_{n\rightarrow\infty} e\frac{n!}{2^{n-1}}.\]
\end{theorem}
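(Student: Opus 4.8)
The plan is to prove the equivalent statement that the normalised quantity $\frac{2^{n-1}}{n!}\bar S_n$ converges to $e$. Writing $b_i := \frac{2^{n-1}}{n!}\bar W_n^i$, so that $\frac{2^{n-1}}{n!}\bar S_n = \sum_{i=0}^{n-1} b_i$, Lemma~\ref{lem:bounds} rephrases as the two-sided estimate $\frac{1}{i!} \le b_i \le \frac{1}{i!}\cdot\frac{1}{1-i^2/(2n)}$, valid for $i<\sqrt{2n}$. Since $\sum_{i\ge 0} \frac{1}{i!} = e$, the idea is a sandwich argument: the lower bound is already good enough everywhere, and the work lies entirely in controlling the upper bound, which must be split because the correction factor $\frac{1}{1-i^2/(2n)}$ ceases to be meaningful once $i \ge \sqrt{2n}$.

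First I would record the lower bound, which is immediate: $\sum_{i=0}^{n-1} b_i \ge \sum_{i=0}^{n-1}\frac1{i!}\to e$, so $\liminf_n \frac{2^{n-1}}{n!}\bar S_n \ge e$. For the matching upper bound I would fix a threshold $t_n\to\infty$ with $t_n = o(\sqrt n)$ (for instance $t_n = \lfloor n^{1/3}\rfloor$) and treat the head $i\le t_n$ and the tail $i>t_n$ separately. On the head, the factor $\frac{1}{1-i^2/(2n)}$ is increasing and hence maximised at $i=t_n$, so Lemma~\ref{lem:bounds} yields $\sum_{i\le t_n} b_i \le \frac{1}{1-t_n^2/(2n)}\sum_{i\le t_n}\frac1{i!}\le \frac{e}{1-t_n^2/(2n)}$, which tends to $e$ since $t_n^2/(2n)\to 0$.

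The genuine obstacle is the tail $\sum_{i>t_n} b_i$, precisely the range where Lemma~\ref{lem:bounds} provides no usable bound, so an independent estimate is required. Here I would exploit the ratio of consecutive terms, which from Theorem~\ref{prop:bord} simplifies to $\frac{b_i}{b_{i-1}} = \frac{2n-i}{i(2n-2i+1)}$. This ratio is $\le 1$ for every $2\le i\le n-1$, so the sequence $b_i$ is non-increasing past $i=1$; more quantitatively it is $\le \frac{2}{i}$ as long as $i \le \frac{2n+2}{3}$, and telescoping from $b_0=1$ then gives the crude bound $b_i \le \frac{2^i}{i!}$ on that range. Consequently $\sum_{t_n<i\le 2n/3} b_i \le \sum_{i>t_n}\frac{2^i}{i!}$, which is the tail of the convergent series for $e^2$ and hence tends to $0$ as $t_n\to\infty$; the remaining terms with $i>2n/3$ are bounded by $\frac{n}{3}\,b_{\lceil 2n/3\rceil}$, which is negligible because $b_{\lceil 2n/3\rceil}$ is super-exponentially small. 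Combining the three pieces gives $\limsup_n \frac{2^{n-1}}{n!}\bar S_n \le e$, and together with the lower bound this establishes $\bar S_n \sim e\,\frac{n!}{2^{n-1}}$. The delicate point is the calibration of $t_n$: it must grow, so that the head already captures almost all of $e$, yet remain $o(\sqrt n)$, so that Lemma~\ref{lem:bounds} stays applicable on the head, while the tail has to be handled by the self-contained ratio estimate since the lemma is silent there.
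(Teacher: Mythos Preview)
Your argument is correct and follows the same sandwich strategy as the paper: bound $\frac{2^{n-1}i!}{n!}\bar W_n^i$ above and below via Lemma~\ref{lem:bounds} and sum. In fact your version is the more careful of the two. The paper applies the upper bound $\frac{1}{1-i^2/(2n)}\le 1+\frac{i^2}{n}$ uniformly over all $0\le i\le n-1$ and then invokes $\sum_{i\ge 0} i^2/i!=2e$, tacitly ignoring that the lemma's upper bound is vacuous (the denominator $1-i^2/(2n)$ is non-positive) once $i\ge\sqrt{2n}$. Your explicit head/tail split, with the tail handled by the ratio identity $b_i/b_{i-1}=\frac{2n-i}{i(2n-2i+1)}$ drawn from Theorem~\ref{prop:bord}, closes precisely this gap; the cost is a slightly longer argument, the gain is that every inequality is actually valid on the range where you use it.
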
 
\begin{proof}
Using Lemma~\ref{lem:bounds} and taking a large enough $n$, we get:
\[\sum_{i=0}^{n-1} \frac{n!}{2^{n-1}\ i!} \leq \bar{S}_n \leq \sum_{i=0}^{n-1} \frac{n!}{2^{n-1}\ i!} \cdot \frac{1}{1-\frac{i^2}{2n}}
 \leq \sum_{i=0}^{n-1} \frac{n!}{2^{n-1}\ i!} \cdot \left(1 + \frac{i^2}{n}\right).\]
First let us take the lower bound into account. Using an upper bound of the tail of the series (Taylor-Lagrange formula):
\[ \frac{n!}{2^{n-1}} \left( e + \frac{1}{n!} \right) \leq \sum_{i=0}^{n-1} \frac{n!}{2^{n-1}\ i!} \leq \frac{n!}{2^{n-1}} \left( e + \frac{e}{n!} \right),\]
so both bounds tends to $e\cdot n!/2^{n-1}$. It remains to prove
that $n^{-1}\cdot\sum_{i=0}^{n-1} i^2/i! = o(1)$ to complete the proof.
\[\sum_{i=0}^{n-1} \frac{i^2}{i!} = \sum_{i=0}^{n-2} \frac{i+1}{i!} = \sum_{i=0}^{n-3} \frac{1}{i!} + \sum_{i=0}^{n-2} \frac{1}{i!}\rightarrow_{n\rightarrow \infty} 2e.\]
\end{proof}

\begin{corollary}
Let $f$ be a function in $n$ that tends to infinity with $n$. Let $a_n$ be the average number of nodes of the semantic tree induced by all the syntax tree of size $n$ and $l_n$ the average number of nodes belonging to the $f(n)$ last levels.
Then, $l_n/a_n$ tends to 1 when $n$ tends to infinity.
\end{corollary}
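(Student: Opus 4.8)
The plan is to observe that the $f(n)$ last levels of $\Shuf(T)$ are exactly those carrying the \emph{small} values of the index~$i$, and that essentially all of the mass of $\bar{S}_n$ is already concentrated there. First I would translate the statement into the notation of Theorem~\ref{prop:bord}: since level $n-i-1$ is indexed by~$i$, the $f(n)$ last levels (levels $n-1,n-2,\dots,n-f(n)$) correspond to $i\in\{0,1,\dots,f(n)-1\}$. By linearity of expectation we then have $l_n=\sum_{i=0}^{f(n)-1}\bar{W}^i_n$ and $a_n=\bar{S}_n=\sum_{i=0}^{n-1}\bar{W}^i_n$. Since, for each individual syntax tree, the nodes lying in the $f(n)$ last levels form a subset of all its nodes, we immediately get $l_n\leq a_n$, hence $l_n/a_n\leq 1$; it therefore suffices to bound the ratio from below.

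For the lower bound I would use only the left inequality of Lemma~\ref{lem:bounds}, namely $\bar{W}^i_n\geq \frac{n!}{2^{n-1}\,i!}$, which is valid for \emph{every} $i\in\{0,\dots,n-1\}$ (the right inequality of the lemma degenerates for large~$i$, but it is never needed here). Summing this estimate over $i=0,\dots,f(n)-1$ yields
\[l_n \;\geq\; \frac{n!}{2^{n-1}}\sum_{i=0}^{f(n)-1}\frac{1}{i!}.\]
Because $f(n)\to\infty$, the truncated exponential series converges, $\sum_{i=0}^{f(n)-1}1/i!\to e$, so the whole right-hand side is asymptotically $e\,\frac{n!}{2^{n-1}}$.

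Finally I would combine this with the asymptotics already established in Theorem~\ref{theo:average_size}, $a_n\sim_{n\to\infty} e\,\frac{n!}{2^{n-1}}$, to get
\[\frac{l_n}{a_n}\;\geq\;\frac{\frac{n!}{2^{n-1}}\sum_{i=0}^{f(n)-1}1/i!}{a_n}\;=\;\frac{\sum_{i=0}^{f(n)-1}1/i!}{e+o(1)}\;\longrightarrow\;\frac{e}{e}=1 \qquad (n\to\infty).\]
Together with the trivial bound $l_n/a_n\leq 1$, a squeeze argument then gives $l_n/a_n\to 1$. The only points requiring care are the indexing convention (last levels $\leftrightarrow$ small~$i$) and checking that the lower bound from Lemma~\ref{lem:bounds} is valid across the whole summation range; both are immediate. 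I do not expect a genuine obstacle, since the heavy lifting has already been done in Lemma~\ref{lem:bounds} and Theorem~\ref{theo:average_size}, and everything here reduces to the elementary convergence of $\sum 1/i!$ to~$e$.
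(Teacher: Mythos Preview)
Your proof is correct and follows essentially the same route the paper indicates (``analogous as the previous one using the Taylor--Lagrange formula''): you use the lower bound of Lemma~\ref{lem:bounds} over the small-$i$ range together with the convergence of the partial sums of $\sum 1/i!$, and your indexing of the last levels is right. The only cosmetic difference is that you invoke the already-proven asymptotics $a_n\sim e\,n!/2^{n-1}$ from Theorem~\ref{theo:average_size} for the denominator, whereas the paper alludes to re-running the Taylor--Lagrange bound directly; either way yields the same squeeze.
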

The proof is analogous as the previous one using the Taylor-Lagrange formula. 
The unique constraint for $f$ is that it tends to infinity, but it can grow as slow as we want.  For example,
asymptotically almost all nodes of the average semantic tree belong to the $\log (\ldots (\log n) \ldots)$ last levels.  

\subsection{The case of non-plane trees}

In order to compute the average size in the context of non-plane trees, 
we need one more result that is the analogous of powers of the Catalan generating function
(see proof of Theorem~\ref{prop:bord}).
Here, in the case of non-plane trees this corresponds to the powers of
unlabelled non-plane rooted trees.
Although many results about forest of unlabelled non-plane trees have been studied in~\cite{Palmer1979109}, 
it seems that the case of finite sequences of unlabelled non-plane trees has not been thoroughly investigated.

\begin{conjecture}
The mean size $\bar{U}_n$ of a semantic tree induced by a process (unlabelled non-plane rooted)
tree of size $n$ admits the following asymptotics: 
\[\bar{U}_n \sim_{n\rightarrow \infty} \frac{2\sqrt{2} e \pi n}{\gamma}\left(\frac{n\eta}{e}\right)^{n},\]
where $\eta$ and $\gamma$ are introduced in Fact~\ref{fact:catalan_nonplane}.
\end{conjecture}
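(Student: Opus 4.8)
The plan is to follow the architecture of the plane case and reduce the conjecture to the non-plane analogue of Theorem~\ref{theo:average_size}. The first observation is that the conjectured value is exactly $e$ times the mean width of the non-plane semantic trees. Writing $\bar{V}_n$ for the mean number of concurrent runs of the non-plane width theorem stated above, one has $\bar{V}_n = I_n/T_n = (n-1)!/T_n$: summing the number $i_S$ of increasing labellings over all non-plane trees $S$ of size $n$ counts exactly the increasing Cayley trees, of which there are $I_n=(n-1)!$. A quick substitution of Fact~\ref{fact:catalan_nonplane} and Stirling's formula shows that the claimed right-hand side equals $e\,\bar{V}_n$ asymptotically. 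Hence it suffices to prove that summing the per-level mean widths over all levels multiplies the bottom level by a factor tending to $e$, exactly as in the plane case.

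Concretely, let $\bar{V}^i_n$ denote the mean number of nodes at level $n-i-1$ of $\Shuf(T)$ taken over non-plane $T$ of size $n$, so that $\bar{U}_n=\sum_{i=0}^{n-1}\bar{V}^i_n$ and $\bar{V}^0_n=\bar{V}_n$. The order-theoretic identity of Corollary~\ref{sizeshuff} is purely poset-theoretic and transfers verbatim, giving $\bar{V}^i_n=\tfrac{1}{T_n}\sum_{S}i_S\,N(S,n)$, the sum ranging over non-plane trees $S$ of size $n-i$, where $N(S,n)$ is the number of non-plane trees of size $n$ admitting $S$ as an admissible cut. I would model the pair ``(non-plane tree, increasingly-labelled admissible cut)'' as a single class: an increasing Cayley skeleton $\I=\Z^\text{\qed}\star\Set(\I)$ in which each node additionally carries a multiset $\MSet(\T)$ of pruned-away ordinary subtrees. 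The quantity one then needs is the non-plane counterpart of the ``powers of the Catalan series'' used in the proof of Theorem~\ref{prop:bord}: instead of $(1-C(z))^{-(2|S|-1)}$ one must enumerate the multiset-attachments of $\T$-trees governed by the dominant singularity $\eta$ of Fact~\ref{fact:catalan_nonplane}.

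The hard part will be exactly this enrichment count, and it is the reason the statement is only conjectured. In the plane case the number of insertion slots equals $2|S|-1$, depending only on $|S|$; consequently $N(S,n)$ is shape-independent and the sum factorises cleanly. For non-plane trees the attachments are multisets, so $N(S,n)$ depends genuinely on the automorphisms of $S$ and on coincidences between attached subtrees and the existing children of $S$, and it is not a fixed power of a single series. One must therefore carry out the full Pólya bookkeeping for finite multisets of unlabelled non-plane trees, the enumeration flagged above as not thoroughly investigated. My expectation is that the symmetry corrections contribute only sub-dominant terms, so that the dominant trees $S$ (those with essentially trivial automorphisms) are expected to satisfy $N(S,n)\sim n^i/i!$ uniformly, mirroring the plane estimate. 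Establishing this uniformity, via $\Delta$-analyticity and the transfer theorems around $\eta$ already invoked for Fact~\ref{fact:catalan_nonplane}, is the crux of the argument.

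Granting such a per-level estimate, the conclusion follows as in the plane case. One proves $\bar{V}^i_n/\bar{V}^0_n\to 1/i!$ for each fixed $i$ — the non-plane echo of Lemma~\ref{lem:bounds}, with $\sum_S i_S=(n-1)!$ playing the role of $G_{n-i}(n-i)!$ — together with a uniform tail bound obtained by a Taylor--Lagrange argument exactly as in the proof of Theorem~\ref{theo:average_size}, guaranteeing that almost all nodes lie in the bottom levels. Summing yields $\bar{U}_n/\bar{V}^0_n\to\sum_{i\geq 0}1/i!=e$, hence $\bar{U}_n\sim e\,\bar{V}_n\sim \frac{2\sqrt{2}\,e\,\pi n}{\gamma}\bigl(\tfrac{n\eta}{e}\bigr)^{n}$, which is the claimed asymptotics.
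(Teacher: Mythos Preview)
The statement you are addressing is a \emph{conjecture} in the paper, not a theorem: the authors give no proof and explicitly explain why. Their entire discussion consists of the paragraph immediately preceding the conjecture, which says that what is missing is ``the analogous of powers of the Catalan generating function'' for non-plane trees, and that this analogue (finite sequences/multisets of unlabelled non-plane rooted trees) has not been thoroughly investigated.

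Your proposal is therefore not to be compared against a proof in the paper --- there is none --- but against this heuristic. On that score you are in complete agreement with the authors: you correctly observe that the claimed asymptotic is exactly $e\cdot\bar V_n$, you set up the per-level decomposition via admissible cuts just as in Theorem~\ref{prop:bord}, and you identify the non-plane replacement of $(1-C(z))^{-(2|S|-1)}$ as the crux. You also correctly note that in the non-plane setting $N(S,n)$ is no longer a function of $|S|$ alone, because the attachments are multisets and interact with the automorphisms of $S$; this is precisely the obstruction the paper points to.

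Where your write-up should be more cautious is in the passage ``My expectation is that the symmetry corrections contribute only sub-dominant terms, so that the dominant trees $S$ \ldots\ are expected to satisfy $N(S,n)\sim n^i/i!$ uniformly''. This is the entire content of the conjecture, not a technicality to be dispatched by $\Delta$-analyticity and transfer. The transfer lemmas would give you the asymptotics of a fixed generating function; here the generating function itself varies with the shape of $S$, and the required uniformity over all admissible cuts (including those with large automorphism groups) is exactly what the authors flag as open. Your final paragraph, which ``grants'' the per-level estimate and then runs the Taylor--Lagrange argument of Theorem~\ref{theo:average_size}, is fine as a conditional reduction, but it does not advance beyond the paper's own position: the conjecture stands or falls on the unproved uniform estimate, and neither you nor the paper supplies it.
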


\subsection{The mean size as a linear recurrence}

In this section, we focus on the asymptotics of the average size $\bar{S}_n$ of the semantic trees induced by syntax trees
of size $n$. Our goal is to obtain more precise approximations than
Theorem~\ref{theo:average_size} using different analytic combinatorics techniques.
Indeed, we present three distinct ways to establish our main result:
a linear recurrence that precisely capture the desired quantity.
These results are deeply related to the holonomy property of the generating functions
into consideration. Thus, a priori, in the non-plane case, the functions are not holonomic
and consequently such proofs could not be adapted.

\begin{theorem}\label{theo:taille}
The mean size $\bar{S}_n$ of a semantic tree induced by a tree of size $n$ follows the P-recurrence: 
\begin{multline*}
(2n^4+12n^3+22n^2+12n)\bar{S}_n-(4n^4+32n^3+87n^2+87n+18)\bar{S}_{n+1}\\
+(2n^4+24n^3+85n^2+106n+39)\bar{S}_{n+2}-(4n^3+20n^2+31n+15)\bar{S}_{n+3}=0,
\end{multline*}
with the initial conditions: $\bar{S}_0 = 0, \bar{S}_1 = 1 \text{ and } \bar{S}_2 = 2$.
\end{theorem}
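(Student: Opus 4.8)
The plan is to treat $\bar{S}_n$ as an explicit definite hypergeometric sum and to extract the recurrence by \emph{creative telescoping}. The starting point is the closed form of Theorem~\ref{prop:bord} together with the identity $\bar{S}_n = \sum_{i=0}^{n-1} \bar{W}_n^i$ from Theorem~\ref{theo:average_size}. Writing the summand as
\[t(n,i) = \frac{2^i\,(2n-2i-1)!\,(n-1)!\,n!}{(2n-i-1)!\,(n-i-1)!\,i!\,2^{n-1}},\]
I would first observe that every factorial argument is an integer-linear form in $(n,i)$ and that the only exponential factor is $2^{i-n+1}$, so $t(n,i)$ is a \emph{proper hypergeometric term} in the two variables; in particular both ratios $t(n+1,i)/t(n,i)$ and $t(n,i+1)/t(n,i)$ are rational in $(n,i)$. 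A useful preliminary remark is that $t(n,i)$ vanishes outside the range $0\le i\le n-1$: the factor $(n-i-1)!$ sits in the denominator, so the term is $0$ for $i\ge n$ (and trivially for $i<0$), while at the upper end $t(n,n-1)=1$, consistent with the single root node at level~$0$.

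Next I would run Zeilberger's algorithm on $t(n,i)$. Since the term is proper hypergeometric, the algorithm is guaranteed to terminate and to return polynomials $c_0(n),\dots,c_J(n)$ together with a rational \emph{certificate} $R(n,i)$ such that
\[\sum_{j=0}^{J} c_j(n)\,t(n+j,i) = G(n,i+1)-G(n,i), \qquad G(n,i)=R(n,i)\,t(n,i).\]
Summing this identity over all $i\in\IN$ collapses the right-hand side telescopically. Because $t(n+j,\cdot)$ is supported exactly on $0\le i\le n+j-1$, the partial sums $\sum_i t(n+j,i)$ reproduce $\bar{S}_{n+j}$ for each $j$, and the telescoped boundary contributions vanish at the ends of the natural support. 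This yields a \emph{homogeneous} P-recurrence $\sum_{j=0}^{J} c_j(n)\,\bar{S}_{n+j}=0$. I expect the algorithm to produce order $J=3$ with polynomial coefficients of degree up to four, as displayed in the statement; the claimed coefficients can then be certified independently by verifying the single rational identity above symbolically, which avoids re-deriving them by hand.

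The step I expect to be the main obstacle is the boundary bookkeeping: one must check that the certificate $R(n,i)$ has no poles at the edges of the summation range that would inject spurious inhomogeneous terms once the sum is truncated rather than taken over all of $\IN$. Here this should be benign precisely because $t(n,i)$ already vanishes at and beyond the natural boundaries, so the telescoping leaves no residue; nevertheless this is the point that requires genuine care and must be verified rather than assumed. Once the homogeneous recurrence is secured, it remains only to fix the three initial values, which follow by direct evaluation of $\bar{S}_0=0$, $\bar{S}_1=1$ and $\bar{S}_2=2$ (equivalently from the first semantic trees of Figure~\ref{fig:enum}). As an independent cross-check -- and the route most in line with the paper's emphasis on \emph{holonomy} -- one can instead build the bivariate generating function that marks cut-nodes and the $2|S|-1$ plugging slots of each increasing admissible cut (using $C(z)$ for the pruned subtrees and the increasing-tree series of Fact~\ref{fact:number:increasing} for the labelled skeleton); this series is D-finite by the closure properties of holonomic functions, and translating its defining differential equation into a recurrence on coefficients must reproduce the same operator.
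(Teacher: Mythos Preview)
Your proposal is correct and follows essentially the same route as the paper: write $\bar{S}_n=\sum_{i=0}^{n-1}\bar{W}^i_n$ as a definite proper-hypergeometric sum using the closed form of Theorem~\ref{prop:bord}, then apply creative telescoping (the paper invokes Zeilberger's method through the \texttt{Mgfun} package, after first rewriting the sum as a combination of ${}_1F_1$ and ${}_2F_2$). Your more explicit discussion of the certificate and the boundary behaviour of $t(n,i)$ is welcome detail that the paper leaves to the computer-algebra black box, and your holonomic cross-check via the bivariate generating function of increasing admissible cuts is exactly one of the two alternative proofs the paper mentions (the multivariate-holonomy argument of~\cite{BGP12}).
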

We have stored the non-normalized version of this sequence in OEIS, at A216234.
It consists to $S_n$: the cumulated sizes of semantic trees issued of process trees of size $n$. 

\begin{proof}
A first approach to prove this theorem is based on creative telescoping.
This proof is a direct consequence of the level decomposition detailed in Section~\ref{sec:shape}.
It is clearly simple both in terms of the technical mathematics involved and the
 level of computer assistance required for the demonstration. In particular, it needs no proof in the sense
 that all the steps are totally automatized in classical computer algebra systems (such a Maple or Mathematica).
The level decomposition is really a peculiarity of the combinatorial structure we investigate, and it is
hardly a common situation.

From the exact formula for the mean number of nodes each level, $\bar{W}^i_n$,
by summing on all levels, we get the mean number $\bar{S}_n$ of nodes on an average semantic tree:
\[\bar{S}_n = \sum\limits_{i=0}^{n-1}\frac{2^i \ (2n-2i-1)!\ (n-1)!}{(2n-i-1)!\ (n-i-1)!} \cdot \frac{n!}{2^{n-1}\ i!}.\]
This sum can be expressed in terms of hypergeometric functions:
\[\bar{S}_n=\frac {n!}{{2}^{n-1}}
{\,{}_1F_1(-2\,n+1;\,-n+1/2;\,1/2)}-
{{}_2F_2(1,-n+1;\,1/2,n+1;\,1/2)}.
\]

Now, by using the package Mgfun of Maple~\cite{chyzak97},
we extract, by creative telescoping~\cite{PWZ96, zeilberger90},
the stated P-recurrence for $\bar{S_n}$.
\end{proof}

However, we can prove Theorem~\ref{theo:taille} with two other distinct approaches.
The first one is based on the multivariate holonomy theory. It is our original proof,
 that can be found in~\cite{BGP12}, and it is clearly
the proof that conveys the most combinatorial informations about the structures we study.\\

Finally, the last approach is based on the concept \emph{guess-and-proof}: we calculate the first values for $S_n$,
guess a differential equation verified by $S(z)$ and prove that it is corrected.  This proof style is both
 powerful and clever since it is almost entirely automated. However, it does not convey much information about the combinatorial structures
 under study. 
%

\subsection{Precise asymptotics of the size}

Now that we have a P-recurrence for the mean size, we can obtain precise asymptotics
 in a relatively effortless way.

\begin{theorem}\label{theo:asympt}
The mean size $\bar{S}_n$ of a semantic tree induced by a tree of
 size $n$ admits the following precise asymptotics: 
\[\bar{S}_n=e\sqrt{2\pi n}\left(\frac{n}{2e}\right)^{n}\left(2+\frac{2}{3n}+
\frac {49}{36n^2}+\frac {27449}{6480n^3}+\BigO \left(\frac{1}{n^4}\right)\right).\]
%
\end{theorem}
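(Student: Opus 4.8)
The plan is to extract the asymptotics of $\bar{S}_n$ directly from the P-recurrence established in Theorem~\ref{theo:taille}, using the standard theory of asymptotics of solutions to linear recurrences with polynomial coefficients (the Birkhoff--Trjitzinsky method, cf.~\cite[p.~582]{FS09}). First I would examine the characteristic behaviour of the recurrence. Since the dominant terms are of degree $4$ in $n$, I would divide through by the leading polynomial $(4n^3+20n^2+31n+15)$ attached to $\bar{S}_{n+3}$ and study the indicial equation governing the exponential-times-factorial growth rate. From Theorem~\ref{theo:average_size} we already know that $\bar{S}_n \sim e\,n!/2^{n-1}$, so the leading order is fixed; the role of the recurrence is to refine this into a full asymptotic expansion.

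The key step is to posit an ansatz of the form
\[
\bar{S}_n = e\sqrt{2\pi n}\left(\frac{n}{2e}\right)^{n}\, \Phi(n),
\qquad
\Phi(n) = \sum_{k\geq 0} \frac{c_k}{n^k},
\]
with $c_0 = 2$ forced by Theorem~\ref{theo:average_size}. I would substitute this ansatz into the P-recurrence, using Stirling's formula to expand the ratios $\bar{S}_{n+j}/\bar{S}_n$ (which involve $(n+j)!/n!$ and $2^{-j}$ factors and the $\sqrt{n+j}/\sqrt{n}$ corrections) as asymptotic series in $1/n$. After factoring out the common scale $e\sqrt{2\pi n}(n/2e)^n$, the recurrence collapses to a single functional relation among the shifted copies of $\Phi$, and matching successive powers of $1/n$ yields a triangular linear system that determines $c_1, c_2, c_3, \dots$ one at a time. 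Carrying this to order $1/n^3$ would produce the stated coefficients $c_1 = 2/3$, $c_2 = 49/36$, $c_3 = 27449/6480$, with an $\BigO(1/n^4)$ remainder.

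The main obstacle I anticipate is bookkeeping rather than conceptual: the expansions of $(n+1)!$, $(n+2)!$, $(n+3)!$ relative to $n!$, combined with the degree-$4$ polynomial coefficients, generate a large number of terms that must be expanded consistently to the same order in $1/n$ before cancellation. A subtle point is confirming that the recurrence has no spurious competing solution of the same growth order that could contaminate the expansion; here I would rely on the fact that the initial conditions $\bar{S}_0=0$, $\bar{S}_1=1$, $\bar{S}_2=2$ pin down the physically meaningful solution, and that the subdominant formal solutions of the recurrence grow at a strictly smaller rate (as can be checked from the indicial analysis). In practice this entire computation is routinely handled by a computer algebra system once the recurrence is in hand, so the asymptotics follow in a "relatively effortless way" as announced, the recurrence of Theorem~\ref{theo:taille} having already done the substantive work.
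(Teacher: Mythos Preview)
Your proposal is correct in principle and would produce the stated expansion, but it is not the route the paper takes as its primary argument. The paper's main proof starts from the closed hypergeometric form obtained in Theorem~\ref{theo:taille},
\[
\bar{S}_n=\frac {n!}{{2}^{n-1}}\,{}_1F_1(-2n+1;\,-n+1/2;\,1/2)-{}_2F_2(1,-n+1;\,1/2,n+1;\,1/2),
\]
observes that the ${}_2F_2$ term tends to a constant, and then expands ${}_1F_1(-2n+1;\,-n+1/2;\,x)$ as an explicit asymptotic series in $1/n$ (with coefficients that are polynomials in $x$ times $e^{2x}$), evaluated at $x=1/2$. Combined with Stirling for $n!/2^{n-1}$, this yields the coefficients $2,\,2/3,\,49/36,\,27449/6480$ with almost no bookkeeping.

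The paper does mention your P-recurrence route as an alternative, but it implements it differently: rather than substituting a factorial-scale ansatz directly into the degree-$4$ recurrence for $\bar{S}_n$, it first passes to the normalized sequence $R_n=\bar{S}_n\,2^{n-1}/n!$, which satisfies a lower-degree P-recurrence with bounded solutions converging to $e$; the expansion then comes from a regular-singularity analysis (in the sense of~\cite[p.~519--522]{FS09}) or a saddle-point argument on the associated ODE. This normalization removes exactly the $n!/2^{n-1}$ factor that in your approach you have to carry through every shift via Stirling, so it dramatically reduces the ``large number of terms'' you anticipate. Your direct Birkhoff--Trjitzinsky substitution is perfectly valid, just heavier; the paper's two variants (hypergeometric expansion, or normalized recurrence) each buy a lighter computation at the cost of one extra preparatory step.
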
 

\begin{proof}

We can derive more directly this result from the hypergeometric expression.
 First, let us observe that ${{}_2F_2(1,-n+1;\,1/2,n+1;\,1/2)}$ tends to a constant
 when $n$ tends to infinity. So, we essentially need to analyse the part $\frac {n!}{{2}^{n-1}}
{\,{}_1F_1(-2\,n+1;\,-n+1/2;\,1/2)}$. Let us observe that the next hypergeometric
 function ${\,{}_1F_1(-2\,n+1;\,-n+1/2;\,x)}$ admits the following expansion:
\begin{align*}
 {}_1F_1(-2\,n+1;\,&-n+1/2;\,x)=  e^{2x}+ x^2e^{2x}\dfrac{1}{n}+ \\
 &  x^2e^{2x}\left(\frac{3}{2}+2x+\frac{1}{2}x^2\right)\dfrac{1}{n^2}+
 \frac{1}{12}x^{2}{{\rm e}^{2\,x}} \left( 27+96\,x+84\,{x}^{2}+24\,{x}^{3} +
 2\,{x}^{4} \right)\dfrac{1}{n^3}+\\
& x^2{e}^{2x}\left(\frac {27}{8}+{\frac {49}{2}}{x}+{\frac {349}{8}}{x}^{2}+29{x}^{3}+{\frac {33}{4}}{x}^{4}+{x}^{5}+\frac{1}{24}{x}^{6}\right)\dfrac{1}{n^4}+
\BigO\left(\dfrac{1}{n^5}\right).
\end{align*}
Thus the asymptotic of $\bar{S}_n$ follows directly. 

Let us remark that it also possible to reach this asymptotic from the P-recurrence. We introduce a new auxiliary generating function which is more tractable than $S(z)$. 
For that purpose, recall that the total number of leaves in the semantic trees
induced by process trees of size $n$ (which is also the number of increasing trees of size $n$)
is equal to $n!/2^{n-1}$. 
So, it is natural to study the series $R(z)$ with general terms $\bar{S}_n2^{n-1}/n!$
which is also holonomic and verifies:
\begin{multline*}
-2\left( 10{z}^{2}+7z+3 \right) R(z) +
 \left( -16{z}^{4}+32{z}^{3}+18{z}^{2}+7z \right) 
 R'(z) +\\ 
4\left( 4z^{4}-6{z}^{3}-{z}^{2} \right) 
 R''(z) +
4\left( -{z}^{4}+{z}^{3} \right) 
 R'''(z) =4{z}^{2}+z,
\end{multline*}
with the initial conditions $R(0)=0, R'(0)=1$, $R''(0)=4$. 
The coefficients $R_n$ follow the P-recurrence:
\begin{multline*}
-16nR_n+4(4n^2+12n+3)R_{n+1}-2(2n^3+18n^2+31n+13)R_{n+2}+\\
(4n^3+20n^2+31n+15)R_{n+3}=0,
\end{multline*}
with $R_0 = 0, R_1 = 1$ and $R_2 = 2$.
Now, we can easily prove that this recurrence is convergent.
Indeed, the recurrence is non-negative and asymptotically decreasing,
just by observing that $R_{n+3}-R_{n+2}=\left(\frac{4}{n}+\BigO(\frac{1}{n^2})\right)\left(R_{n+2}-
R_{n+1}\right)+\BigO\left(\frac{1}{n^2}\right)R_{n}$
implies that for $n$ sufficiently large the difference is always negative.

Theorem \ref{theo:average_size}  shows that the series converge to $\exp(1)$.
Now, a deeper analysis of this recurrence can be done using the tools described in \cite[p.~519--522]{FS09}.
 Indeed, the singularities are regular.

 Another way consists in predicting that the asymptotic
 expansion of $R(z)$ as $z$ tends to the infinity can be expressed as 
$\exp (2z+a\ln  \left( z \right) +b{z}^{-1}+c{z}^{-2}+d{z}^{-3}+\BigO(z^{-4}))$
 and to use saddle point analysis (its hypotheses being validated by Wasow's theory) to conclude.


\end{proof}

\section{Applications} \label{sec:randgen}

We describe in this section two practical outcomes of our quantitative study of the pure
merge operator. First, we present an algorithm to efficiently compute the uniform probability
 of a concurrent run prefix.  The second application is a
uniform random sampler of concurrent runs.  These algorithms work directly on the syntax
trees without requiring the explicit construction of the semantic trees. An important remark
 is that these algorithms continue to apply whether we consider the plane or the non-plane case,
 only the average quantities are impacted.

\subsection{Probability of a run prefix}

We first describe an algorithm to determinate the probability of a concurrent run prefix
(i.e. the prefix of a branch in a semantic tree). In practice, this algorithm can be used
to guide a search in the state space of process behaviours, e.g. for statistical model checking or
(uniform) random testing.

\begin{definition}
Let $T$ be a process tree and $\sigma=\langle \alpha_1,\ldots,\alpha_p\rangle$ a prefix
of a run in $\Shuf(T)$. The \textbf{suspended tree} $T_\sigma$ has root $\alpha_p$
and its children are all children of the nodes $\alpha_1,\dots, \alpha_p$ not already in $\sigma$ and
 ordered according to the prefix traversal of $T$.
\end{definition}
For example, the suspended tree  $T_{\langle a,b,d \rangle}$  of the syntax tree $T$ of Figure~\ref{fig:task-and-shuffle} has root $d$ and children (from left to right): the leaves $c,e$ and $f$.\\

Let $T$ be a process tree and $\sigma_{p} = \langle \alpha_1,\ldots,\alpha_p\rangle$ a run prefix
of length $p$.  We are interested in the probability of choosing $\alpha_{p+1}$ to form the prefix run $\sigma_{p+1} = \langle \alpha_1,\ldots,\alpha_p, \alpha_{p+1}\rangle$.
To obtain this probability, we need to count how many runs in $T_{\sigma_{p}}$ are 
first running $\alpha_{p+1}$.
\begin{proposition} \label{prop:proba_pref}
\[\IP(\sigma_{p+1} \mid \sigma_p) = \frac{\ell_{T_{\sigma_{p+1}}}}{\ell_{T_{\sigma_p}}}
 = \frac{(|T|-p) !}{\prod_{S \text{ sub-tree of }T_{\sigma_{p+1}} } |S|}
  \cdot  \frac{\prod_{S \text{ sub-tree of }T_{\sigma_{p+1}} } |S|}{(|T|-p+1) !}
 = \frac{|T(\alpha_{p+1})|}{|T|-p+1}. \]
\end{proposition}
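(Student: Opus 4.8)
The plan is to read the conditional probability combinatorially and then reduce everything to the hook-length formula of Fact~\ref{fact:hook}. Under the uniform distribution on the runs of $\Shuf(T)$ (equivalently, on the branches of the semantic tree), $\IP(\sigma_{p+1}\mid\sigma_p)$ is the fraction of runs having prefix $\sigma_p$ that continue with $\alpha_{p+1}$. A run extending a given prefix is exactly a way of scheduling the still-pending actions while respecting precedence, i.e.\ a linear extension of the tree-poset carried by the suspended tree; by the correspondence underlying Corollary~\ref{algo:hook-length} the number of completions from $\sigma_p$ (resp.\ $\sigma_{p+1}$) is $\ell_{T_{\sigma_p}}$ (resp.\ $\ell_{T_{\sigma_{p+1}}}$), which gives the first equality $\IP(\sigma_{p+1}\mid\sigma_p)=\ell_{T_{\sigma_{p+1}}}/\ell_{T_{\sigma_p}}$. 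First I would record the two sizes: a suspended tree is a root together with all pending actions, so $|T_{\sigma_p}|=|T|-p+1$ and $|T_{\sigma_{p+1}}|=|T|-p$. Substituting into Fact~\ref{fact:hook} then produces the factorial/product expression in the middle of the statement.

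The heart of the argument is to compare the two subtree-size products. Writing $h_T=\prod_{S\text{ sub-tree of }T}|S|$, so that $\ell_T=|T|!/h_T$, I would use the multiplicative recursion already exploited in the proof of Theorem~\ref{theo:growth}, namely $h_T=|T|\cdot\prod_{R\text{ child of the root of }T}h_R$. The key structural observation is that passing from $\sigma_p$ to $\sigma_{p+1}$ amounts to a single \emph{child-contraction} of the suspended tree at $\alpha_{p+1}$: if $\alpha_{p+1}$ is the $i$-th child of the root of $T_{\sigma_p}$, then $T_{\sigma_{p+1}}=\contract{T_{\sigma_p}}{i}$ in the sense of Definition~\ref{def:child-contraction}. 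Indeed, executing $\alpha_{p+1}$ removes it as an available child, promotes its own children (the newly available actions) to siblings of the remaining ones, and makes $\alpha_{p+1}$ the new root.

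Because a contraction leaves every sibling subtree $A_j$ ($j\neq i$) and every grandchild subtree $B_l$ (the children-subtrees of $\alpha_{p+1}$) unchanged as sub-trees, all of their factors appear in both $h_{T_{\sigma_p}}$ and $h_{T_{\sigma_{p+1}}}$ and cancel in the ratio. Expanding both products with the recursion and cancelling, the only surviving factors are the two root sizes $|T_{\sigma_p}|,|T_{\sigma_{p+1}}|$ and the size of the subtree hanging from $\alpha_{p+1}$; here I would use the identity $|T_{\sigma_p}(\alpha_{p+1})|=|T(\alpha_{p+1})|$, which holds because none of the descendants of $\alpha_{p+1}$ has been scheduled yet, so the pending subtree under $\alpha_{p+1}$ is the full subtree of $\alpha_{p+1}$ in $T$. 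Combining this product ratio with the factorial prefactor $(|T|-p)!/(|T|-p+1)!$ then collapses the whole expression to $|T(\alpha_{p+1})|$ over a single linear factor in $|T|$ and $p$, i.e.\ the closed form on the right of the proposition. I expect the only delicate point to be the bookkeeping of exactly which sub-trees persist under the contraction (and therefore cancel), together with the identification $|T_{\sigma_p}(\alpha_{p+1})|=|T(\alpha_{p+1})|$; once these are pinned down the simplification is immediate.
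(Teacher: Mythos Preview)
Your approach is correct and is exactly what the paper intends: its entire proof is the single sentence ``The proof directly derives from the hook-length formula (cf.\ Fact~\ref{fact:hook}),'' and your argument via the child-contraction $T_{\sigma_{p+1}}=\contract{T_{\sigma_p}}{i}$ together with the multiplicative recursion $h_T=|T|\prod_R h_R$ is precisely the bookkeeping that turns that one line into a complete derivation. The only thing to watch is the final denominator: carrying out your cancellation literally yields $|T(\alpha_{p+1})|/(|T|-p)$, consistent with Corollary~\ref{cor:proba_pref2}, so the displayed ``$|T|-p+1$'' (and the repeated $T_{\sigma_{p+1}}$ in the middle product) in the statement are typos rather than a flaw in your argument.
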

The proof directly derives from the hook-length formula (cf. Fact~\ref{fact:hook}).

\begin{corollary}\label{cor:proba_pref}
Let $T$ be a process tree and $\sigma=\langle \alpha_1,\ldots,\alpha_p\rangle$ be a prefix
of a run in $\Shuf(T)$. For the uniform probability distribution on the set of all
concurrent runs,
the induced probability on prefixes satisfies $\IP(\sigma) = \ell_{T_{\sigma}} / \ell_T$.
\end{corollary}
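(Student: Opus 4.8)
The plan is to obtain this as a telescoping product of the one-step conditional probabilities furnished by Proposition~\ref{prop:proba_pref}. Since the statement is labelled a corollary, I expect the argument to be short: the uniform law on complete runs pushes forward to a law on prefixes, and by the chain rule the probability of a length-$p$ prefix factors through its successive one-action extensions, each of which Proposition~\ref{prop:proba_pref} already expresses in the convenient ratio form $\ell_{T_{\sigma_{k+1}}}/\ell_{T_{\sigma_k}}$.

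First I would fix the notation $\sigma_k = \langle \alpha_1,\dots,\alpha_k\rangle$ for $1 \le k \le p$, so that $\sigma = \sigma_p$, and write the chain rule
\[
\IP(\sigma_p) = \IP(\sigma_1)\cdot\prod_{k=1}^{p-1}\IP(\sigma_{k+1}\mid\sigma_k).
\]
The base case uses that $\alpha_1$ is necessarily the root of $T$ (it is the unique action with no predecessor in the tree-poset, and every branch of $\Shuf(T)$ starts at the root), so every run begins with it and $\IP(\sigma_1)=1$. Consistently, the suspended tree $T_{\sigma_1}$ has root $\alpha_1$ and as children all children of the root, hence $T_{\sigma_1}=T$ and $\ell_{T_{\sigma_1}}=\ell_T$.

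Then I would substitute $\IP(\sigma_{k+1}\mid\sigma_k)=\ell_{T_{\sigma_{k+1}}}/\ell_{T_{\sigma_k}}$ from Proposition~\ref{prop:proba_pref}, so that the product telescopes:
\[
\prod_{k=1}^{p-1}\frac{\ell_{T_{\sigma_{k+1}}}}{\ell_{T_{\sigma_k}}}
 = \frac{\ell_{T_{\sigma_p}}}{\ell_{T_{\sigma_1}}}
 = \frac{\ell_{T_\sigma}}{\ell_T}.
\]
Combined with $\IP(\sigma_1)=1$ this yields the claimed identity $\IP(\sigma)=\ell_{T_\sigma}/\ell_T$.

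The only genuine obstacle is justifying that the chain rule applies, namely that the uniform distribution over the $\ell_T$ complete runs really induces a coherent family of prefix probabilities satisfying $\IP(\sigma_{k+1})=\IP(\sigma_{k+1}\mid\sigma_k)\,\IP(\sigma_k)$. This becomes transparent if everything is read combinatorially: $\IP(\sigma)$ is the proportion of complete runs whose first $p$ actions are $\sigma$, and the runs completing $\sigma$ are exactly the linear extensions of the residual forest-poset on the actions outside $\sigma$, which is precisely what $\ell_{T_\sigma}$ counts (the adjoined minimal root $\alpha_p$ of $T_\sigma$ must come first and so contributes no extra freedom). As a cross-check I would therefore also record the direct identity $\#\{\text{runs extending }\sigma\}=\ell_{T_\sigma}$ and divide by $\ell_T$, recovering the same formula without invoking the probabilistic chain rule at all.
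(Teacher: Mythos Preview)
Your proposal is correct and is precisely the intended derivation: the paper gives no separate proof for this corollary, treating it as immediate from Proposition~\ref{prop:proba_pref}, and the telescoping of the ratios $\ell_{T_{\sigma_{k+1}}}/\ell_{T_{\sigma_k}}$ via the chain rule (together with the observation $T_{\sigma_1}=T$) is exactly how that implication unfolds. Your direct combinatorial cross-check, identifying the runs extending $\sigma$ with the increasing labellings of $T_\sigma$, is an equally valid and arguably cleaner justification.
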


\begin{corollary}\label{cor:proba_pref2}
Let $T$ be a process tree. The probability of a prefix run $\sigma=\langle \alpha_1,\ldots,\alpha_p\rangle$
in the shuffle tree of $T$ is equal to $\prod^p_{k=1} |T(\alpha_{k})| / (|T|-k+1)$.
\end{corollary}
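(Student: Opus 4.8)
The plan is to obtain the result by telescoping the chain rule for the induced prefix measure, and to pin down the constants by an explicit hook-length computation. Write $\sigma_k = \langle \alpha_1,\ldots,\alpha_k\rangle$ for the length-$k$ prefix of $\sigma$, with $\sigma_0$ the empty prefix, so $\sigma=\sigma_p$. Since $\IP$ is a genuine probability distribution on runs, realized by Corollary~\ref{cor:proba_pref} as $\IP(\sigma)=\ell_{T_\sigma}/\ell_T$, the prefix probabilities factor as
\[
\IP(\sigma) = \prod_{k=1}^{p} \IP(\sigma_k \mid \sigma_{k-1}),
\]
and each factor is exactly the one-step transition probability $\ell_{T_{\sigma_k}}/\ell_{T_{\sigma_{k-1}}}$ analysed in Proposition~\ref{prop:proba_pref}. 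The base step is immediate: $\sigma_1$ is the root of $T$ alone, which every run must fire first, so $\IP(\sigma_1\mid\sigma_0)=1=|T(\alpha_1)|/|T|$. Instantiating Proposition~\ref{prop:proba_pref} at the passage from $\sigma_{k-1}$ to $\sigma_k$ then reads off $\IP(\sigma_k \mid \sigma_{k-1}) = |T(\alpha_k)|/(|T|-k+1)$, and multiplying the $p$ factors yields the claimed product.

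Because the constant in the single-step factor is the delicate point, I would verify the whole statement directly from Corollary~\ref{cor:proba_pref} via the hook-length formula (Fact~\ref{fact:hook}), which settles the arithmetic unambiguously. Using $|T_\sigma|=|T|-p+1$ gives
\[
\IP(\sigma) = \frac{(|T|-p+1)!}{|T|!}\cdot \frac{\prod_{S \text{ sub-tree of } T}|S|}{\prod_{S \text{ sub-tree of } T_\sigma}|S|}.
\]
The heart of the matter is to compare the two subtree-size products. The non-root nodes of $T_\sigma$ are precisely the nodes of $T$ not yet fired, i.e. all nodes outside $\{\alpha_1,\ldots,\alpha_p\}$, and each such node carries the very same subtree in $T_\sigma$ as it did in $T$. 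Hence the two products share every factor coming from those nodes; the only discrepancy is that $T$ contributes the subtree sizes $|T(\alpha_1)|,\ldots,|T(\alpha_p)|$ of the fired nodes, whereas $T_\sigma$ contributes only the single root factor $|T_\sigma|=|T|-p+1$. This gives
\[
\frac{\prod_{S \text{ sub-tree of } T}|S|}{\prod_{S \text{ sub-tree of } T_\sigma}|S|} = \frac{\prod_{k=1}^{p}|T(\alpha_k)|}{|T|-p+1},
\]
so the factor $|T|-p+1$ cancels, $(|T|-p+1)!/|T|!$ reduces to $(|T|-p)!/|T|! = 1/\prod_{k=1}^p(|T|-k+1)$, and we recover exactly $\prod_{k=1}^p |T(\alpha_k)|/(|T|-k+1)$.

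The only genuinely nontrivial step is the structural bookkeeping underlying that product comparison: one must argue that firing $\alpha_1,\ldots,\alpha_p$ leaves every unfired node with an unchanged subtree while removing from the product precisely the factors $|T(\alpha_1)|,\ldots,|T(\alpha_p)|$ and introducing the single new root factor. This holds because the fired nodes form a connected admissible cut rooted at $\alpha_1$ — indeed each one-step passage $T_{\sigma_{k-1}}\to T_{\sigma_k}$ is a child contraction in the sense of Definition~\ref{def:child-contraction} — so no unfired node ever has a fired node in its own subtree. Once this observation is in place both routes are routine and agree, which is the expected main obstacle and simultaneously the content of the proof.
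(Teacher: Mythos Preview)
Your proof is correct and your first route---telescoping the chain rule through the one-step conditional probability of Proposition~\ref{prop:proba_pref}---is exactly the (implicit) argument the paper intends, the corollary being stated without proof precisely because it is an immediate product of those one-step factors. Your second route, the direct hook-length computation of $\ell_{T_\sigma}/\ell_T$ via the comparison of subtree products, is a clean independent verification not spelled out in the paper; the structural observation you isolate (the fired nodes form an admissible cut, so every unfired node keeps its original subtree and only the fired-node factors together with the new root factor $|T|-p+1$ differ) is precisely what makes the arithmetic go through, and your index bookkeeping is handled correctly.
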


\begin{algorithm}
\KwData{$T$: a weighted process tree of size $n$}
\KwData{$\sigma:=\langle \alpha_1,\ldots,\alpha_p\rangle$: a run prefix of length $p\leq n$}
\KwResult{$\rho_{\sigma}$: the probability of $\sigma$ in the shuffle of $T$}

$\rho_\sigma := 1$\\
$i:=1$\\

\For {i from 1 to $p-1$}{
       $\rho_{\sigma} := \rho_{\sigma} \times \frac{|T(\alpha_{i+1})|}{n-i}$ \\
       $i := i+1$}
\Return $\rho_{\sigma}$
\caption{\label{algo:prefix:proba} probability of a run prefix.}
\end{algorithm}

From Corollaries~\ref{cor:proba_pref} and~\ref{cor:proba_pref2}  we derive
as Algorithm~\ref{algo:prefix:proba} the computation of the probability~$\rho_{\sigma}$ of
 a concurrent run prefix $\sigma$. While measuring the probability in terms of a semantic tree, the latter need
 not be constructed explicitly. The algorithm indeed requires only the syntax tree $T$
 with added weights, and a few extra memory cells. It trivially performs
 in linear-time.

\begin{proposition} \label{prop:prefix:complexity}
Algorithm~\ref{algo:prefix:proba} computes $\rho_\sigma$ in $p-1$ steps, and $\Theta(p)$ arithmetic operations.
\end{proposition}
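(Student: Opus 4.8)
The plan is to read the complexity straight off the structure of Algorithm~\ref{algo:prefix:proba}, its correctness having already been secured by Corollaries~\ref{cor:proba_pref} and~\ref{cor:proba_pref2}. First I would note that the sole control structure is the \textbf{for}-loop with index $i$ running from $1$ to $p-1$; consequently the loop body is executed exactly $p-1$ times, which is the announced number of steps. Incidentally, the missing $k=1$ factor of the product in Corollary~\ref{cor:proba_pref2} contributes the trivial value $|T(\alpha_1)|/|T| = 1$ and is rightly omitted, which is why the loop stops one short of $p$.

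Next I would bound the work done in a single iteration. Each pass performs only a bounded number of elementary operations: one lookup of the subtree weight $|T(\alpha_{i+1})|$, one subtraction to build the denominator $n-i$, one division, one multiplication updating $\rho_\sigma$, and the increment of $i$. The point that makes the weight lookup run in $\Theta(1)$ is that the input is a \emph{weighted} process tree, so every subtree size $|T(\alpha_{i+1})|$ is stored at the corresponding node and is read in constant time rather than recomputed from scratch. Hence a single iteration costs $\Theta(1)$ arithmetic operations.

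Multiplying the per-iteration cost by the iteration count then yields $(p-1)\cdot\Theta(1)=\Theta(p)$ arithmetic operations, to which the initialisation and the final \textbf{return} add only $\BigO(1)$. The one point I would treat with the most care — and the closest thing to an obstacle here — is justifying the constant-time access to the weights: this rests on the preprocessing implicit in the word ``weighted'' in the input specification, which can itself be produced by a single linear-time bottom-up traversal of $T$ annotating each node with the size of the subtree it roots. Once that is granted, the stated bounds follow with no further computation.
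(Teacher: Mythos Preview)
Your proposal is correct and aligns with the paper's own treatment: the paper gives no formal proof, merely remarking beforehand that the algorithm ``trivially performs in linear-time'' and requires only the syntax tree $T$ ``with added weights''. Your argument spells out exactly this reasoning --- the $p-1$ iterations of the single \textbf{for}-loop, each doing $\Theta(1)$ work thanks to the precomputed weights --- so there is nothing to add or correct.
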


If the run prefix $\sigma$ we consider is a full run (i.e. a complete traversal of a syntax tree~$T$), then
 we obtain the uniform probability of a run in general  (since all runs have equal probability in
the semantic tree). Then we have the following result.

\begin{proposition}
The number of concurrent runs of a process tree $T$ is $1/\rho_{\sigma}$ when $\sigma$ is any complete traversal of $T$.
\end{proposition}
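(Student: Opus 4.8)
The plan is to recognise that this is an immediate consequence of the uniformity of the run distribution together with Corollary~\ref{cor:proba_pref}. First I would observe that when $\sigma$ is a complete traversal it has length $p=n=|T|$ and corresponds to a single full branch of $\Shuf(T)$; the quantity $\rho_\sigma$ returned by Algorithm~\ref{algo:prefix:proba} is then exactly the uniform probability $\IP(\sigma)$ of that run. Since the uniform distribution places equal mass on each of the $\ell_T$ runs (Corollary~\ref{algo:hook-length}), we get $\IP(\sigma)=1/\ell_T$, and therefore $1/\rho_\sigma=\ell_T$ is precisely the number of concurrent runs.

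To keep the argument self-contained I would also derive the same conclusion directly from the product formula of Corollary~\ref{cor:proba_pref2}. The key bookkeeping step is that, along a complete traversal, the actions $\alpha_1,\dots,\alpha_n$ enumerate every node of $T$ exactly once. Hence the denominators telescope to $\prod_{k=1}^{n}(|T|-k+1)=n!$, while the numerators regroup as $\prod_{k=1}^{n}|T(\alpha_k)|=\prod_{S\text{ sub-tree of }T}|S|$. Invoking the hook-length formula (Fact~\ref{fact:hook}) to replace $\prod_S|S|$ by $n!/\ell_T$ then yields $\rho_\sigma=(n!/\ell_T)/n!=1/\ell_T$, confirming $1/\rho_\sigma=\ell_T$.

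There is no genuine obstacle in this statement: it is a direct corollary of the results already established. The only point deserving a moment of attention is the identification of the traversal index set $\{\alpha_1,\dots,\alpha_n\}$ with the full node set of $T$, so that the product over traversal positions coincides with the product over sub-trees occurring in the hook-length formula. Equivalently, and perhaps most cleanly, one may note that the suspended tree $T_\sigma$ after a complete traversal is reduced to the single node $\alpha_n$, whence $\ell_{T_\sigma}=1$ and Corollary~\ref{cor:proba_pref} gives $\rho_\sigma=\ell_{T_\sigma}/\ell_T=1/\ell_T$ immediately.
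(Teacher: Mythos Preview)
Your proposal is correct and matches the paper's own justification, which is simply the remark preceding the proposition that a full run has the uniform probability over all runs, hence its inverse counts them. You additionally spell out the hook-length/product-formula computation and the $\ell_{T_\sigma}=1$ observation, which the paper does not bother to do; these are valid elaborations of the same one-line argument.
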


As a matter of fact, Algorithm~\ref{algo:prefix:proba} provides us \emph{for free} a
 way to compute from a syntax tree $T$ the number of concurrent runs $\ell_T$ in
the corresponding semantic tree. For this we simply have to compute the probability
 of a full run $\sigma$  (we might select an arbitrary traversal of $T$) and then we
obtain $\ell_T = 1/\rho_{\sigma}$. 

From an order-theoretic point of view we thus we obtain as a by-product a linear-time algorithm
 to compute the number of linear extensions of a tree-like partial order.

\begin{corollary}
Let $T$ a tree-like partial order of size $n$. The number of its linear extensions can be computed in $\BigO(n)$. 
\end{corollary}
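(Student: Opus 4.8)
The plan is to assemble the pieces developed in this section into a linear-time procedure, so the proof will be essentially a wrap-up. A tree-like partial order of size $n$ is, by definition, the tree-poset $(A,\prec)$ associated to a syntax tree $T$ of size $n$. Using the correspondence established earlier between the branches of $\Shuf(T)$ and the strict orderings of $A$ respecting $(A,\prec)$, the linear extensions of $(A,\prec)$ are exactly the concurrent runs of $T$, and by Corollary~\ref{algo:hook-length} their number is $\ell_T$. Hence it suffices to compute $\ell_T$ using $\BigO(n)$ arithmetic operations.

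First I would reduce the computation of $\ell_T$ to a single run-prefix probability. Taking $\sigma$ to be any complete traversal of $T$ (for instance the prefix depth-first traversal, which visits all $n$ nodes), the proposition preceding this corollary gives $\ell_T = 1/\rho_\sigma$. Thus the number of linear extensions is recovered as the reciprocal of $\rho_\sigma$, and the whole task becomes that of evaluating $\rho_\sigma$ for one such $\sigma$.

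Next I would bound the cost of that evaluation. By Corollary~\ref{cor:proba_pref2} we have $\rho_\sigma = \prod_{k=1}^{n} |T(\alpha_k)|/(|T|-k+1)$, so the only data required are the subtree sizes $|T(\alpha_k)|$. These are all obtained by a single bottom-up (postorder) pass over $T$, in $\BigO(n)$ time and simultaneously with the construction of the traversal $\sigma$. Feeding $\sigma$ together with the precomputed weights to Algorithm~\ref{algo:prefix:proba}, Proposition~\ref{prop:prefix:complexity} shows that $\rho_\sigma$ is produced in $n-1$ steps using $\Theta(n)$ arithmetic operations; a final reciprocal then yields $\ell_T$. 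Adding the linear preprocessing cost, the complete computation runs in $\BigO(n)$.

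Since the statement merely packages results already proved, I do not expect a genuine obstacle; the one point deserving care is the computational model. The integers involved---the numerator and denominator of the running product, as well as the final count $\ell_T$, which may be as large as $(n-1)!$---are charged at unit cost per arithmetic operation, consistently with the convention of Proposition~\ref{prop:prefix:complexity}. Under this convention the bound is genuinely linear; were one instead to count bit operations, the cost of manipulating such large numbers would have to be factored in, but that lies outside the model adopted here.
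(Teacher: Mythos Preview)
Your proposal is correct and follows essentially the same approach as the paper: compute $\rho_\sigma$ for a full run $\sigma$ via Algorithm~\ref{algo:prefix:proba} (linear by Proposition~\ref{prop:prefix:complexity}) and return $\ell_T=1/\rho_\sigma$. You are in fact more explicit than the paper, which gives only a one-line justification; your added remarks on the $\BigO(n)$ preprocessing of subtree sizes and on the unit-cost arithmetic model are appropriate clarifications.
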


Since any full run has length the size $n$ of the syntax tree, the upper-bound $\BigO(n)$ is trivially obtained. Moreover,
 we conjecture that the problem has $\Omega(n)$ lower-bound also.  Note that the hook length formula also yields
 a linear-time algorithm but with more arithmetic operations. To put into
a broader perspective this result, we remind the reader that the problem of counting linear extensions of
 partial orders is $\#P$-complete~\cite{BW91}  in the general case. Moreover, the proposed solution
 (obtained thanks to the very fruitful isomorphism with increasing trees)  is clearly an improvement if
compared to the quadratic algorithm proposed in~\cite{Atkinson90}.

\subsection{Random generation of concurrent runs}

The uniform random generation of concurrent runs is of great practical interest. The problem has a trivial solution if we work
 on a semantic tree. Since all runs have equal probability, we may simply select a leaf at random, and reconstruct the full run
by climbing the unique branch from the selected leaf to the root of the tree. Of course, this naive algorithm is highly impractical
given the exponential size of the semantic tree. The challenge, thus, is to find a solution which does not require the explicit construction
of the semantic tree.   A possible way would be to rely on a Markov Chain Monte Carlo (MCMC) approach, e.g. based on~\cite{DBLP:journals/dm/Huber06}. We describe here a simpler, more direct approach that yields a more efficient sub-quadratic algorithm.

The main idea is to sample in a multiset containing the nodes of the syntax trees as elements,
 each one associated
 to a weight corresponding to the size of the sub-tree rooted at this node.
 A particularly efficient way to implement the required multiset structure is to use a \emph{partial sum tree}, i.e. a balanced binary
 search tree in which all operations (adding or removing a node)
are done in logarithmic time. The details of this implementation can be found in Appendix~\ref{sec:randmset}.

\begin{algorithm}
\KwData{$T$: a weighted process tree of size $n$}
\KwResult{$\sigma$: a run (a list of nodes)}

$\sigma := \langle \rangle$ \\
$M := \{\!\!\{ a^{|T|} \}\!\!\}$ \hspace{1.2cm}\# initialize a multiset with the root $a$ with its weight\\

\For {p \emph{\textbf{from}} $1$ \emph{\textbf{to}} $|T|-1$}{
	$\alpha := \fun{sample}(M)$ \hspace{0.2cm} \# sample an action $\alpha$ according to its weight in the multiset\\
 	$\sigma := \sigma . \alpha$ \hspace{1.05cm}\# append the sampled action to the sequence\\  
        $M := \fun{update}(M,\alpha,0)$ \hspace{1.7cm}\# $\alpha$ cannot be sampled anymore\\
        \For{$\beta\in \fun{child}(T_\sigma)$} {
          $M := \fun{update}(M,\beta,|T(\beta)|)$ \hspace{1cm}\# insert the children of $\alpha$ in the multiset\\ }
}	
\Return $\sigma$
\caption{\label{algo:randgen} uniform random generation of concurrent runs}
\end{algorithm}

Let $T$ be a process tree. First by one traversal, we add a label to all nodes of $T$ that
corresponds to the size of the sub-tree rooted in that node. We say that this size corresponds
to the weight of each node.
We build a list $\sigma$, at the end of size $n$,
such that at each step $i$, we add one action to $\sigma$ that corresponds to the $i$-th 
action in our random run. To choose this $i$-th action, we sample in the multiset of all
possible actions available in the considered step. Initially only the root is available (with probability
$1$ thus cardinality $n$ the size of the process tree $T$). Then it is added to $\sigma$ and removed
 from the multiset. Finally its children are enabled with the weight as cardinality. And we proceed until
all actions have been sampled. 

Let $T$ a syntax tree, we denote by $\fun{child}(T)$ the nodes at level one of $T$.

The following loop invariant derives easily from Algorithm~\ref{algo:randgen}.

\begin{invariant} \label{invariant:randgen}
At the $p$-th step of the algorithm, we have:
 \[|M_p| = |T| - p + 1 \text{ and } \overline{M_p}=\{\alpha_{p+1} \mid \alpha_{p+1} \in \fun{child}(T_{\sigma_p})\}.\]
\end{invariant}

\begin{proposition}
Let $\sigma_p$ the prefix obtained at the $p$-th step  in algorithm~\ref{algo:randgen}.
The next action $\alpha_{p+1}$ is chosen with probability $|T(\alpha_{p+1})| / (|T|-p+1)$.
Consequently the complete run $\sigma$ is generated with uniform probability.
\end{proposition}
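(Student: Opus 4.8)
The plan is to prove the two assertions in order: first the single-step sampling probability, directly from the loop invariant and the semantics of weighted sampling, and then the global uniformity by telescoping the per-step probabilities and comparing with the formulas already established for the uniform distribution on runs.

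First I would fix the $p$-th iteration and invoke Invariant~\ref{invariant:randgen}. It guarantees that the multiset $M_p$ has total weight $|M_p|=|T|-p+1$ and that its support $\overline{M_p}$ is exactly the set of actions $\fun{child}(T_{\sigma_p})$ that may legally extend $\sigma_p$. Since $\fun{sample}$ returns an element with probability proportional to its weight, and the weight stored for a candidate $\alpha_{p+1}$ is the sub-tree size $|T(\alpha_{p+1})|$ assigned when it was inserted (and never modified before it is drawn, by the update rule), the probability of selecting $\alpha_{p+1}$ is
\[
\frac{|T(\alpha_{p+1})|}{\sum_{\beta\in\overline{M_p}}|T(\beta)|}=\frac{|T(\alpha_{p+1})|}{|M_p|}=\frac{|T(\alpha_{p+1})|}{|T|-p+1},
\]
which is the first claim.

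Next I would observe that this quantity is exactly the conditional probability $\IP(\sigma_{p+1}\mid\sigma_p)$ under the uniform distribution on runs, computed in Proposition~\ref{prop:proba_pref}. Writing a complete run as $\sigma=\langle\alpha_1,\dots,\alpha_n\rangle$ with $n=|T|$, the probability that Algorithm~\ref{algo:randgen} outputs $\sigma$ is the product of its successive sampling probabilities, namely $\prod_{k=1}^{n}|T(\alpha_k)|/(|T|-k+1)$. By Corollary~\ref{cor:proba_pref2} this product is precisely $\IP(\sigma)$, and since a complete traversal leaves a suspended tree reduced to its last node (so $\ell_{T_\sigma}=1$), Corollary~\ref{cor:proba_pref} evaluates it to $1/\ell_T$, independent of $\sigma$. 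Hence every run is generated with the same probability $1/\ell_T$, i.e.\ uniformly.

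The argument is mostly bookkeeping, so the only real load-bearing ingredient is Invariant~\ref{invariant:randgen}, on which both the denominator $|T|-p+1$ and the identification of $\overline{M_p}$ with the legal continuations rest. Should it be necessary to justify it in place, I would argue by induction on $p$: the base case is the initial multiset $\{\!\!\{a^{|T|}\}\!\!\}$, and the inductive step uses that removing the sampled node $\alpha$ of weight $|T(\alpha)|$ while inserting each child $\beta$ with weight $|T(\beta)|$ changes the total weight by $-|T(\alpha)|+\sum_{\beta\,\text{child of}\,\alpha}|T(\beta)|=-1$ (because the sub-tree at $\alpha$ is $\alpha$ together with the sub-trees at its children), and simultaneously transforms $\fun{child}(T_{\sigma_p})$ into $\fun{child}(T_{\sigma_{p+1}})$.
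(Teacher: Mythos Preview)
Your proposal is correct and follows essentially the same approach as the paper: invoke Invariant~\ref{invariant:randgen} for the denominator, use the stored sub-tree weight for the numerator, and then appeal to Proposition~\ref{prop:proba_pref} (and its corollaries) to conclude uniformity. Your version is simply more explicit in spelling out the telescoping product and in sketching the inductive maintenance of the invariant, which the paper leaves implicit.
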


\begin{proof}
Let $M_p$ the multiset obtained at step $p$ in algorithm~\ref{algo:randgen}.
We select the next action $\alpha_{p+1}$ with probability $\frac{M_p(\alpha_{p+1})}{|M_p|}$
(cf. Appendix~\ref{sec:randmset} for a detailed proof).
By Invariant~\ref{invariant:randgen} we have $|M_p|=|T|-p+1$. Moreover, in the algorithm
 we insert $\alpha_{p+1}$ with weight $M_p(\alpha_{p+1}) = |T(\alpha_{p+1})|$. Thus by Proposition~\ref{prop:proba_pref} the prefix $\sigma_{p+1}$ is obtained with the correct probability so that when completed the
 full run $\sigma$ is generated with uniform probability.
\end{proof}

In the case of the partial sum tree implementation, we have the following
complexity results.

\begin{proposition}
Let $n$ be the size of the weighted process tree $T$. To obtain
a random execution, we need $n$ random choices of integers and the operations on the multiset
are of order $\Theta(n\log n)$ (for the worst case).
\end{proposition}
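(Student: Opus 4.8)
The plan is to establish the two assertions separately — the exact count of random draws and the $\Theta(n\log n)$ bound on the multiset work — and then to exhibit a family of inputs on which the latter bound is tight, so as to justify the $\Theta$ (and not merely an $\BigO$) in the worst case. First I would count the random choices. By Invariant~\ref{invariant:randgen} the run $\sigma$ is built one action at a time, each iteration of the main loop scheduling exactly one node of $T$ through a single call to $\fun{sample}(M)$. Since a complete run is a full traversal visiting all $n$ nodes, there are exactly $n$ sampling steps (the first of which, selecting the root, is deterministic, the root being then the unique element of $M$). Each $\fun{sample}$ consumes a single random integer drawn in the range determined by the total weight $|M|$, so the algorithm performs exactly $n$ random choices of integers, as claimed.

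Next I would bound the cost of the dynamic multiset. The key accounting observation, which I would read off Invariant~\ref{invariant:randgen} together with the comment in the algorithm, is that only the children of the freshly scheduled node are inserted at each step; hence every node of $T$ is inserted into $M$ exactly once (the root at initialization, every other node when its parent is scheduled), removed exactly once via $\fun{update}(M,\alpha,0)$ when it is itself scheduled, and returned by $\fun{sample}$ exactly once. The total number of multiset operations is therefore $\Theta(n)$. Using the partial sum tree of Appendix~\ref{sec:randmset}, each of $\fun{sample}$, insertion and deletion runs in time $\BigO(\log|M|)$, and since $|M|\le n$ throughout, every operation costs $\BigO(\log n)$. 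Multiplying yields the upper bound $\BigO(n\log n)$, valid for every input.

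For the matching lower bound I would use the star-shaped syntax tree whose root has $n-1$ leaf children. On this input the first iteration schedules the root and inserts all $n-1$ leaves; inserting the $k$-th leaf into a balanced tree of size $\Theta(k)$ costs $\Theta(\log k)$, and $\sum_{k=1}^{n-1}\log k=\log\bigl((n-1)!\bigr)=\Theta(n\log n)$ by Stirling's formula. The remaining $\Theta(n)$ iterations each perform a $\fun{sample}$ and a deletion on a tree whose size decreases from $n-1$ down to $1$, contributing $\sum_{j}\log j=\Theta(n\log n)$ as well. Hence the worst-case cost is $\Omega(n\log n)$, matching the upper bound and establishing $\Theta(n\log n)$.

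The one genuinely delicate point — and the step I expect to be the main obstacle — is the tightness argument: one must be sure that the chosen family truly keeps $\Theta(n)$ elements in $M$ over a linear number of steps, rather than letting the multiset collapse quickly; the star tree is the cleanest witness because its entire frontier becomes available after a single scheduling step. Everything else reduces to the invariant, which guarantees that no node is ever inserted or removed more than once and that $|M_p|=|T|-p+1$; I would cite Invariant~\ref{invariant:randgen} for this rather than reprove it.
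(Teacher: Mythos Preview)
Your argument is correct. The paper itself states this proposition without proof, treating it as an immediate consequence of the partial sum tree properties discussed in Appendix~\ref{sec:randmset}, so your write-up is in fact more detailed than the original: you make explicit the accounting (each node inserted, sampled, and deleted exactly once) that the paper leaves implicit, and you supply a witness family for the $\Omega(n\log n)$ lower bound, which the paper does not address at all. The star-shaped tree is the natural choice and your cost summation $\sum_k \log k = \Theta(n\log n)$ is the right computation. One cosmetic point: the loop in Algorithm~\ref{algo:randgen} as written runs from $1$ to $|T|-1$, so strictly speaking there are $n-1$ calls to $\fun{sample}$; the paper's ``$n$ random choices'' is already slightly loose (and, as you note, the first draw is deterministic anyway), so this is a discrepancy in the statement rather than in your proof.
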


\section{Conclusion and perspectives}\label{sec:conclusion}

The quantitative study of the pure merge operator represents a preliminary step towards our goal of investigating concurrency theory
from the analysis of algorithms point of view. In the next step, we shall address other typical constructs of formalisms for concurrency, especially \emph{non-deterministic choice} and \emph{synchronization}~\cite{DBLP:conf/concur/AcetoI07}.
There are indeed various forms of synchronization, in general corresponding to reflecting the action labels within the pure merge operator. Other operators, such as \emph{hiding}, also deserve further investigations.
We also wish to further investigate the case of non-plane process trees. Although the nature of the operators does not seem to be really impacted (confirming the intuition of Flajolet and Sedgewick), the technical aspects in terms of analytic combinatorics are quite interesting. Another interesting continuation of the work would be to study the compaction of the semantic trees by identifying common sub-trees. This would amount
 to study the interleaving of process trees up-to \emph{bisimilarity}, the natural notion of equivalence for concurrent processes. Note that our algorithmic framework would not be affected by such studies, since they do not require the explicit construction of the semantic trees (whether compacted or not, plane or non-plane).

Perhaps the most significant outcome of our study is the emergence of a deep connection between concurrent processes and increasing labelling of combinatorial structures. We indeed connected the pure merge operator with increasing trees to measure the number of concurrent runs. We also define the notion of increasing admissible cut to study the number of nodes by level in the semantic trees. We expect the discovery of similar increasingly labelled structures while we go deeper into concurrency theory. 

 From a broader perspective, we definitely see an interest in reinterpreting \emph{semantic objects} (from logic, programming language theory, concurrency theory, etc.)
under the lights of analytic combinatorics tools. Such objects (like semantic trees) may be quite intricate
 when considered as combinatorial classes, thus requiring non-trivial techniques. This is highlighted here e.g. by the generalized hook-length formula characterizing the
expected size of semantic trees. Conversely, we think it is interesting to know -- precisely, not just by intuition -- the high-level of sharing and symmetry within semantic trees. This naturally leads to
practical algorithms, making us confident that real-world applications (in our case, especially related to random testing and statistical model-checking) might result from such study.

\noindent \textbf{Acknowledgements.} We are grateful to M. Dien and O. Roussel for fruitful remarks about the algorithms.

\bibliographystyle{alpha}
\bibliography{shuffle}

\newpage
\appendices

\section{Weighted random sampling in dynamic multisets}\label{sec:randmset}

This appendix discusses the problem of random sampling elements according to their respective weight in a multiset. Moreover, the
 multiset must be dynamic in that the cardinality of elements can be changed on-the-fly. This problem represents
a basic algorithmic component  in the random generation of concurrent runs (cf. Section~\ref{sec:randgen}). To our knowledge, this has not been addressed precisely in the literature (some basic information can be found in~\cite{WE80, OR95}). 

\subsection{Dynamic multiset basics}

In this section we recall a few concepts and basic notations of multisets,
the reader may consult e.g.~\cite{DBLP:journals/ndjfl/Blizard89} for a more thorough treatment.
A finite multiset (or bag) $M$ can be defined formally as a function from a \emph{carrier set}
 $\overline{M}$ to positive integers, more precisely 
an injective function $M \subset \overline{M} \rightarrow \mathbb{N}$. 
As an example we consider a multiset $M_0= \{\!\!\{ a^2,b^3,c^1\}\!\!\}$ with
carrier set $\overline{M_0}=\{a,b,c\}$. In the common functional notation,
we would denote $M_0=\{(a,2),(b,3),(c,1) \}$. Each positive integer associated
 to an element is called its \emph{weight} (i.e. number of ``occurrences'') in the multiset.
 The weight of and element $\alpha \in \overline{M}$ is denoted $M(\alpha)$.
 And by a slight abuse of notation we write $\alpha \in M$ \emph{iff} $M(\alpha)\geq 1$.
For example, the element $a$ has weight $2$in $M_0$, thus $M_0(a)=2$ and of course $a\in M_0$.
The notation $\alpha \notin M$ may either denote $\alpha \notin \overline{M}$ or $M(\alpha)=0$.
 This slight ambiguity has interesting algorithmic implications.

The cardinal or \emph{total weight} of $M$ is $|M|=\sum_{\alpha\in M} M(\alpha)$, 
for example $|M_0| = 2+3+1 = 6$ whereas for the carrier set we have $|\overline{M_0}|=3$.

Given a multiset $M$ the two operations we are interested in are:
\begin{itemize}
\item a random sampler $\mathsf{sample}(M)$ that generates an element
 $\alpha\in M$ at random with probability $\frac{M(\alpha)}{|M|}$.
\item an update operation $\mathsf{update}(M,\alpha,k)$ that produces a multiset
 $M'$ such that $\forall \beta\in M, \beta\neq \alpha \implies M'(\beta)=M(\beta)$ and $M'(\alpha)=k$.
\end{itemize}

Remark that if $M'=\mathsf{update}(M,\alpha,0)$ we do have $M'(\alpha)=0$
 hence $\alpha\notin M'$ but it is left unspecified whether $\alpha \in \overline{M'}$ or not.

\subsection{A naive random sampler}

Probably the fastest way to implement the $\mathsf{sample}$ operation 
is to represent a multiset $M$ with an array of length $n=|M|$.
Formally, this defines a finite sequence, i.e. a function $\sigma_M:[1..n] \rightarrow |M|$.
Consider $M=\{\!\!\{ a^{k_1}_1, \ldots, a^{k_n}_n\}\!\!\}$ an arbitrary multiset.
 The cardinality of $M$ is $|M|=\sum^n_{j=1} k_j$ and its carrier set is $\overline{M}=\{\alpha_1,\ldots,\alpha_n\}$.
For the sake of simplicity and without any loss of generality, 
we assume an implicit strict ordering $\alpha_1 < \ldots < \alpha_n$.
Now we define $\sigma_M$ such that
$\forall i\in [1..n],~\forall j \in \left [  \sum^{i-1}_{p=1} k_p + 1 .. \sum^{i-1}_{p=1} k_p + k_i  \right ],~\sigma_M(j)=\alpha_i$, and  everywhere else $\sigma_M$ is undefined.
  For example $\sigma_{M_0}$ is $\{ 1 \mapsto a, 2 \mapsto a, 3 \mapsto b, 4 \mapsto b, 5 \mapsto b, 6 \mapsto c \}$
 which we may also denote $\langle a,a,b,b,b,c \rangle$.

\begin{algorithm}
  \KwData{$M=\{\!\!\{ a^{k_1}_1, \ldots, a^{k_n}_n\}\!\!\}$}
  \KwResult{$\beta$ an element of $M$ taken with probability $\frac{M(\beta)}{|M|}$}
   \BlankLine
  $\rho := $ a uniform random integer taken in range $[1..n]$\; 
  \Return $\beta=\sigma_M(\rho)$\;
\caption{\label{algo:randmset:naive} naive random sampler for dynamic multisets.}
\end{algorithm}

The naive random sampler is described by Algorithm~\ref{algo:randmset:naive}.
For a multiset $M$, we first pick a uniform random integer in range $[1..|M|]$.
This way, we select the $\rho$-th position in the sequence $\sigma_M$ with probability $\frac{1}{|M|}$. 
Now let $i$ such that $\beta=\sigma_M(\rho)=\alpha_i\in M$.
 By definition of $\sigma_M$ we have
 $|\{j \mid \sigma_M(j)=\beta\}|=\sum^{i-1}_{p=1} k_p + k_i - \sum^{i-1}_{p=1} k_p -1 + 1 = k_i = M(\alpha_i)$.
 Thus the probability of picking element
 $\beta=\sigma_M(\rho)$ is $\frac{1}{|M|}\times M(\beta)=\frac{M(\beta)}{|M|}$ which is as required.

The complexity of the sampling algorithm corresponds to the uniform random
 sampling of an integer in range $|1..n|$ for a multiset of total weight $n$,
 plus a single access to the array $\sigma_M$ which is in general performed in constant time. 
 The space complexity is linear in $n$ since we must record $\sigma_M$ with its $|M|$ elements,
 which is not very good since the weight of a given element can be arbitrarily high. 
  Moreover, the update operation is not very efficient for the same reason.

\subsection{A more efficient random sampler based on partial sum trees}

We now describe a random sampler that has far better space requirements -- in the order of $|\overline{M}|$ --
and also enjoys a much more efficient update operation.
 The main idea is to exploit a representation based on \emph{partial sum trees}~\cite{pst:dehne:89}.

\begin{figure}
\begin{center}
\begin{tikzpicture}[node distance=40pt]
\node[rectangle,draw] (a) {$9 \mid a^8 \mid 17$};
\node[below of=a] (ha) {};
\node[rectangle,draw, left of =ha] (b) {$4 \mid b^4 \mid 1$}; 
\node[below of=b] (hb) {};
\node[rectangle,draw, left of =hb, node distance=20pt] (d) {$d^4$}; 
\node[rectangle,draw, right of =hb, node distance=20pt] (f) {$f^1$}; 

\node[rectangle,draw, right of =ha] (c) {$8 \mid c^9 \mid 0$};
\node[below of=c] (hc) {};
\node[rectangle,draw, left of =hc, node distance=20pt] (e) {$e^8$}; 

\draw (a) -- (b);
\draw (a) -- (c);
\draw (b) -- (d);
\draw (b) -- (f);
\draw (c) -- (e);

\end{tikzpicture}
\end{center}
\caption{\label{fig:partial:sum:tree}A partial sum tree for multiset $M_2=\{\!\!\{ a^8,b^4,c^9,d^4,f^1,e^8 \}\!\!\}$.}
\end{figure}
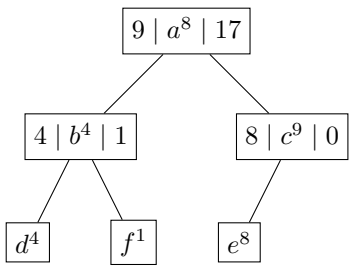

In Figure~\ref{fig:partial:sum:tree} we give a possible partial sum tree (PST) representation of the multiset $M_2=\{\!\!\{ a^8,b^4,c^9,d^4,f^1,e^8 \}\!\!\}$.
The idea is to represent a multiset $M$ as a binary tree with nodes labelled with three informations: 
the total weight of the left and right sub-trees as well as
a unique element $\alpha$ of $M$ together with its weight $M(\alpha)$.

\SetKwFunction{DispatchFun}{dispatch}
\SetKwBlock{Indent}{}{}
\begin{algorithm}[H]
\KwData{$T_M$ a PST for a multiset $M=\{\!\!\{ a^{k_1}_1, \ldots, a^{k_n}_n\}\!\!\}$}
\KwResult{$\beta$ an element of $M$ taken with probability $\frac{M(\beta)}{|M|}$}
\BlankLine
$\rho := $ a uniform random integer taken in range $[1..|M|]$\;  
\Return $\beta = $ \DispatchFun{$T_M$, $\rho$}\;
\BlankLine
where:
\BlankLine
\DispatchFun{$\left [ L \mid \alpha^k \mid R \right ]$, $\rho$} is {
\Indent{
  \uIf{$\rho \leq |L|$}{\Return \DispatchFun{$T_L$, $\rho$}}
  \uElseIf{$\rho  - |L| \leq k$}{\Return $\alpha$}
  \uElse{\Return \DispatchFun{$T_R$, $\rho - (|L| + k)$}}
}
}

\caption{\label{algo:randmset:pst} partial sum tree (PST) based random sampler for dynamic multisets.}
\end{algorithm}

The random sampler based on the PST representation is described by Algorithm~\ref{algo:randmset:pst}.
The first step is the same as in the naive algorithm: pick an integer $\rho$ uniformly at random in the range $[1..|M|]$.
The second part is a simple recursive dispatch within the tree $T_M$ depending only on the value of $\rho$.
If $\rho$ is less than the total weight $|L|$ of the left-sub-tree, denoted $T_L$, of $T_M$ then we pick the element in this left-sub-tree.
If otherwise $\rho$ is in the range $[|L|..|L|+k]$ then we pick-up the root element $\alpha$. Otherwise, we pick the
element in the right-sub-tree $T_R$ without forgetting to update $\rho$ as $\rho - (|L| + k)$ in the recursive calls.

\begin{proposition}
If we assume the tree $T_M$ to be well-balanced, the worst-case time complexity of the PST-based random sampler is $\BigO(\log |\overline{M}|)$.
The update operation inherits the same worst-case complexity.
 Moreover the PST itself occupies space of order $\Theta(|\overline{M}|)$ in memory.
\end{proposition}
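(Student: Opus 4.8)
The plan is to bound the cost of each operation by the height $h$ of the partial sum tree $T_M$, and then to invoke the well-balancedness hypothesis to conclude $h = \BigO(\log |\overline{M}|)$. First I would record the basic structural facts: $T_M$ is a binary tree carrying exactly one node per element of the carrier set $\overline{M}$, so it has $|\overline{M}|$ nodes, and each node stores a constant amount of data (the element $\alpha$, its weight $M(\alpha)$, and the partial sums $|L|$ and $|R|$ of its two subtrees). Since a well-balanced binary tree on $N$ nodes has height $\Theta(\log N)$, we obtain $h = \BigO(\log |\overline{M}|)$ at once; the space bound $\Theta(|\overline{M}|)$ follows immediately from counting $|\overline{M}|$ nodes of constant size.

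For the sampler (Algorithm~\ref{algo:randmset:pst}) I would argue that, after the initial draw of a uniform integer $\rho \in [1..|M|]$ -- a single random operation -- the recursive call $\mathsf{dispatch}$ performs a single root-to-node descent: at each visited node it compares $\rho$ against $|L|$ and $k$ in constant time and then either returns $\alpha$ or recurses into exactly one of the two children with an updated value of $\rho$. Hence the number of recursive calls is at most $h+1$, each costing $\BigO(1)$, so the total running time is $\BigO(h) = \BigO(\log |\overline{M}|)$. Only the cost matters here, since correctness of the single-path descent was already established in the discussion preceding the algorithm.

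For the update operation $\mathsf{update}(M,\alpha,k)$ I would split the work into three stages. First, locating the node carrying $\alpha$ (or the insertion point when $\alpha \notin \overline{M}$) is a search in a balanced binary search tree, costing $\BigO(h)$. Second, once the weight is changed from $M(\alpha)$ to $k$, the difference $k - M(\alpha)$ must be added to the relevant partial-sum field ($|L|$ or $|R|$) of every ancestor on the path back to the root; there are at most $h$ such ancestors, and each correction is $\BigO(1)$. Third, when the change inserts a new element or removes one (the case $k=0$), the tree must be rebalanced to preserve the height bound. The step I expect to be the main obstacle is verifying that the partial-sum invariants survive rebalancing: each rotation restructures only a constant number of nodes, and for those nodes the fields $|L|$ and $|R|$ can be recomputed in $\BigO(1)$ from the already-correct fields of their children, so the standard $\BigO(\log N)$ rotation budget of an AVL or red--black tree is preserved. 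Combining the three stages yields $\BigO(h) = \BigO(\log |\overline{M}|)$, so the update inherits the sampler's worst-case bound, which is exactly the claim.
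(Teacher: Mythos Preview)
Your proposal is correct and, in fact, considerably more detailed than what the paper offers: the paper does not give a formal proof of this proposition but only a couple of sentences of informal justification, noting that well-balancedness can be ensured via AVL or red--black trees (or a simple deterministic bit-flag scheme) and that the update operation ``just needs to update the left and right sums in the nodes from the updated node to the root of the tree (trivially also in order $\BigO(\log n)$ for a well-balanced tree).'' Your argument fleshes out exactly these points---the root-to-leaf descent for \textsf{dispatch}, the ancestor-path correction for \textsf{update}, and the observation that rotations touch only constantly many nodes whose partial sums can be recomputed in $\BigO(1)$---so it is essentially the same approach, only made explicit.
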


The well-balanced assumption is easy to obtain in practice, either by relying on implicitly well-balanced tree models e.g. AVL or red-black trees, or by
simply constructing the PST in a deterministically well-balanced way (e.g. with a bit flag in each node, flipped after each insertion). 
So if compared to the naive algorithm and its constant-time sampling, the PST algorithm is far better in terms of memory usage. The most prominent
advantage is a now very efficient update operation: we just need to update the left and right sums in the nodes from the updated node to the root of the tree
(trivially also in order $\BigO(\log n)$ for a well-balanced tree).

The proof for the correctness property is now slightly more involved.

\begin{proposition}
Let $M$ a multiset. The PST random sampler returns an element $\alpha \in M$ with probability $\frac{M(\alpha)}{|M|}$. 
\end{proposition}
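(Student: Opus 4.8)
The plan is to show that the recursive $\fun{dispatch}$ procedure partitions the range $[1..|M|]$ into consecutive blocks, one per element of $M$, with the block for an element having length equal to that element's weight; combined with the uniform choice of $\rho$ this immediately yields the claimed probability. First I would fix the structural invariant of a PST: in every node $[L \mid \alpha^k \mid R]$ the stored value $|L|$ (resp. $|R|$) equals the total weight of the left sub-tree $T_L$ (resp. right sub-tree $T_R$), so that $|M| = |L| + k + |R|$ and the carrier decomposes disjointly as $M = M_L \uplus \{\!\!\{\alpha^k\}\!\!\} \uplus M_R$, where $M_L$ and $M_R$ are the multisets carried by $T_L$ and $T_R$. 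This is precisely the consistency condition that the $\fun{update}$ operation maintains, so I may assume it holds throughout.

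Then I would prove, by structural induction on $T_M$, the key counting lemma: for every $\rho \in [1..|M|]$ the call $\fun{dispatch}(T_M,\rho)$ returns a well-defined element of $M$, and for each $\beta \in M$ the number of indices $\rho \in [1..|M|]$ with $\fun{dispatch}(T_M,\rho) = \beta$ equals $M(\beta)$. In the base case $T_M$ is a single leaf $\alpha^k$ with $|L| = |R| = 0$: the first guard $\rho \leq 0$ always fails and the second guard $\rho - 0 \leq k$ always holds for $\rho \in [1..k]$, so all $k = M(\alpha)$ indices return $\alpha$. For the inductive step I split $[1..|M|]$ into the three consecutive intervals $[1..|L|]$, $[|L|+1..|L|+k]$ and $[|L|+k+1..|L|+k+|R|]$ dictated by the three branches of the conditional. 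On the first interval the guard $\rho \leq |L|$ fires and the call recurses on $T_L$ with the same argument $\rho$, so the induction hypothesis applied to $M_L$ hits each $\gamma \in M_L$ exactly $M_L(\gamma)$ times. On the middle interval, of length $k$, the node element $\alpha$ is returned. On the last interval the call recurses on $T_R$ with the shifted argument $\rho - (|L|+k)$, and by the disjointness of the intervals each $\gamma \in M_R$ is hit exactly $M_R(\gamma)$ times. Summing the three contributions and using the additivity of weights under the disjoint decomposition $M = M_L \uplus \{\!\!\{\alpha^k\}\!\!\} \uplus M_R$ gives the total count $M(\beta)$ for every $\beta \in M$.

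Finally I would conclude: since $\rho$ is drawn uniformly in $[1..|M|]$, each of the $|M|$ indices carries probability $1/|M|$, and by the lemma the element $\alpha$ is returned on exactly $M(\alpha)$ of them, whence $\IP(\fun{dispatch}(T_M,\rho) = \alpha) = M(\alpha)/|M|$, as required. The only delicate point, and the one I would watch most carefully, is the index bookkeeping in the right branch: one must verify that the three intervals form a genuine partition of $[1..|M|]$ with neither gap nor overlap, and that the shifted argument $\rho - (|L|+k)$ handed to $T_R$ lands exactly in $[1..|R|]$, so that the induction hypothesis applies verbatim. Once the PST invariant $|M| = |L| + k + |R|$ is in place, everything else is routine arithmetic.
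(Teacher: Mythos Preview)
Your proposal is correct and follows essentially the same approach as the paper: a structural induction on the PST that partitions the range $[1..|M|]$ into the three consecutive intervals corresponding to the left sub-tree, the root element, and the right sub-tree. The only cosmetic difference is that the paper frames the induction directly in terms of probabilities (and relates the PST to the infix sequence $\sigma_M$ used for the naive sampler), whereas you first isolate a deterministic counting lemma and only apply the uniform law of $\rho$ at the very end; your explicit check that the shifted argument lands in $[1..|R|]$ is in fact more careful than the paper's own treatment.
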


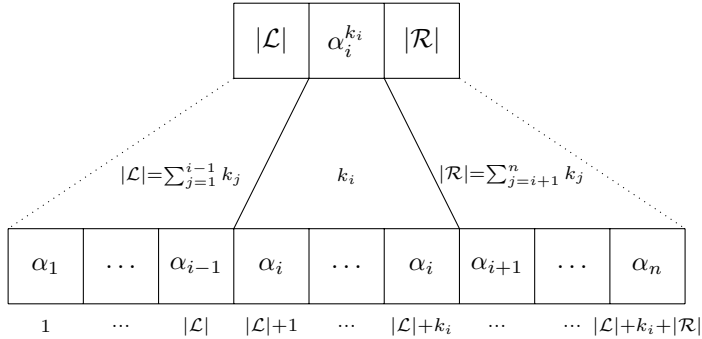
\begin{figure}
\noindent \begin{centering}
\begin{minipage}[t]{0.7\columnwidth}%
\begin{tikzpicture}
\draw (0,0) grid (3,1);
\draw (0.5,0.5) node {$|\mathcal{L}|$};
\draw (1.5,0.5) node {$\alpha^{k_i}_i$};
\draw (2.5,0.5) node {$|\mathcal{R}|$};
\draw (-2.5,-2.5) node {$\alpha_1$};
\draw (-1.5,-2.5) node {$\ldots$};
\draw (-0.5,-2.5) node {$\alpha_{i-1}$};
\draw (0.5,-2.5) node {$\alpha_i$};
\draw (1.5,-2.5) node {$\ldots$};
\draw (2.5,-2.5) node {$\alpha_i$};
\draw (3.5,-2.5) node {$\alpha_{i+1}$};
\draw (4.5,-2.5) node {$\ldots$};
\draw (5.5,-2.5) node {$\alpha_{n}$};
\draw[dotted] (0,0) -- (-3,-2);
\draw (1,0) -- (0,-2);
\draw (2,0) -- (3,-2);
\draw[dotted] (3,0) -- (6,-2);
\draw (-2.5,-3.3) node {${\scriptstyle 1}$};
\draw (-1.5,-3.3) node {${\scriptstyle \ldots}$};
\draw (-0.5,-3.3) node {${\scriptstyle |\mathcal{L}|}$};
\draw (0.5,-3.3) node {${\scriptstyle |\mathcal{L}|+1}$};
\draw (1.5,-3.3) node {${\scriptstyle \ldots}$};
\draw (2.5,-3.3) node {${\scriptstyle |\mathcal{L}|+k_i}$};
\draw (3.5,-3.3) node {${\scriptstyle \ldots}$};
\draw (4.5,-3.3) node {${\scriptstyle \ldots}$};
\draw (5.5,-3.3) node {${\scriptstyle |\mathcal{L}|+k_i+|\mathcal{R}|}$};
\draw (-0.7,-1.3) node {${\scriptstyle |\mathcal{L}|=\sum^{i-1}_{j=1}k_j}$}; 
\draw (1.5,-1.3) node {${\scriptstyle k_i}$};
\draw (3.7,-1.3) node {${\scriptstyle |\mathcal{R}|=\sum^{n}_{j=i+1}k_j}$};
\draw (-3,-3) grid (6,-2);
\end{tikzpicture}%
\end{minipage}
\par\end{centering}

\caption{\label{fig:sigma:mapping}Mapping from a node of the partial sum tree
to its corresponding infix sequence.}
\end{figure}

\begin{proof}
Let $M=\{\!\!\{\alpha_{1}^{k_{1}},\ldots,\alpha_{n}^{n}\}\!\!\}$
a multiset and let denote by $T_M$ the its partial sum
tree representation. If we assume an implicit ordering $\alpha_{1}<\ldots<\alpha_{i}<\ldots<\alpha_{n}$
then the infix sequence%
\footnote{The infix sequence of a binary tree is the sequence of its elements
ordered according to the infix traversal of the tree.}
of $M$ is exactly $\sigma_{M}$ as defined previously.
More generally, a given node $N=(L,\alpha_{i}^{k_{i}},R)$ of the representation corresponds
to a multiset $N$, and we denote $\sigma_{N}$
its infix sequence. The tree $T_L$ is the left sub-tree corresponding
to a multiset $L$ with infix sequence $\sigma_{L}$
of elements from $\sigma_{N}(1)$ to $\sigma_{N}(\sum_{j=1}^{i-1}k_{j})$.
The root node of $T_N$ represents the element $\alpha_{i}$ with cardinality $k_{i}$
which maps to the sub-sequence $\sigma_{\alpha_{i}}$ of elements
from $\sigma_{N}(\sum_{j=1}^{i-1}k_{j}+1)$ to $\sigma_{N}(\sum_{j=1}^{i}k_{j})$.
Finally $T_R$ is the right sub-tree corresponding to multiset $R$
with infix sequence $\sigma_{R}$ of elements from $\sigma_{N}(\sum_{j=1}^{i}k_{j}+1)$
to $\sigma_{N}(\sum_{j=1}^{n}k_{j})$. This mapping is depicted
on Figure~\ref{fig:sigma:mapping}. Thus the left sub-tree $T_L$ represents
the multiset $L=\{\!\!\{\alpha_{1}^{k_{1}},\ldots,\alpha_{i-1}^{k_{i-1}}\}\!\!\}$
and $T_R$ represents the multiset $R=\{\!\!\{\alpha_{i+1}^{k_{i+1}},\ldots,\alpha_{n}^{k_{n}}\}\!\!\}$.

Now we demonstrate the property that at node $N=(L,\alpha_{i}^{k_{i}},R)$
the call dispatch($T_N$,$\rho$) with $\rho$ taken randomly in $[1..|N|]$
yields an element $\alpha\in N$ with probability $\frac{N(\alpha)}{|N|}$.
We proceed by induction on the tree structure (or size). Suppose $\rho\in[1..|L|]$
then :

\begin{itemize}
\item if $T_N$ is a leaf $(\emptyset,\alpha^{k},\emptyset)$ then $\rho\in[1..k]$
with $k=N(\alpha)$ and $\sigma_{N}=[\alpha,\ldots,\alpha]$
(i.e. $ran(\sigma)=\left\{ \alpha\right\} )$ . Then the probability
of choosing $\alpha$ is $\frac{N(\alpha)}{|N|}=\frac{k}{k}=1$.
\item if $T_N=(L,\alpha_{i}^{k_{i}},R)$ is an internal node, i.e. $L\cup R\neq\emptyset$
then we show the property to hold for $T_N$ if we assume it holds for
the sub-trees $T_L$ and $T_R$ :

\begin{itemize}
\item if $1\leq\rho\leq|L|$ then dispatch($T_N$,$\rho$)=dispatch($T_L$,$\rho$)
and the property holds by hypothesis of induction.
\item $ $if $|L|+1\leq\rho\leq|L|+k_{i}$ then we select
element $\alpha_{i}$ with probability $\frac{k_{i}}{|L|+k_{i}+|R|}=\frac{N(\alpha_{i})}{|N|}$
\item if $|L|+k_{i}\leq\rho\leq|N|$ then dispatch($T_N$,$\rho$)=dispatch($T_R$,$\rho'$)
with $\rho'=\rho-(|L|+k_{i})$ and since $\rho'\in[1..|R|]$
the property holds by hypothesis of induction.\end{itemize}
\end{itemize}

In consequence, if $\rho\in[1..|M|]$ then dispatch($T_M$,$\rho$)
yields an element $\alpha$ with probability $\frac{M(\alpha)}{|M|}$.
\end{proof}

Note that the proof exploits the fact that the mapping from a multiset
$M$ to its representative sequence $\sigma_{M}$
is deterministic. While imposing a total order on the elements $\alpha_{i}$'s
provides a simpler solution -- the representative being the infix
- ordered sequence -- it is by no means a strict requirement. The
proof can easily - albeit uninterestingly - adapt to any permutation
in the selected order.

\label{lastpage}

\end{document}